\documentclass[onecolumn, a4paper, unpublished]{quantumarticle}
\pdfoutput=1

\usepackage[utf8]{inputenc}
\usepackage{amssymb, amsthm}
\usepackage[numbers]{natbib}
\usepackage{hyperref}
\usepackage{authblk}
\usepackage{graphicx}
\usepackage[labelformat=simple]{subcaption}
\usepackage{booktabs}
\usepackage{makecell}
\usepackage{multirow}
\usepackage{pifont}
\usepackage[export]{adjustbox}
\usepackage{tikz}
\usetikzlibrary{quantikz2}
\usetikzlibrary{external}
\usepackage{xcolor}

\newlength{\circuitrowsep}
\newlength{\circuitcolsep}
\theoremstyle{definition}
\newtheorem{definition}{Definition}[section]
\theoremstyle{plain}
\newtheorem{lemma}{Lemma}
\newtheorem{theorem}{Theorem}

\newtheorem{corollary}{Corollary}

\newcommand{\cctrl}[1]{\ctrl[vertical wire=c]{#1}}

\newcommand{\bs}[1]{\boldsymbol #1}
\newsavebox{\weakbox}
\savebox{\weakbox}{%
  \begin{tikzpicture}[baseline=-0.65ex]
    \node[inner sep=0pt] (d) {$\diamondsuit$};
    \draw[overlay, line width=0.4pt, shorten >=0.7pt, shorten <=0.7pt] (d.north) -- (d.south);
    \draw[overlay, line width=0.4pt, shorten >=1pt, shorten <=1pt] (d.west) -- (d.east);
  \end{tikzpicture}%
}
\newcommand{\weakdiamondsuit}{\usebox{\weakbox}}
\newsavebox{\partctrlbox}
\savebox{\partctrlbox}{%
  \begin{tikzpicture}[baseline=-0.65ex]%
    \draw (0,0) circle (0.25em);%
    \fill (0,0) circle (0.125em);%
  \end{tikzpicture}%
}
\newcommand{\partctrl}{\usebox{\partctrlbox}}
\newsavebox{\partdiamondsuitbox}
\savebox{\partdiamondsuitbox}{%
  \begin{tikzpicture}[baseline=-0.65ex]%
    \node[inner sep=0pt, outer sep=0pt] at (0,0) {$\diamondsuit$};%
    \node[inner sep=0pt, outer sep=0pt, scale=0.4] at (0,0) {$\blacklozenge$};%
  \end{tikzpicture}%
}
\newcommand{\partdiamondsuit}{\usebox{\partdiamondsuitbox}}

\tikzset{external/mode=graphics if exists}

\title{Asymptotically Optimal Quantum Circuits for Comparators and Incrementers}

\author[1]{Vivien Vandaele}
\affil[1]{Quantinuum, Terrington House, 13–15 Hills Road, Cambridge CB2 1NL, United Kingdom}

\date{}

\begin{document}

\maketitle

\begin{abstract}
    We present quantum circuits for comparison and increment operations that achieve an asymptotically optimal gate count of $\Theta(n)$ and depth of $\Theta(\log n)$ over the Clifford+Toffoli gate set, while using a provably minimal number of qubits.
    We extend these results to classical--quantum comparators, yielding an improved classical--quantum adder with an optimal qubit count.
    Given the ubiquity of these operations as algorithmic building blocks, our constructions translate directly into reduced circuit complexity for many quantum algorithms.
    As a notable example, they can be used to improve a space-efficient circuit for Shor's factoring algorithm, reducing circuit depth from $\mathcal{O}(n^3)$ to $\mathcal{O}(n^2 \log^2 n)$ without increasing either the qubit count or the asymptotic gate complexity.
    Underpinning these results is a general theorem demonstrating how to trade ancilla qubits for control qubits with low overhead in both depth and gate count, providing a broadly applicable tool for quantum circuit design.
\end{abstract}

\section{Introduction}
Quantum arithmetic operators are fundamental building blocks of many quantum algorithms.
Comparators and incrementers, in particular, appear in a wide variety of key applications, including quantum walk algorithms~\cite{Douglas_2009}, Grover-based minimum finding on unsorted lists~\cite{Durr_1999}, state-of-the-art approximate rotation synthesis algorithms~\cite{Hindlycke_2024}, and modular multiplication constructions for Shor's factoring algorithm~\cite{Shor_1994, Haner_2017}.
Consequently, improving the efficiency of these elementary arithmetic operators is crucial for reducing the cost of a broad range of quantum algorithms and bringing practical quantum computing closer to realization.

Three metrics primarily govern the cost of a quantum circuit: the gate count, the circuit depth, and the number of qubits.
Quantum circuit design often involves trade-offs among these metrics; for instance, one can typically reduce circuit depth at the expense of additional ancilla qubits.
Despite this, we demonstrate that it is possible to achieve simultaneous optimality in all three metrics when implementing a comparator or an incrementer.
Specifically, we present quantum circuits for both operators that match the known lower bounds on gate count and circuit depth while using a provably minimal number of qubits.
Moreover, our constructions use only classical reversible logic gates (Toffoli, CNOT, and NOT) and thus avoid the need for small-angle rotation gates, which typically incur significant rotation synthesis overhead in fault-tolerant quantum computing.

A comparison with prior work is provided in Table~\ref{tab:costs} for the various arithmetic operators considered in this paper.
Our main contributions are summarized below.

\begin{itemize}
\item \textbf{Promise gates.}
We introduce the notion of a \emph{promise gate}, a unitary whose action on a target register is guaranteed only when a designated promise register satisfies a given condition.
We demonstrate that promise gates provide a useful complementary framework for reasoning about conditionally clean ancilla qubits, introduced in~\cite{Nie_2024} and used in several breakthrough results~\cite{Nie_2024, Claudon_2024, Khattar_2025, Remaud_2025}.
In particular, we identify common cases in which implementing a promise gate reduces exactly to implementing a unitary with clean ancilla qubits at one's disposal, making a large body of existing efficient quantum circuit synthesis methods directly and systematically applicable.

\item \textbf{Add controls, save ancillae.}
Building on the formalism of promise gates, we prove a general theorem showing how to trade ancilla qubits for control qubits when implementing a unitary gate, with low overhead in both gate count and circuit depth.
This theorem applies beyond simply adding controls that can be traded for ancilla qubits.
In particular, when an operator decomposes into a multi-controlled unitary and a simpler one (as is the case for comparators and incrementers), the theorem can then be applied directly to the multi-controlled unitary, capturing its full benefit without having to add controls to the entire operator.

\item \textbf{Optimal incrementers.}
We present a quantum circuit for the $n$-bit increment operator with $\Theta(n)$ gates and $\Theta(\log n)$ depth, using a single dirty ancilla.
This matches the lower bounds in all three metrics simultaneously.

\item \textbf{Optimal comparators.}
We present quantum circuits for the $n$-bit quantum--quantum comparator and classical--quantum comparator (comparing a quantum register to a classical constant), with $\Theta(n)$ gates and $\Theta(\log n)$ depth, using no ancilla qubits and a single dirty ancilla, respectively.
Both constructions match the respective lower bounds in all three metrics.

\item \textbf{Improved classical--quantum adder.}
As a direct consequence of our optimal constructions for the increment and comparison operators, we obtain an improved circuit for adding a classical constant to a quantum register, with $\Theta(n \log n)$ gate count and $\Theta(\log^2 n)$ depth, using one dirty ancilla.
When plugged into the construction of Häner et al.~\cite{Haner_2017}, this yields an improved quantum circuit for qubit-efficient implementations of Shor's algorithm.
\end{itemize}

\paragraph{Outline.}
The paper is organized as follows.
Section~\ref{sec:preliminaries} introduces the preliminary notions used throughout the paper.
In Section~\ref{sec:promise_gates}, we introduce promise gates and show how they can be used to design efficient quantum circuits.
We then apply these ideas in Section~\ref{sec:comparator} to construct optimal quantum--quantum and classical--quantum comparator circuits, and in Section~\ref{sec:incrementer} to construct optimal incrementer circuits.
Finally, in Section~\ref{sec:classical_quantum_adder}, we use these results to derive an improved circuit for classical--quantum addition and apply it to the modular multiplication subroutine of Shor's algorithm.

\begin{table}[t]
    \centering
    \renewcommand{\arraystretch}{1.12}
    \resizebox{\textwidth}{!}{%
    \begin{tabular}{llcccc}
        \toprule
        Type & Reference & Ancillae & Depth & Gate Count & Optimal? \\
        \midrule
        \multirow{5}{*}{Incrementer}
        & Gidney~\cite{Gidney_incrementer} & 1 dirty & $\mathcal{O}(n)$ & $\mathcal{O}(n)$ & \ding{55} \\
        & Nie et al.~\cite{Nie_2024} & 1 clean & $\mathcal{O}(\log^2 n)$ & $\mathcal{O}(n)$ & \ding{55} \\
        & Khattar et al.~\cite{Khattar_2025} & $\log_2^* n$ clean & $\mathcal{O}(n)$ & $\mathcal{O}(n)$ & \ding{55} \\
        & Remaud et al.~\cite{Remaud_2025} & 1 dirty & $\mathcal{O}(\log^2 n)$ & $\mathcal{O}(n\log n)$ & \ding{55} \\
        & This paper & 1 dirty & $\mathcal{O}(\log n)$ & $\mathcal{O}(n)$ & \ding{51} \\
        \midrule
        \multirow{4}{*}{\shortstack[l]{Quantum--Quantum \\ Comparator}}
        & Cuccaro et al.~\cite{Cuccaro_2004} & 1 clean & $\mathcal{O}(n)$ & $\mathcal{O}(n)$ & \ding{55} \\
        & Takahashi et al.~\cite{Takahashi_2010} & 0 & $\mathcal{O}(n)$ & $\mathcal{O}(n)$ & \ding{55} \\
        & Remaud et al.~\cite{Remaud_2025} & 0 & $\mathcal{O}(\log^2 n)$ & $\mathcal{O}(n\log n)$ & \ding{55} \\
        & This paper & 0 & $\mathcal{O}(\log n)$ & $\mathcal{O}(n)$ & \ding{51} \\
        \midrule
        \multirow{3}{*}{\shortstack[l]{Classical--Quantum \\ Comparator}}
        & Gidney~\cite{Gidney_2018} & 2 clean & $\mathcal{O}(n\log n)$ & $\mathcal{O}(n\log n)$ & \ding{55} \\
        & Khattar et al.~\cite{Khattar_2025} & $\log_2^* n$ clean & $\mathcal{O}(n)$ & $\mathcal{O}(n)$ & \ding{55} \\
        & This paper & 1 dirty & $\mathcal{O}(\log n)$ & $\mathcal{O}(n)$ & \ding{51} \\
        \midrule
        \multirow{4}{*}{\shortstack[l]{Classical--Quantum \\ Adder}}
        & Häner et al.~\cite{Haner_2017} & 1 dirty & $\mathcal{O}(n)$ & $\mathcal{O}(n\log n)$ & \ding{55} \\
        & Remaud et al.~\cite{Remaud_2025} & 1 dirty & $\mathcal{O}(\log^3 n)$ & $\mathcal{O}(n\log^2 n)$ & \ding{55} \\
        & Gidney~\cite{Gidney_2025} & 3 clean & $\mathcal{O}(n)$ & $\mathcal{O}(n)$ & \ding{55} \\
        & This paper & 1 dirty & $\mathcal{O}(\log^2 n)$ & $\mathcal{O}(n\log n)$ & \ding{55} \\
        \bottomrule
    \end{tabular}}
    \caption{Comparison of prior work with our constructions in terms of ancilla requirements, circuit depth, and gate count.
    Only results using a sublinear number of ancilla qubits and classical reversible gates are reported.
    Results marked with \ding{51} achieve a provably minimal number of ancilla qubits and asymptotically optimal depth and gate count.}\label{tab:costs}
\end{table}

\section{Preliminaries}\label{sec:preliminaries}
\subsection{Multi-controlled \texorpdfstring{$X$}{X} gates and lower bounds}
The gates used in our constructions belong to a subset of the family of $k$-controlled $X$ gates, denoted $C^kX$ and defined as follows.
\begin{definition}
    For a non-negative integer $k$, the $C^kX$ gate acts as
    \begin{equation}
        C^kX \ket{\bs{x}, t} = \lvert \bs{x}, t \oplus \bigwedge_{i=0}^{k-1} x_i \rangle,
    \end{equation}
    where $\bs{x} \in \{0,1\}^k$ and $t \in \{0,1\}$.
\end{definition}

Specifically, the elementary building blocks of our quantum circuits are the $C^kX$ gates with $k \leq 2$:
\begin{itemize}
    \item $C^0X$ corresponds to the $X$ gate, also known as the NOT gate;
    \item $C^1X$ corresponds to the $CX$ gate, also known as the CNOT gate;
    \item $C^2X$ corresponds to the $CCX$ gate, also known as the Toffoli or CCNOT gate.
\end{itemize}
This gate set is a natural choice, as the $CCX$ gate is well known to be the $C^kX$ gate with the fewest controls that is universal for classical reversible computation~\cite{Toffoli_1980}.
While the constructions and optimality results in this paper target the $\{CCX, CX, X\}$ gate set, other gate sets not based on classical reversible logic are also of practical interest, as exemplified by QFT-based arithmetic~\cite{Draper_2000}.
A useful property of the $\{CCX, CX, X\}$ gate set is that any $C^kX$ gate can be efficiently implemented over it using a single dirty ancilla qubit, as stated in the following lemma, proved in~\cite{Nie_2024}.
\begin{lemma}[Nie et al.~\cite{Nie_2024}]\label{lem:optimal_ckx}
    The $C^kX$ gate can be implemented over the $\{CCX, CX, X\}$ gate set with a gate count of $\Theta(k)$ and a circuit depth of $\Theta(\log k)$, using one dirty ancilla qubit.
\end{lemma}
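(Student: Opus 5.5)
Here is the plan: prove the lower bounds separately, then build the upper bound in two layers. First, a $C^kX$ gate using many clean ancillae. Second, the conditionally clean ancilla trick of~\cite{Nie_2024}, which replaces all those ancillae by a single dirty one.

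\textbf{Lower bounds.} For the gate count: if some control $x_i$ were never touched by any gate, the output would not depend on it, contradicting the definition of $C^kX$. Hence the circuit touches all $k+1$ wires. Each gate touches at most three wires, so at least $\lceil (k+1)/3\rceil=\Omega(k)$ gates are needed. For the depth, use a light-cone argument. The target's final value depends on all $k$ controls. One layer of gates with fan-in at most $3$ can at most triple the set of input wires on which a given wire can depend. So depth $\Omega(\log_3 k)=\Omega(\log k)$.

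\textbf{Upper bound with clean ancillae.} Arrange the AND of $x_0,\dots,x_{k-1}$ as a balanced binary tree of Toffolis. Each internal node writes into a fresh clean ancilla, and the root is written into $t$ by a final Toffoli. Uncomputation mirrors the tree. This costs $O(k)$ gates and $O(\log k)$ depth, but needs about $k$ clean ancillae.

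\textbf{Removing the ancillae.} The idea, from~\cite{Nie_2024}, is to split the controls into two halves $A$ and $B$ and use the dirty ancilla $a$ as follows.
\begin{enumerate}
\item Toggle $a$ conditioned on $\bigwedge A$.
\item Apply a gate on $B\cup\{a\}$ targeting $t$.
\item Toggle $a$ again conditioned on $\bigwedge A$.
\item Repeat step 2, then clean up.
\end{enumerate}
This is the standard Barenco-style identity, and the XOR of the two applications yields $\bigwedge A\wedge\bigwedge B$.

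Recursing naively on this identity gives depth $O(\log^2 k)$, so it is not enough. To reach $O(\log k)$, exploit the fact that whenever $\bigwedge A=1$, all qubits of $A$ are in the known state $\lvert 1\rangle$. They can therefore be flipped by $X$ gates into conditionally clean $\lvert 0\rangle$ ancillae. The sub-circuit on $B$ is only needed when those ancillae are clean, so it can use them as clean workspace and implement the AND of $B$ by the log-depth tree above. About $k/2$ clean qubits suffice for a tree on $k/2$ controls, up to constant factors; choose the split sizes so this holds.

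The toggles of $a$ conditioned on $\bigwedge A$ must also be built in log depth. Here we recurse symmetrically: use the qubits of $B$, made conditionally clean on $\bigwedge B$, as workspace for $A$. We also need the identity to remain correct on the branch where the promise fails, so that the garbage produced there cancels between the two repetitions.

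\textbf{Main obstacle.} The hard step is showing that, on the branch where the promise fails, the clean-ancilla sub-circuit's action cancels exactly between the two repetitions. Equivalently, whatever garbage it produces must be a fixed permutation undone by its own inverse. My first attempt would be to structure each half as compute, apply, uncompute, so that it acts as $U$ followed by $U^{-1}$ on the workspace regardless of the promise. Under that structure only the target phase parity differs between branches, and the cost then satisfies $O(k)$ gates and $O(\log k)$ depth, since the two halves run in sequence a constant number of times.
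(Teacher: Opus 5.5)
\textbf{There is a genuine gap in the upper bound: the symmetric scheme gives wrong outputs, and the obvious repair loses the $O(\log k)$ depth.}

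\textbf{What works.} Your lower-bound arguments (wire counting and the light cone) are fine. The paper does not prove this lemma: it cites Fang et al.\ for the lower bounds and imports the upper bound from Nie et al. Your $B$-side half is also sound. Written as $V^\dagger T V$, with $T$ a Toffoli from the tree root and $a$ into $t$, it acts as $t\mapsto t\oplus a\,g(A,B)$, where $g=\bigwedge B$ whenever $\bigwedge A=1$. It is therefore an involution, and its two applications cancel apart from the toggle.

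\textbf{Where it fails.} Suppose the toggle you actually apply is $a\mapsto a\oplus h(A,B)$. Then your four steps flip $t$ by exactly $h\cdot g$, and this must equal $\bigwedge A\wedge\bigwedge B$ on \emph{every} input. Building the toggle with $B$ as workspace that is clean only when $\bigwedge B=1$ guarantees $h=\bigwedge A$ only on that branch. Likewise $g$ is guaranteed only when $\bigwedge A=1$. When both ANDs are $0$, $h$ and $g$ are outputs of trees run on dirty workspace, and their product need not vanish.

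Concretely, take $A=\{x_1,x_2,x_3\}$ and $B=\{y_1,y_2\}$, with one workspace qubit per side. Then $g=\bar x_1\oplus y_1y_2$ and $h=(\bar y_1\oplus x_1x_2)\,x_3$. The input $x_1=y_1=0$, $x_3=1$ gives $h=g=1$, so $t$ is flipped although the AND of the controls is $0$. Making each half compute--apply--uncompute takes care of cancellation between repetitions, but it does nothing about this cross term. Your ``main obstacle'' paragraph addresses the wrong failure mode.

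\textbf{The repair costs the complexity.} You could instead make the toggles exact by calling the whole construction recursively on $A$, with a qubit of $B$ as its dirty ancilla. Correctness is restored, but the recursive instance is invoked twice at half size. This gives $D(k)=2D(k/2)+\Theta(\log k)=\Theta(k)$ and $C(k)=2C(k/2)+\Theta(k)=\Theta(k\log k)$. Running ``a constant number of times'' per level compounds over $\log k$ levels. Your remark that naive recursion yields $O(\log^2 k)$ likewise silently assumes a single recursive call.

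\textbf{What is missing.} You need one of two things:
\begin{itemize}
\item an exact $\bigwedge A$ toggle, i.e.\ a $C^{k/2}X$ using $B\cup\{t\}$ only as dirty qubits, in $O(\log k)$ depth and $O(k)$ gates; or
\item a decomposition in which the exactly implemented part shrinks fast enough that invoking it twice still solves to $O(\log k)$.
\end{itemize}
That is precisely the nontrivial content of Nie et al.'s result. The paper does not supply it either: its own $\sqrt{n}$-block constructions in Theorems~\ref{thm:comparator} and~\ref{thm:incrementer} call Lemma~\ref{lem:optimal_ckx} as a black box.
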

Throughout this paper, a \emph{dirty ancilla qubit} refers to a qubit in an arbitrary unknown state that must be restored to that same state after use, whereas a \emph{clean ancilla qubit} is one initialized in the $\ket{0}$ state and returned to $\ket{0}$ after use.

Lemma~\ref{lem:optimal_ckx} is asymptotically optimal in both gate count and circuit depth and uses a minimal number of ancilla qubits.
The gate count and circuit depth lower bounds of $\Omega(k)$ and $\Omega(\log k)$, respectively, hold for any implementation of the $C^kX$ gate over any set of bounded-size gates, regardless of the number of ancilla qubits~\cite{Fang_2006}.
The ancilla lower bound follows from the fact that, for $k \geq 3$, the $C^kX$ gate induces an odd permutation on $\{0,1\}^{k+1}$, whereas all generators in $\{CCX, CX, X\}$ induce even permutations~\cite{Shende_2003}.

The same lower bounds extend to $k$-bit incrementers and classical--quantum comparators.
Indeed, a $C^kX$ gate can be constructed from a constant number of incrementers as follows:
\begin{equation}\label{eq:ckx_from_inc}
    \begin{quantikz}[row sep={\the\circuitrowsep,between origins}, column sep=\the\circuitcolsep, align equals at=1.5]
        & \qwbundle{k} & & \ctrl{1} & & \\
        & & & \targ{} & &
    \end{quantikz}
    \;=\;
    \begin{quantikz}[row sep={\the\circuitrowsep,between origins}, column sep=\the\circuitcolsep, align equals at=1.5]
        & \qwbundle{k} & & \gate[2]{+1} & \gate{-1} & & \\
        & & & & & &
    \end{quantikz}
\end{equation}
where the decrement gate can be obtained by conjugating the increment gate with $X$ gates.
Similarly, a $C^kX$ gate can be constructed from a single classical--quantum comparator with the constant $2^k - 1$:
\begin{equation}\label{eq:ckx_from_cmp}
    \begin{quantikz}[row sep={\the\circuitrowsep,between origins}, column sep=\the\circuitcolsep, align equals at=1.5]
        & \qwbundle{k} & & \ctrl{1} & & \\
        & & & \targ{} & &
    \end{quantikz}
    \;=\;
    \begin{quantikz}[row sep={\the\circuitrowsep,between origins}, column sep=\the\circuitcolsep, align equals at=1.5]
        & \qwbundle{k} & & \gate{\geq 2^k-1}\wire[d][1]{a} & & \\
        & & & \targ{} & &
    \end{quantikz}
\end{equation}
Therefore, if either of these operators could be implemented with no ancilla qubits, or with lower asymptotic gate count or depth complexities, then so could the $C^kX$ gate, contradicting the established lower bounds.
The quantum--quantum comparator, however, can be implemented without ancilla qubits.
Nevertheless, the gate count and circuit depth lower bounds of $\Omega(k)$ and $\Omega(\log k)$ still apply, since a quantum--quantum comparator can be used to construct a $k$-bit classical--quantum comparator using ancilla qubits, which in turn can be used to construct a $C^kX$ gate, as shown by the circuit identity in Equation~\eqref{eq:ckx_from_cmp}.

It is worth noting that the $\Omega(k)$ lower bound also applies specifically to the number of $CCX$ gates, not merely to the total gate count.
Indeed, Beverland et al.~\cite{Beverland_2020} showed that the $CCX$ count required to implement the $C^kX$ gate over the $\{CCX, CX, X\}$ gate set is $\Omega(k)$, regardless of the number of ancilla qubits.
Together with the reductions in Equations~\eqref{eq:ckx_from_inc} and~\eqref{eq:ckx_from_cmp}, this implies an $\Omega(k)$ lower bound on the $CCX$ count for any exact implementation of a $k$-bit incrementer or comparator.
Moreover, this lower bound on the $CCX$ count also holds for unitary circuits approximating the $C^kX$ gate, or $k$-bit incrementer or comparator operators~\cite{Gosset_2025}.
However, for $\epsilon$-approximate implementation by mixed circuits, the $C^kX$ gate requires only $\Theta(\log(1/\epsilon))$ $CCX$ gates, whereas the lower bound for incrementers and comparators remains $\Omega(k)$ $CCX$ gates~\cite{Gosset_2025}.

In this work, we present quantum circuits for incrementers and comparators that match these lower bounds.
Previously, these operators were considered more costly to implement than the $C^kX$ gate, notably because a constant number of them suffice to construct the $C^kX$ gate, as in Equations~\eqref{eq:ckx_from_inc} and~\eqref{eq:ckx_from_cmp}, whereas the converse does not hold.
Despite this asymmetry, we show that these operators can in fact be implemented with the same asymptotically optimal gate count and circuit depth, using a minimal number of ancilla qubits.

\subsection{Structural primitives}
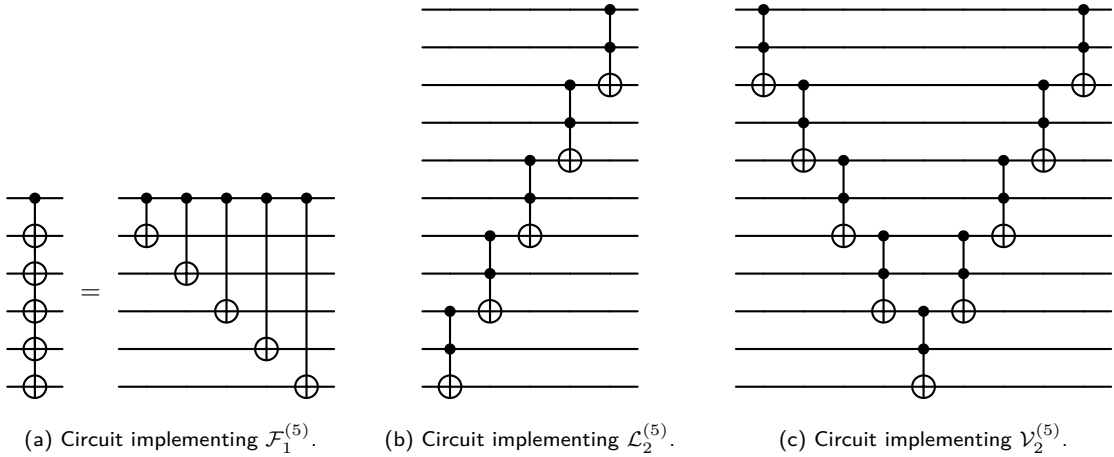
\begin{figure}[t]
    \begin{subfigure}[b]{0.31\textwidth}
        \centering
        \begin{quantikz}[row sep={.5cm,between origins}, column sep=0.2cm, align equals at=3.5]
            & \ctrl{5} & \\
            & \targ{} & \\
            & \targ{} & \\
            & \targ{} & \\
            & \targ{} & \\
            & \targ{} &
        \end{quantikz}
        =
        \begin{quantikz}[row sep={.5cm,between origins}, column sep=0.2cm, align equals at=3.5]
            & \ctrl{1} & \ctrl{2} & \ctrl{3} & \ctrl{4} & \ctrl{5} & \\
            & \targ{} & & & & & \\
            & & \targ{} & & & & \\
            & & & \targ{} & & & \\
            & & & & \targ{} & & \\
            & & & & & \targ{} &
        \end{quantikz}
        \caption{Circuit implementing $\mathcal{F}_1^{(5)}$.}\label{fig:fanout}
    \end{subfigure}
    \begin{subfigure}[b]{0.31\textwidth}
        \centering
        \begin{quantikz}[row sep={.5cm,between origins}, column sep=0.2cm]
            & & & & & \ctrl{1} & \\
            & & & & & \ctrl{1} & \\
            & & & & \ctrl{1} & \targ{} & \\
            & & & & \ctrl{1} & & \\
            & & & \ctrl{1} & \targ{} & & \\
            & & & \ctrl{1} & & & \\
            & & \ctrl{1} & \targ{} & & & \\
            & & \ctrl{1} & & & & \\
            & \ctrl{1} & \targ{} & & & & \\
            & \ctrl{1} & & & & & \\
            & \targ{} & & & & &
        \end{quantikz}
        \caption{Circuit implementing $\mathcal{L}_2^{(5)}$.}\label{fig:ladder}
    \end{subfigure}
    \begin{subfigure}[b]{0.37\textwidth}
        \centering
        \begin{quantikz}[row sep={.5cm,between origins}, column sep=0.2cm]
            & \ctrl{1} & & & & & & & & \ctrl{1} & \\
            & \ctrl{1} & & & & & & & & \ctrl{1} & \\
            & \targ{} & \ctrl{1} & & & & & & \ctrl{1} & \targ{} & \\
            & & \ctrl{1} & & & & & & \ctrl{1} & & \\
            & & \targ{} & \ctrl{1} & & & & \ctrl{1} & \targ{} & & \\
            & & & \ctrl{1} & & & & \ctrl{1} & & & \\
            & & & \targ{} & \ctrl{1} & & \ctrl{1} & \targ{} & & & \\
            & & & & \ctrl{1} & & \ctrl{1} & & & & \\
            & & & & \targ{} & \ctrl{1} & \targ{} & & & & \\
            & & & & & \ctrl{1} & & & & & \\
            & & & & & \targ{} & & & & &
        \end{quantikz}
        \caption{Circuit implementing $\mathcal{V}_2^{(5)}$.}\label{fig:v_ladder}
    \end{subfigure}
    \caption{Naive implementations of the $\mathcal{F}_1^{(5)}$, $\mathcal{L}_2^{(5)}$, and $\mathcal{V}_2^{(5)}$ operators.}\label{fig:l_v_examples}
\end{figure}

Many quantum arithmetic circuits rely on a few structural primitives that account for most of the gate count and depth complexity of the circuit.
Here, we give the definitions and costs of the primitives used throughout the paper.
Each of these operators admits a straightforward implementation with optimized gate count but suboptimal circuit depth.
The lemmas below show that both gate count and circuit depth can often be simultaneously optimized, in some cases at the expense of ancilla qubits.

The first primitive is the $k$-th order fan-out operator: a collection of parallel $C^kX$ gates all controlled by an additional single qubit.
\begin{definition} The $k$-th order fan-out operator $\mathcal{F}_k^{(n)}$ acts on $n(k+1)+1$ qubits as
    \begin{equation}
        \mathcal{F}_k^{(n)} \left(\ket{c} \otimes \bigotimes_{i=1}^{n} \ket{\bs{x}_i, t_i} \right) = \ket{c} \otimes \bigotimes_{i=1}^{n} \lvert \bs{x}_i, t_i \oplus c \bigwedge_{j=0}^{k-1} x_{i,j} \rangle,
    \end{equation}
    where $c \in \{0,1\}$, $\bs{x}_i \in \{0,1\}^{k}$, and $t_i \in \{0,1\}$.
\end{definition}
A quantum circuit representing the $\mathcal{F}_1^{(5)}$ operator is shown in Figure~\ref{fig:fanout}.
We mainly use the first-order and second-order fan-out operators, $\mathcal{F}_1^{(n)}$ and $\mathcal{F}_2^{(n)}$.
It is well known that these operators can be efficiently implemented, as established by the following lemma~\cite{Fang_2006,Broadbent_2009, Remaud_2025}.
\begin{lemma}\label{lem:fanout}
    The $\mathcal{F}_1^{(n)}$ and $\mathcal{F}_2^{(n)}$ operators can both be implemented with $\mathcal{O}(n)$ $\{CCX, CX\}$ gates and $\mathcal{O}(\log n)$ circuit depth.
\end{lemma}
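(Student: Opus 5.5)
The plan is to first build the plain \emph{zeroth-order} fan-out (XOR a single control $c$ into $m$ qubits) in logarithmic depth with no ancilla. I will then reduce $\mathcal{F}_1^{(n)}$ to it, and reduce $\mathcal{F}_2^{(n)}$ to $\mathcal{F}_1^{(n)}$, each time with a constant number of extra layers of $CCX$ gates. For $n=O(1)$ everything is trivial, so I assume $n\ge 2$.

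\textbf{Step 1 (plain fan-out).} Let $r_1,\dots,r_m$ be the qubits to be flipped by $c$. Let $S$ be the linear reversible map $r_i \mapsto \bigoplus_{j\ge i} r_j$ (suffix XOR). In the coordinates $y=S^{-1}r$, i.e. $y_i=r_i\oplus r_{i+1}$ for $i<m$ and $y_m=r_m$, flipping every $r_i$ leaves all $y_i$ with $i<m$ unchanged and flips only $y_m$. Hence the fan-out equals $S\,\cdot CX(c,r_m)\cdot S^{-1}$. Suffix XOR is an in-place parallel-prefix computation over the associative operation $\oplus$. The Brent--Kung up-sweep/down-sweep schedule implements it with $O(m)$ $CX$ gates in $O(\log m)$ depth: each round is a layer of disjoint CNOTs, because every round only XORs one partial sum into another position. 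The inverse $S^{-1}$ is the reverse circuit. I expect the main bookkeeping obstacle of the whole proof to be here: checking that the Brent--Kung schedule can be run in place with no ancilla, with each step a disjoint CNOT layer, so that the $O(\log m)$ depth holds.

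\textbf{Step 2 ($\mathcal{F}_1^{(n)}$: $t_i\mathrel{\oplus}= c\,x_i$).} I use the standard dirty-qubit identity. For any qubit $r$ distinct from $x,t$, the sequence $CCX(x,r,t)$, then $r\mathrel{\oplus}= c$, then $CCX(x,r,t)$, then $r\mathrel{\oplus}= c$, acts as $t\mathrel{\oplus}= x r \oplus x(r\oplus c)=xc$ and restores $r$. I split the indices into the even ones and the odd ones and treat the two classes in separate phases. In the phase for even $i$, I take $r_i=x_{i+1}$ (indices cyclic, with $x_1$ reused for $i=n$ when needed), which is an odd-indexed qubit and so is not itself a control of this phase. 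The $CCX(x_i,x_{i+1},t_i)$ gates then form two layers of disjoint gates. The two fan-outs of $c$ onto all the $r_i$ are done by Step 1. Because the $r_i$ are never controls $x_i$ within the same phase, no cross terms arise. The odd phase is symmetric. The total cost is $O(n)$ gates and $O(\log n)$ depth.

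\textbf{Step 3 ($\mathcal{F}_2^{(n)}$: $t_i\mathrel{\oplus}= c\,a_ib_i$).} I apply the same identity one level up. With $r_i$ a qubit distinct from $a_i,b_i,t_i$, the sequence is $CCX(a_i,r_i,t_i)$, then $r_i\mathrel{\oplus}= c\,b_i$, then $CCX(a_i,r_i,t_i)$, then undo $r_i\mathrel{\oplus}= c\,b_i$. This gives $t_i\mathrel{\oplus}= a_i r_i\oplus a_i(r_i\oplus cb_i)=c\,a_ib_i$. The middle step is exactly an instance of $\mathcal{F}_1$ with control $c$, per-target controls $b_i$ and targets $r_i$. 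Again I process even and odd $i$ in separate phases, taking $r_i=a_{i+1}$, a qubit from the other parity class. Then the targets $r_i$ of the inner $\mathcal{F}_1$ are disjoint from its controls $b_i$ and from the $a_i$ currently used as controls. The outer $CCX$ layers are two disjoint layers each. By Step 2 the total is $O(n)$ $\{CCX,CX\}$ gates and $O(\log n)$ depth, which proves Lemma~\ref{lem:fanout}.
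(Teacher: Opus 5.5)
The paper gives no proof of this lemma; it calls it well known and cites Fang et al., Broadbent--Kashefi and Remaud--Vandaele. Your proposal is a genuine self-contained proof, built from tools the paper uses elsewhere, and it is sound in substance.

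\textbf{Step~1.} This is an in-place log-depth CNOT prefix-XOR circuit, the same object as Lemma~\ref{lem:ladder_cnot}. The Brent--Kung schedule you flag as the main obstacle is unproblematic. In round $d$ the sources and targets lie in different residue classes modulo $2^{d+1}$, and each position is touched at most once, so every round is a layer of disjoint CNOTs.

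\textbf{Steps~2 and~3.} These are the dirty-ancilla identity of Figure~\ref{fig:toggle_detection}, with $U$ a $CX$ gate and one, respectively two, controls. The opposite parity class serves as the dirty ancillas, which is the same splitting trick as in the proof of Lemma~\ref{lem:parallel_control_gates}. The algebra $t\oplus xr\oplus x(r\oplus c)=t\oplus xc$ and its second-order analogue are right. The nesting in Step~3 is fine because the inner $\mathcal{F}_1$ is realized exactly on its own qubits.

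\textbf{Comparison with the citation.} Your route gives an explicit ancilla-free construction over $\{CCX,CX\}$ instead of a pointer to the literature. Step~1 alone also covers the plain fan-out that Figure~\ref{fig:fanout} labels $\mathcal{F}_1^{(5)}$ and that later proofs invoke through this lemma. So your argument handles both readings of the paper's somewhat inconsistent definition.

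\textbf{The step that fails as written: ``the odd phase is symmetric.''} For odd $n$ there are $(n+1)/2$ odd indices but only $(n-1)/2$ even ones. The cyclic choice $r_n=x_{n+1}=x_1$ therefore lands on a control of the same phase. Then $x_1$ is flipped by the fan-out between the two copies of $CCX(x_1,x_2,t_1)$, and $t_1$ receives $x_1c\oplus x_2c\oplus c$ instead of $x_1c$. The repair is immediate: during the odd phase every even-indexed block is idle. Take $r_n=t_{n-1}$ in Step~2 and, say, $r_n=b_{n-1}$ in Step~3, or any other idle qubit distinct from the remaining $r_i$.

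\textbf{The small-$n$ claim.} ``For $n=O(1)$ everything is trivial'' is false at $n=1$. $\mathcal{F}_2^{(1)}$ is a $C^3X$ on four qubits, which is an odd permutation, so no ancilla-free circuit over $\{CCX,CX,X\}$ exists for it. For $n\ge 2$, small cases do have constant cost, using another block as the dirty qubit as in Equation~\eqref{eq:cccx_to_ccx}. The lemma's statement glosses over this edge case too, so simply assume $n\ge 2$ explicitly.
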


Another recurring primitive, denoted $\mathcal{L}_k^{(n)}$, is the operator that can be implemented by a ladder of $n$ consecutive $C^kX$ gates.
\begin{definition}
    The $\mathcal{L}_k^{(n)}$ operator acts on $kn+1$ qubits as
    \begin{equation}
        \mathcal{L}_k^{(n)}\left( \bigotimes_{i=0}^{kn} \ket{x_i} \right) = \ket{x_0} \otimes \bigotimes_{i=1}^n \left( \bigotimes_{j=1}^{k-1} \ket{x_{k(i-1)+j}} \otimes \lvert x_{ki} \oplus \prod_{\ell=1}^{k} x_{ki-\ell} \rangle \right),
    \end{equation}
    where $\bs{x} \in \{0, 1\}^{kn+1}$.
\end{definition}
A naive implementation of $\mathcal{L}_2^{(5)}$ is shown in Figure~\ref{fig:ladder}.
The operator $\mathcal{L}_1^{(n)}$ can be implemented similarly but with a ladder of $CX$ gates instead of $CCX$ gates.
It was proved in~\cite{Remaud_2025} that $\mathcal{L}_1^{(n)}$ can be implemented with the same gate count and depth as the $\mathcal{F}_1^{(n)}$ operator, up to a constant factor.
\begin{lemma}\label{lem:ladder_cnot}
    The $\mathcal{L}_1^{(n)}$ operator can be implemented with $\mathcal{O}(n)$ $CX$ gates and $\mathcal{O}(\log n)$ circuit depth.
\end{lemma}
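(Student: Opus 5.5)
The operator $\mathcal{L}_1^{(n)}$ acts on $n+1$ qubits $x_0,\dots,x_n$ and maps $x_i \mapsto x_i \oplus x_{i-1}$ for every $i\ge 1$. The right-hand sides are the \emph{original} values, since the definition uses the input bits $x_{ki-\ell}$. Equivalently, the naive ladder applies $CX(x_{i-1}\to x_i)$ in the order $i=n,n-1,\dots,1$, so each control is read before it is modified. This is a linear map over $\mathbb{F}_2$: the matrix $I+S$, where $S$ is the subdiagonal shift.

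The plan is to rewrite it as a constant number of layers, each consisting of parallel CNOT fan-outs with disjoint supports, and then apply Lemma~\ref{lem:fanout} to each fan-out.

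\textbf{Key identity.} I would use prefix parities. Let $P$ denote the map $x_i \mapsto \bigoplus_{j\le i} x_j$, which is the upward ladder applied in the order $i=1,\dots,n$. Then $P^{-1}=I+S$, so $\mathcal{L}_1^{(n)} = P^{-1}$. It therefore suffices to implement $P$ (or its inverse) in $\mathcal{O}(n)$ gates and $\mathcal{O}(\log n)$ depth, i.e.\ a prefix-XOR computation done in place.

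The standard route would be a Brent--Kung style parallel prefix, but that gives $\mathcal{O}(n)$ gates with depth $\mathcal{O}(\log n)$ only when counting CNOT layers in which each qubit is controlled once. Its fan-out is bounded, so it is in fact directly $\mathcal{O}(\log n)$ depth with bounded-degree gates. An in-place Brent--Kung on $\mathbb{F}_2$ works as follows:
\begin{itemize}
\item An \emph{up-sweep}: for level $\ell=0,1,\dots$, apply CNOTs from $x_{j}$ to $x_{j+2^\ell}$ for all $j \equiv 2^\ell-1 \pmod{2^{\ell+1}}$.
\item A \emph{down-sweep}: the mirrored pattern, filling in the remaining prefixes.
\end{itemize}
Each level is a layer of disjoint CNOTs, there are $2\log n$ levels, and the total count is $\sum n/2^\ell = \mathcal{O}(n)$. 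Since this is linear, correctness reduces to verifying that the composed matrix equals the lower-triangular all-ones matrix, which is the classical Brent--Kung correctness argument with addition replaced by XOR. XOR is associative and the circuit is in place, so no ancillae are needed. Inverting the circuit by reversing the gate order gives $P^{-1}=\mathcal{L}_1^{(n)}$.

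\textbf{Main obstacle.} The main difficulty is bookkeeping the in-place down-sweep so that each position ends up holding exactly its own prefix and no stale partial sums. I would check this by induction on $n=2^m$: after the up-sweep, position $2^m-1$ relative to each block holds its block's sum. The down-sweep then adds block sums from the left into the right-half representatives.

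An alternative, in case this indexing gets messy, is recursion. Pair up the odd and even positions, apply one CNOT layer, recurse on the half-size odd subsequence, then apply one more CNOT layer to fix the even positions. This gives $T(n)=T(n/2)+\mathcal{O}(n)$ gates and $D(n)=D(n/2)+\mathcal{O}(1)$ depth, hence $\mathcal{O}(n)$ gates and $\mathcal{O}(\log n)$ depth. Finally, pad to a power of two, or handle $n$ not a power of two by ignoring the missing indices.
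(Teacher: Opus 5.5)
Your proof is correct and is essentially the construction the paper relies on. The paper does not prove this lemma itself but cites \cite{Remaud_2025}, and that recursion (one layer of gates at each end of the ladder, recursing on a half-length ladder in the middle, as recalled in Appendix~\ref{app:proof_ladder}) specializes for $k=1$ exactly to your odd/even recursion for the prefix-XOR map $P$; your in-place Brent--Kung up-sweep/down-sweep is that same recursion unrolled. The main difference is framing: you go through linearity ($\mathcal{L}_1^{(n)} = P^{-1}$), which is specific to $CX$ ladders, whereas the cited recursion acts directly on the gates and so also covers the non-linear $\mathcal{L}_2^{(n)}$ used later. When writing it up, prefer the odd/even recursion, because the down-sweep is not literally the mirror image of the up-sweep (a time-reversed up-sweep would simply undo it).
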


By using a linear number of ancilla qubits, the $\mathcal{L}_2^{(n)}$ operator can be implemented with the same asymptotic gate count and circuit depth complexities as the $\mathcal{L}_1^{(n)}$ operator, as stated by the following lemma.
\begin{lemma}\label{lem:ladder_2_n_ancilla}
    The $\mathcal{L}_2^{(n)}$ operator can be implemented with $\mathcal{O}(n)$ $CCX$ gates and $\mathcal{O}(\log n)$ circuit depth, using $n$ ancilla qubits.
\end{lemma}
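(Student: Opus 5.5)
The plan is to view the ladder as an affine prefix computation and implement it with a Brent--Kung style parallel-prefix network, as in carry-lookahead adders. Write $a_i = x_{2i-1}$ and $b_i = x_{2i}$ for $1 \le i \le n$, and $b_0 = x_0$. The naive circuit in Figure~\ref{fig:ladder} applies the gates for $i = 1, \dots, n$ in order. Each gate uses the already-updated target of the previous one, so the output on qubit $2i$ is
\[
y_i = b_i \oplus a_i y_{i-1}, \qquad y_0 = b_0 .
\]
Unrolling gives $y_i = \bigoplus_{j \le i} \bigl(\prod_{\ell=j+1}^{i} a_\ell\bigr) b_j$. Each step is the affine map $y \mapsto b \oplus a y$, represented by the pair $(a,b)$. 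These pairs compose associatively:
\[
(a_2,b_2)\circ(a_1,b_1) = (a_1 a_2,\; b_2 \oplus a_2 b_1).
\]
So every $y_i$ is a prefix composition, and a standard $\mathcal{O}(\log n)$-depth, $\mathcal{O}(n)$-size prefix network computes all of them.

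I would implement the network reversibly as follows.

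\emph{Up-sweep.} For levels $k = 1, \dots, \lceil \log_2 n\rceil$, combine adjacent dyadic blocks. The $b$-component of a right block is updated in place on its last target qubit by a single $CCX$: the target is that qubit, and the controls are the $a$-product of the right block and the last target qubit of the left block. The $a$-product of the merged block is written into a fresh clean ancilla, again by one $CCX$ whose controls are the two sub-block products. Leaves use the input qubits $a_i$ directly as their products. There are fewer than $n$ internal nodes, so at most $n$ ancillas are needed.

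\emph{Down-sweep.} Going back from the top level, apply the second Brent--Kung phase. A target qubit at the right end of a left sub-block that does not yet hold its prefix receives $x \mathrel{\oplus}= (\text{block product}) \cdot (\text{prefix qubit to its left})$, one $CCX$ per update. These updates only use $a$-products already stored, so no new ancillas are needed.

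\emph{Clean-up.} Finally, run the up-sweep ancilla computations in reverse to return all product ancillas to $\ket{0}$. The $a_i$ are never modified, so this uncomputation is valid.

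Both sweeps have $\mathcal{O}(\log n)$ levels and $\mathcal{O}(n)$ gates in total, since level $k$ touches $\mathcal{O}(n/2^k)$ nodes. This gives the claimed counts.

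The step I expect to need the most care is showing that each level is a single layer of \emph{disjoint} gates. That requires two things. First, within one level no qubit is used by two gates, including as a shared control. In Brent--Kung each node product and each block-end target is used by exactly one combination per level, and I would verify this explicitly by index bookkeeping on block boundaries. Second, the in-place target updates must happen in an order that reproduces the sequential ladder semantics, i.e.\ each control qubit must already hold the intended partial composition when it is used. This is checked by induction on levels using the composition law above. If a shared-control conflict does arise, I would resolve it with $\mathcal{F}_2$ fan-outs from Lemma~\ref{lem:fanout}, which preserve the $\mathcal{O}(n)$ gate count and $\mathcal{O}(\log n)$ depth.
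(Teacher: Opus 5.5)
Your proof is correct. It produces essentially the same circuit as the paper, but you derive it differently.

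The paper does not build a prefix network from scratch. It starts from the recursive ancilla-free construction of~\cite{Remaud_2025}: a layer of $CCX$ gates at each end and a ladder of $C^3X$ gates in the middle, which is then recursed on. Before recursing, it turns each middle $C^3X$ into a $CCX$ by computing the product of its two $a$-controls into a fresh ancilla and uncomputing it afterwards (Equation~\eqref{eq:ckx_to_toffoli_ancilla}).

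One step of that recursion is exactly one level of your up-sweep and down-sweep, and the fresh ancillas are exactly your block products. The only structural difference is timing: the paper uncomputes each level's products right after the recursive call returns, whereas you defer all uncomputation to a final reversed pass.

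What each route buys:
\begin{itemize}
\item The paper's route inherits explicit constants from the cited result: roughly $3n$ Toffolis, at most twice that construction's depth, and at most $n$ ancillas. It also exposes the local compute--uncompute pattern that Corollary~\ref{cor:promise_ladder} reuses.
\item Your route is self-contained and justified directly by associativity of the affine composition. Since your clean-up runs in reverse order and reads only the unmodified $a_i$, it likewise restores the ancilla register for any initial state.
\end{itemize}

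Two small corrections.
\begin{itemize}
\item \emph{Direction of the ladder.} By the paper's definition and Figure~\ref{fig:ladder}, the gates of $\mathcal{L}_2^{(n)}$ are applied bottom-up, so each gate sees original values. Hence $\mathcal{L}_2^{(n)}$ maps $x_{2i}\mapsto x_{2i}\oplus x_{2i-1}x_{2i-2}$ on the inputs. Your recurrence $y_i=b_i\oplus a_i y_{i-1}$ describes ${\mathcal{L}_2^{(n)}}^\dagger$ instead. Since $CCX$ is self-inverse, reversing your circuit fixes this at no cost; the paper's appendix also draws the ladder top-down, so this is harmless.
\item \emph{Disjointness within a level.} In the up-sweep, the right block's $a$-product is a control both of the in-place $b$-update and of the $CCX$ that forms the merged block's product. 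So ``used by exactly one combination per level'' is not literally true. Splitting each level into two layers resolves this, and no fan-outs are needed.
\end{itemize}
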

The proof of Lemma~\ref{lem:ladder_2_n_ancilla} is provided in Appendix~\ref{app:proof_ladder}.
The implementation of the $\mathcal{L}_2^{(n)}$ operator with $\mathcal{O}(n)$ gates, $\mathcal{O}(\log n)$ depth, and $\mathcal{O}(n)$ ancilla qubits was used in prior work on quantum addition circuits~\cite{Draper_2006, Takahashi_2010}.
The connection between the construction used in these works and the ancilla-free implementation of $\mathcal{L}_2^{(n)}$ was established in~\cite{remaud2025quantumaddersstructurallink}.

The following corollary follows from Lemmas~\ref{lem:ladder_2_n_ancilla} and~\ref{lem:optimal_ckx}:
\begin{corollary}\label{cor:ckx_ladder}
    The $\mathcal{L}_k^{(n)}$ operator can be implemented with $\mathcal{O}(kn)$ $\{CCX$, $CX$, $X\}$ gates and $\mathcal{O}(\log(kn))$ circuit depth, using $n$ ancilla qubits.
\end{corollary}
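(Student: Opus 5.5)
The plan is to reduce the $\mathcal{L}_k^{(n)}$ ladder to an $\mathcal{L}_2^{(n)}$ ladder. Each gate of the ladder has $k$ controls. The last of these, $x_{k(i-1)}$, is the target of the previous gate. The other $k-1$ controls, $x_{k(i-1)+1},\dots,x_{ki-1}$, are never targeted by any gate of the ladder. So the $i$-th gate acts as
\[
x_{ki} \mathrel{\oplus}= x_{k(i-1)} \cdot a_i, \qquad a_i := \prod_{j=1}^{k-1} x_{k(i-1)+j},
\]
where each $a_i$ depends only on qubits that stay fixed throughout the ladder. I will use the $n$ ancilla qubits to hold these $a_i$: compute them, run an $\mathcal{L}_2^{(n)}$ ladder with them as the second controls, then uncompute them.

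\textbf{Step 1 (compute the $a_i$).} If the ancillae are clean, apply in parallel, for each $i$, a $C^{k-1}X$ gate with controls $x_{k(i-1)+1},\dots,x_{ki-1}$ and target ancilla $i$. These gates act on disjoint qubits. By Lemma~\ref{lem:optimal_ckx}, each costs $\mathcal{O}(k)$ gates and $\mathcal{O}(\log k)$ depth, but needs one dirty ancilla. I expect this to be the main obstacle, because the $n$ ancillae are already used as targets. I will try two ways around it.
\begin{itemize}
\item Borrow as dirty ancilla some ladder wire not touched by that particular gate, for instance $x_0$ or a target wire $x_{ki'}$ of a different block. To keep the gates parallel, I need the $n$ borrowed wires to be distinct, so I will pair gate $i$ with a distinct target wire (say $x_{k(i+1)}$ cyclically, handling $n=1$ separately).
\item Alternatively, for $k\ge 2$, implement $C^{k-1}X$ with a plain balanced tree of Toffolis into the target ancilla, using compute/uncompute on the remaining control wires. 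This stays within $\mathcal{O}(k)$ gates and $\mathcal{O}(\log k)$ depth.
\end{itemize}
For the edge cases, $k=1$ is Lemma~\ref{lem:ladder_cnot} directly, and $k=2$ is Lemma~\ref{lem:ladder_2_n_ancilla}.

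\textbf{Step 2 (run the $\mathcal{L}_2^{(n)}$ ladder).} Consider the wires ordered $x_0, a_1, x_k, a_2, x_{2k}, \dots, a_n, x_{nk}$. On them, the ladder of gates $x_{ki} \mathrel{\oplus}= x_{k(i-1)}\, a_i$ is exactly $\mathcal{L}_2^{(n)}$. Its guarantee from Lemma~\ref{lem:ladder_2_n_ancilla} needs $n$ further ancillae. To avoid requiring more than $n$ in total, I would use the $(k-1)n$ untouched $x$-wires as dirty ancillae where $k\ge 3$, which presumes the lemma's construction tolerates dirty ancillae (the standard toggling trick). Otherwise, I accept the count stated there and only claim $\mathcal{O}(n)$ ancillae. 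This step costs $\mathcal{O}(n)$ gates and $\mathcal{O}(\log n)$ depth.

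\textbf{Step 3 (uncompute).} Repeat Step 1, which restores the ancillae. If the ancillae are dirty rather than clean, I will use the standard trick of applying the whole compute–ladder–uncompute block twice so that the unknown initial ancilla values cancel; this works because the ladder is linear in each $a_i$ given the other values. The totals are $\mathcal{O}(kn)$ gates and depth $\mathcal{O}(\log k + \log n) = \mathcal{O}(\log(kn))$, as claimed.
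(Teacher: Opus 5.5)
Your reduction is sound. The $k-1$ non-chained controls of each gate are never targeted, so $\mathcal{L}_k^{(n)}$ equals: compute the $a_i$, apply $\mathcal{L}_2^{(n)}$ on $(x_0,a_1,x_k,\dots,a_n,x_{nk})$, uncompute. This gives $\mathcal{O}(kn)$ gates and $\mathcal{O}(\log(kn))$ depth, and borrowing idle wires as dirty qubits for the $C^{k-1}X$ gates is fine.

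\textbf{The gap is the ancilla count, which is part of the statement.} You spend $n$ clean ancillae on $a_1,\dots,a_n$. The construction behind Lemma~\ref{lem:ladder_2_n_ancilla} then adds one ancilla per gate of each successive middle ladder. That is about $n/2+n/4+\cdots=n-\mathcal{O}(\log n)$ more, so you use roughly $2n$ in total.

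Your rescue does not work. That construction needs its ancillae clean, because they act as second controls of an inner ladder. Toggling does not cancel for ladders. Run $t_j \mathrel{\oplus}= t_{j-1}c_j$ once with $c=d\oplus a$ and then once with $c=d$. For $j\ge 2$ this produces $t_j\oplus t_{j-1}a_j\oplus t_{j-2}\,d_j(d_{j-1}\oplus a_{j-1})$, not $t_j\oplus t_{j-1}a_j$, since the second pass sees targets already modified by the first.

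The same computation refutes the justification in your Step~3: the final map being affine in each $a_i$ does not make two passes cancel. The statement does not ask for dirty ancillae anyway.

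Falling back to $\mathcal{O}(n)$ ancillae proves a weaker result, and the constant matters later. In Lemma~\ref{lem:promise_v_sqrt_ancillae}, a ladder of about $\sqrt n$ gates must run using only $\lceil\sqrt n\rceil$ qubits of the promise register as workspace.

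\textbf{The missing idea is that you never need all the $a_i$ materialized.} The paper applies one round of the recursive construction from the proof of Lemma~\ref{lem:ladder_2_n_ancilla} directly to the $C^kX$ ladder.
\begin{itemize}
\item The original gates are applied as bare $C^kX$ gates on the original wires, in two outer layers: odd-indexed gates first, even-indexed gates last. These use Lemma~\ref{lem:optimal_ckx} with a borrowed idle qubit and need no ancilla.
\item The middle is a new ladder of about $n/2$ gates. Each one chains two consecutive blocks, with the product of their $2(k-1)$ fixed controls as its second control.
\item Only these middle gates get an ancilla, computed and uncomputed by $C^{2k-2}X$ gates.
\item The resulting $\mathcal{L}_2$ ladder of about $n/2$ gates is handled by Lemma~\ref{lem:ladder_2_n_ancilla} with about $n/2$ further ancillae.
\end{itemize}
The total equals that lemma's count, which is at most $n$.
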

The proof of Corollary~\ref{cor:ckx_ladder} is provided in Appendix~\ref{app:proof_ckx_ladder}.

Finally, the last structural primitive used in this work, denoted $\mathcal{V}_k^{(n)}$, is defined in terms of $\mathcal{L}_k^{(n)}$ and its adjoint.
\begin{definition}
    The $\mathcal{V}_k^{(n)}$ operator, acting on $kn+1$ qubits, is defined by
    \begin{equation}
        \mathcal{V}_k^{(n)} = \mathcal{L}_k^{(n)} \left( {\mathcal{L}_k^{(n-1)}}^\dagger \otimes I \right).
    \end{equation}
\end{definition}
A naive construction of $\mathcal{V}_k^{(n)}$ is obtained by concatenating a ladder of $C^kX$ gates implementing ${\mathcal{L}_k^{(n-1)}}^\dagger$ with a ladder of $C^kX$ gates implementing $\mathcal{L}_k^{(n)}$, yielding a V-shaped circuit of $C^kX$ gates.
An example is shown in Figure~\ref{fig:v_ladder} for $\mathcal{V}_2^{(5)}$.
While the $\mathcal{V}_k^{(n)}$ operator and its implementation cost have not been explicitly studied in prior work, we show that it plays a central role in the implementation of the comparison operator.

\section{Promise gates}\label{sec:promise_gates}
We introduce the notion of a \emph{promise gate}, a term inspired by the concept of a \emph{promise problem} in computational complexity theory.\footnote{In computational complexity theory, a \emph{promise problem} is a generalization of a decision problem in which the input is promised to belong to a subset of all possible inputs. Any algorithm solving the promise problem is required to answer correctly only on promised inputs; the algorithm's behavior on inputs that violate the promise is left unspecified~\cite{Goldreich_2006}.}
We first define promise gates in Subsection~\ref{sub:promise_gate_def}.
Then, in Subsection~\ref{sub:add_controls_save_ancillae}, we present a theorem demonstrating how promise gates can be used to efficiently trade clean ancilla qubits for control qubits when implementing a unitary gate.
We present a simple and concrete application of this theorem to controlled addition in Subsection~\ref{sub:controlled_addition}.

\subsection{Definition}\label{sub:promise_gate_def}
A promise gate is a unitary acting on two designated registers: a \emph{promise register} and a \emph{target register}.
It is only required to implement a target unitary $U$ on the target register when the promise register satisfies a specified condition; its behavior when the condition is violated is left unspecified.
We distinguish two variants: a \emph{weak} promise gate may act arbitrarily on the promise register when the promise is violated, whereas a \emph{strong} promise gate must preserve it for all inputs.
In this work, we focus on the case in which the promise condition is that the promise register is in the $\ket{0}$ state, as formalized in the following definitions.
\begin{definition}\label{def:weak_promise_gate}
    A \emph{weak promise gate} with target unitary $U$ is a unitary $\tilde{U}$ acting on a $k$-qubit promise register and an $n$-qubit target register such that
    $$
        \tilde{U} (\ket{0}^{\otimes k} \otimes \ket{\phi}) = \ket{0}^{\otimes k} \otimes U\ket{\phi}
    $$
    for every $n$-qubit state $\ket{\phi}$.
    The action of $\tilde{U}$ on inputs whose promise register is not in the state $\ket{0}^{\otimes k}$ is unconstrained.
\end{definition}

\begin{definition}\label{def:strong_promise_gate}
    A \emph{strong promise gate} with target unitary $U$ is a unitary $\hat{U}$ acting on a $k$-qubit promise register and an $n$-qubit target register of the form\footnote{A unitary of this form is known as a \emph{quantum multiplexer} (QMUX)~\cite{Shende_2006}. A strong promise gate is thus a QMUX in which only $U_{\bs{0}}$ is specified.}
    $$
        \hat{U} = \sum_{\bs{j} \in \{0,1\}^k} \ket{\bs{j}}\bra{\bs{j}} \otimes U_{\bs{j}},
    $$
    where each $U_{\bs{j}}$ is a unitary acting on the target register with $U_{\bs{0}} = U$.
    The unitaries $U_{\bs{j}}$ for $\bs{j} \neq \bs{0}$ are unconstrained.
\end{definition}

Any strong promise gate with target unitary $U$ is also a weak promise gate with target unitary $U$: it satisfies the same guarantee on the promised subspace, but additionally preserves the promise register for all inputs.
In contrast, a weak promise gate may act nontrivially on the promise register when the promise is violated.

We denote weak and strong promise gates by the following circuit notations, respectively:
\begin{align}
    \label{eq:weak_promise_gate}
    \begin{quantikz}[row sep={\the\circuitrowsep,between origins}, column sep=\the\circuitcolsep, align equals at=1.5]
        & \qwbundle{k} & \push{\weakdiamondsuit}\wire[d][1]{a} & \\
        & \qwbundle{n} & \gate{U} &
    \end{quantikz}
    \;&:=\;
    \begin{quantikz}[row sep={\the\circuitrowsep,between origins}, column sep=\the\circuitcolsep, align equals at=1.5]
        & \qwbundle{k} & \gate[2]{\tilde{U}} & \\
        & \qwbundle{n} & &
    \end{quantikz}
    \qquad\qquad
    \text{(weak promise gate)}
    \\
    \label{eq:strong_promise_gate}
    \begin{quantikz}[row sep={\the\circuitrowsep,between origins}, column sep=\the\circuitcolsep, align equals at=1.5]
        & \qwbundle{k} & \push{\diamondsuit}\wire[d][1]{a} & \\
        & \qwbundle{n} & \gate{U} &
    \end{quantikz}
    \;&:=\;
    \begin{quantikz}[row sep={\the\circuitrowsep,between origins}, column sep=\the\circuitcolsep, align equals at=1.5]
        & \qwbundle{k} & \gate[2]{\hat{U}} & \\
        & \qwbundle{n} & &
    \end{quantikz}
    \qquad\qquad
    \text{(strong promise gate)}
\end{align}
where $\weakdiamondsuit$ and $\diamondsuit$ mark the promise register.
When the promise register is in the $\ket{0}^{\otimes k}$ state, the promise gate acts as $U$ on the target register:
\begin{equation}\label{eq:promise_gate_action}
    \begin{quantikz}[row sep={\the\circuitrowsep,between origins}, column sep=\the\circuitcolsep, align equals at=1.5]
        \lstick{$\ket{0}$} & \qwbundle{k} & \push{\weakdiamondsuit}\wire[d][1]{a} & \rstick{$\ket{0}$} \\
        & \qwbundle{n} & \gate{U} &
    \end{quantikz}
    \;=\;
    \begin{quantikz}[row sep={\the\circuitrowsep,between origins}, column sep=\the\circuitcolsep, align equals at=1.5]
        \lstick{$\ket{0}$} & \qwbundle{k} & \push{\diamondsuit}\wire[d][1]{a} & \rstick{$\ket{0}$} \\
        & \qwbundle{n} & \gate{U} &
    \end{quantikz}
    \;=\;
    \begin{quantikz}[row sep={\the\circuitrowsep,between origins}, column sep=\the\circuitcolsep, align equals at=1.5]
        \lstick{$\ket{0}$} & \qwbundle{k} & \qw & \rstick{$\ket{0}$}\\
        & \qwbundle{n} & \gate{U} &
    \end{quantikz}
\end{equation}

The left-hand side of~\eqref{eq:weak_promise_gate} and~\eqref{eq:strong_promise_gate} uniquely identifies the promise and target registers together with the target unitary $U$, but does not uniquely determine the promise gate.
Indeed, distinct unitaries may satisfy Definition~\ref{def:weak_promise_gate} or~\ref{def:strong_promise_gate} while differing in their action outside the promised subspace.
To ensure that circuit identities remain well defined, we only use equalities whose validity is independent of how the promise gate acts outside the promised subspace.
For instance:
\begin{equation}
    \begin{quantikz}[row sep={\the\circuitrowsep,between origins}, column sep=\the\circuitcolsep, align equals at=1.5]
        \lstick{$\ket{0}$} & \qwbundle{k} & \push{\diamondsuit}\wire[d][1]{a} & & \rstick{$\ket{0}$} \\
        & \qwbundle{n} & \gate{U} & \gate{U} & 
    \end{quantikz}
    \;=\;
    \begin{quantikz}[row sep={\the\circuitrowsep,between origins}, column sep=\the\circuitcolsep, align equals at=1.5]
        \lstick{$\ket{0}$} & \qwbundle{k} & \gate[2]{\hat{U}} & & \rstick{$\ket{0}$} \\
        & \qwbundle{n} & & \gate{U} & 
    \end{quantikz}
    \;=\;
    \begin{quantikz}[row sep={\the\circuitrowsep,between origins}, column sep=\the\circuitcolsep, align equals at=1.5]
        \lstick{$\ket{0}$} & \qwbundle{k} & & \rstick{$\ket{0}$}\\
        & \qwbundle{n} & &
    \end{quantikz}
\end{equation}
whenever $U^2 = I$, regardless of which promise gate $\hat{U}$ with target unitary $U$ is chosen.
Moreover, we adopt the convention that, within any given circuit, all promise gates sharing the same target unitary $U$ and the same register sizes refer to the same underlying unitary $\tilde{U}$ or $\hat{U}$, and all promise gates sharing the same target unitary $U^\dag$ and the same register sizes refer to the adjoint $\tilde{U}^\dag$ or $\hat{U}^\dag$, so that $\tilde{U}\tilde{U}^\dag = I$ or $\hat{U}\hat{U}^\dag = I$.
This convention enables circuit simplifications such as:
\begin{equation}\label{eq:promise_gate_inverse}
    \begin{quantikz}[row sep={\the\circuitrowsep,between origins}, column sep=\the\circuitcolsep, align equals at=1.5]
        & \qwbundle{k} & \push{\diamondsuit}\wire[d][1]{a} & \push{\diamondsuit}\wire[d][1]{a} & \\
        & \qwbundle{n} & \gate{U} & \gate{U^\dag} &
    \end{quantikz}
    \;=\;
    \begin{quantikz}[row sep={\the\circuitrowsep,between origins}, column sep=\the\circuitcolsep, align equals at=1.5]
        & \qwbundle{k} & \gate[2]{\hat{U}} & \gate[2]{\hat{U}^\dag} & \\
        & \qwbundle{n} & & &
    \end{quantikz}
    \;=\;
    \begin{quantikz}[row sep={\the\circuitrowsep,between origins}, column sep=\the\circuitcolsep, align equals at=1.5]
        & \qwbundle{k} & \qw & \\
        & \qwbundle{n} & \qw &
    \end{quantikz}
\end{equation}
This convention is particularly natural for the $\{CCX, CX, X\}$ gate set used in this paper, since each generator is self-adjoint.
Consequently, constructing a promise gate $\tilde{U}$ or $\hat{U}$ for a target unitary $U$ over this gate set automatically yields a construction for $\tilde{U}^\dag$ or $\hat{U}^\dag$, with the same gate count and circuit depth, simply by reversing the circuit.

Note that a controlled promise gate is not the same as a promise gate whose target unitary is a controlled gate.
A \emph{controlled promise gate} ($C\tilde{U}$ or $C\hat{U}$) is obtained by adding a control qubit to a promise gate.
When the control qubit is in the state $\ket{0}$, the gate acts as the identity on all registers; when it is in the state $\ket{1}$, the promise gate is applied.
For a strong promise gate, this preserves the promise register regardless of the state of the control qubit; for a weak promise gate, the promise register may be modified when the control qubit is in the $\ket{1}$ state and the promise is violated.
On the other hand, a promise gate ($\widetilde{CU}$ or $\widehat{CU}$) whose target unitary is the controlled-$U$ gate $CU$ applies a controlled-$U$ operation to the target register when the promise is satisfied, but its action on the target register when the promise is violated is left unspecified, even when the control qubit is in the state $\ket{0}$.
In this work, we use the following quantum circuit notation to represent controlled promise gates:
\begin{equation}\label{eq:controlled_promise_gate}
    \begin{quantikz}[row sep={\the\circuitrowsep,between origins}, column sep=\the\circuitcolsep, align equals at=2]
        & \qwbundle{c} & \ctrl{0}\wire[d][1]{a} & \\
        & \qwbundle{k} & \push{\diamondsuit}\wire[d][1]{a} & \\
        & \qwbundle{n} & \gate{U} &
    \end{quantikz}
    \;:=\;
    \begin{quantikz}[row sep={\the\circuitrowsep,between origins}, column sep=\the\circuitcolsep, align equals at=2, execute at end picture={
            \draw [line width=0.8pt] (topcontrol.south) -- (ccxbox.north -| topcontrol.south);
        }]
        & \qwbundle{c} & & |[alias=topcontrol, fill, circle, minimum size=4pt, inner sep=0pt]|{} & & \\
        & \qwbundle{k} & & \push{\diamondsuit}\wire[d][1]{a}\gategroup[2, steps=1, style={name=ccxbox, inner sep=1pt}]{} & & \\
        & \qwbundle{n} & & \gate{U} & &
    \end{quantikz}
    \;=\;
    \begin{quantikz}[row sep={\the\circuitrowsep,between origins}, column sep=\the\circuitcolsep, align equals at=2]
        & \qwbundle{c} & \ctrl{1} & \\
        & \qwbundle{k} & \gate[2]{\hat{U}} & \\
        & \qwbundle{n} & &
    \end{quantikz}
\end{equation}
and similarly with $\weakdiamondsuit$ and $\tilde{U}$ for the weak variant.

The notion of a promise gate is related to the concept of conditionally clean ancilla qubits, introduced in~\cite{Nie_2024} and used in several breakthrough results~\cite{Nie_2024, Claudon_2024, Khattar_2025, Remaud_2025}.
A conditionally clean ancilla is a qubit that is guaranteed to be in a known classical state whenever a certain predicate on other qubits holds.
Conditionally clean ancilla qubits characterize the state of workspace qubits: they assert that a qubit is clean under a given condition.
A promise gate captures a complementary behavioral abstraction: it prescribes what a gate is required to do on a subspace, leaving its action on the orthogonal complement unspecified (for weak promise gates) or constrained to preserve the promise register (for strong promise gates).
Conditionally clean ancilla qubits can be viewed as the promise register of a promise gate: the predicate guaranteeing their cleanliness is precisely the condition under which the promise is satisfied.

This shift in perspective offers a key advantage: for weak promise gates, designing a quantum circuit $\mathcal{C}_U$ that implements a unitary $U$ using $m$ clean ancilla qubits is exactly the same problem as designing a quantum circuit $\mathcal{C}_{\tilde{U}}$ that implements a weak promise gate $\tilde{U}$ with target unitary $U$ and a promise register of size~$m$.
To see this, note that if $\mathcal{C}_{\tilde{U}}$ implements $U$ whenever the promise register is in the state $\ket{0}$, then it also implements $U$ when the promise register is replaced by clean ancilla qubits.
Conversely, if $\mathcal{C}_U$ implements $U$ with clean ancilla qubits, then $\mathcal{C}_U$ implements a weak promise gate $\tilde{U}$ when the clean ancilla register is reinterpreted as a promise register.
This equivalence can simplify the design of some quantum circuits that exploit conditionally clean ancilla qubits, as constructing a circuit for a weak promise gate $\tilde{U}$ reduces to the familiar task of constructing a circuit for $U$ with some clean ancilla qubits, a setting for which a large body of techniques already exists.
This is precisely what makes promise gates powerful in practice: having access to clean ancilla qubits is generally very beneficial for reducing circuit depth or gate count, so the additional flexibility afforded by the promise register can yield implementations of $\tilde{U}$ that are significantly cheaper than an implementation of $U$ without ancilla qubits.

For strong promise gates, the equivalence also holds, with the additional condition that the circuit $\mathcal{C}_U$ preserves the ancilla register for all inputs.
This additional condition is naturally satisfied by many practical constructions.
In particular, circuits over $\{CCX, CX, X\}$ that use clean ancilla qubits typically employ them in a compute-uncompute pattern.
In such circuits, the restoration of the ancilla register does not depend on the ancilla qubits being initialized to $\ket{0}$: the uncomputation reverses the computation for every computational basis state of the ancilla register, so the register is preserved regardless of its initial state.
For example, this is the case for the circuit construction for the $\mathcal{L}_2^{(n)}$ operator in Equation~\eqref{eq:l2_first_iteration}.

Moreover, promise gates compose naturally: if two promise gates $\tilde{U}$ and $\tilde{V}$ share the same promise register, their sequential composition $\tilde{V}\tilde{U}$ is itself a promise gate.
More generally, the formalism makes it straightforward to reason modularly about large circuits that use promise gates, since each gate's correctness obligation is confined to the promised subspace.

Promise gates are one of the main ingredients for achieving the results presented in this paper.
In the next subsection, we present a theorem that uses promise gates to trade clean ancilla qubits for controls when implementing a unitary.
We then apply this theorem throughout the paper to construct asymptotically optimal quantum circuits for common arithmetic operators.

\subsection{Add controls, save ancillae}\label{sub:add_controls_save_ancillae}
To make effective use of promise gates, one must address the case in which the promise is not satisfied.
Apart from trivial cases, such as when the promise register consists of clean ancilla qubits, we identify two main techniques for harnessing the power of promise gates.
The circuit identities underlying these techniques are presented in Figures~\ref{fig:trading_controls} and~\ref{fig:controlled_conjugation_promise}.

\begin{figure}[t]
    \begin{subfigure}[b]{\textwidth}
        \centering
        \begin{quantikz}[row sep={\the\circuitrowsep,between origins}, column sep=\the\circuitcolsep, align equals at=2]
            & \qwbundle{k} & \ctrl{1} & \\
            & \qwbundle{n} & \gate{U} &
        \end{quantikz}
        =
        \begin{quantikz}[row sep={\the\circuitrowsep,between origins}, column sep=\the\circuitcolsep, align equals at=2]
            & \qwbundle{k} & \ctrl{2} & & \ctrl{2} & \\
            & \qwbundle{n} & & \gate{U} & & \\
            \lstick{\ket{0}} & & \targ{} & \ctrl{-1} & \targ{} & \rstick{\ket{0}}
        \end{quantikz}
        =
        \begin{quantikz}[row sep={\the\circuitrowsep,between origins}, column sep=\the\circuitcolsep, align equals at=2]
            & \qwbundle{k} & \ctrl{2} & & \ctrl{2} & & \\
            & \qwbundle{n} & & \gate{U} & & \gate{U} & \\
            \lstick{\ket{\psi}} & & \targ{} & \ctrl{-1} & \targ{} & \ctrl{-1} & \rstick{\ket{\psi}}
        \end{quantikz}
        \caption{A $C^kU$ gate can be decomposed into two $C^kX$ gates and one controlled-$U$ gate by using a clean ancilla.
        Whenever $U^2 = I$, the clean ancilla can be replaced by a dirty ancilla at the cost of an additional controlled-$U$ gate.}\label{fig:toggle_detection}
    \end{subfigure}

    \vspace{1em}

    \begin{subfigure}[b]{\textwidth}
        \centering
        \begin{quantikz}[row sep={\the\circuitrowsep,between origins}, column sep=.2cm, align equals at=2]
            & \qwbundle{k} & \ctrl{1} & \\
            & \qwbundle{n} & \gate{U} &
        \end{quantikz}
        =\!\!
        \begin{quantikz}[row sep={\the\circuitrowsep,between origins}, column sep=.2cm, align equals at=2]
            & \qwbundle{k} & \ctrl{2} & \gate{X} & \push{\weakdiamondsuit}\wire[d][1]{a} & \gate{X} & \ctrl{2} & \\
            & \qwbundle{n} & & & \gate{U} & & & \\
            \lstick{\ket{0}} & & \targ{} & & \ctrl{-1} & & \targ{} & \rstick{\ket{0}}
        \end{quantikz}
        \!\!=\!\!
        \begin{quantikz}[row sep={\the\circuitrowsep,between origins}, column sep=.2cm, align equals at=2]
            & \qwbundle{k} & \ctrl{2} & \gate{X} & \push{\diamondsuit}\wire[d][1]{a} & \gate{X} & \ctrl{2} & \gate{X} & \push{\diamondsuit}\wire[d][1]{a} & \gate{X} & \\
            & \qwbundle{n} & & & \gate{U} & & & & \gate{U^\dag} & & \\
            \lstick{\ket{\psi}} & & \targ{} & & \ctrl{-1} & & \targ{} & & \ctrl{-1} & & \rstick{\ket{\psi}}
        \end{quantikz}
        \caption{The controlled-$U$ gate of~\subref{fig:toggle_detection} can be replaced by a controlled promise gate with target unitary $U$, where the top $k$-qubit register acts as the promise register.
        As in~\subref{fig:toggle_detection}, replacing the clean ancilla with a dirty ancilla requires $U^2 = I$.
        In this case, strong promise gates must be used instead of weak promise gates to preserve the state of the $k$-qubit register.
        Note that the two strong promise gates associated with $U$ and $U^\dag$ on the right-hand side satisfy $\hat{U}\hat{U}^\dag = I$, following the convention adopted in Section~\ref{sub:promise_gate_def}.
        }\label{fig:conditionally_clean_ancillae}
    \end{subfigure}
    \caption{Decomposing a multi-controlled $U$ gate to take advantage of promise gates.}\label{fig:trading_controls}
\end{figure}
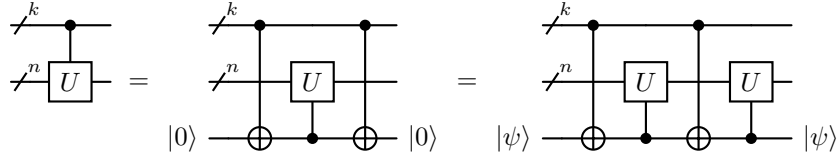
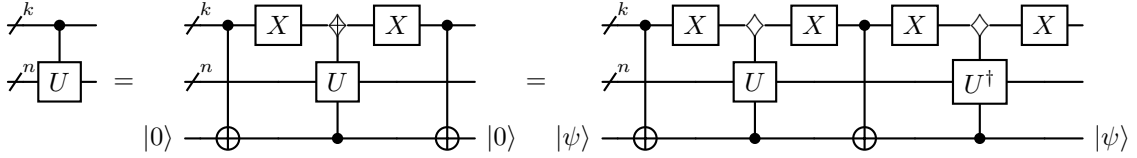

The first technique, well known from prior work on conditionally clean ancilla qubits, arises when the promise is satisfied if and only if the control qubit of a controlled promise gate is in the state $\ket{1}$.
The first equality in Figure~\ref{fig:toggle_detection} shows that a $C^kU$ gate can be implemented by introducing an ancilla qubit that detects whether the $k$-qubit control register is in the all-ones state.
The controlled-$U$ gate then acts nontrivially only when the top $k$-qubit register is in the $\ket{1}^{\otimes k}$ state, which the $X$ gates map to $\ket{0}^{\otimes k}$.
This means that the top register can serve as clean ancilla qubits for implementing $U$ whenever the control qubit is in the $\ket{1}$ state, and can therefore be used as the promise register of a controlled weak promise gate with target unitary $U$, as shown in the first equality of Figure~\ref{fig:conditionally_clean_ancillae}.

Moreover, whenever $U^2=I$, the second equality in Figure~\ref{fig:toggle_detection} allows the clean ancilla to be replaced by a dirty ancilla.
This leads to the second equality in Figure~\ref{fig:conditionally_clean_ancillae}, where strong promise gates are required instead of weak promise gates in order to preserve the state of the top $k$-qubit register.

Another effective use of promise gates arises when a promise gate is canceled by its adjoint whenever the promise is not satisfied.
Consider the well-known circuit identity for controlled conjugations shown in Figure~\ref{fig:controlled_conjugation}: when adding controls to $V^\dag UV$, only $U$ needs to be controlled, not $V$ and $V^\dag$.
Combining this identity with the first equality in Figure~\ref{fig:conditionally_clean_ancillae} yields the circuit identity in Figure~\ref{fig:promise_controlled_conjugation}.
Whenever the promise is not satisfied, the controlled promise gate with target unitary $U$ acts as the identity, and the weak promise gates $\tilde{V}$ and $\tilde{V}^\dag$ associated with $V$ and $V^\dag$ compose to the identity: $\tilde{V}^\dag \tilde{V} = I$.

Figure~\ref{fig:promise_controlled_conjugation} is therefore a valuable generalization of the controlled conjugation circuit identity of Figure~\ref{fig:controlled_conjugation}: not only are the controls on $V$ and $V^\dag$ removed, but $V$ and $V^\dag$ are replaced by weak promise gates $\tilde{V}$ and $\tilde{V}^\dag$ whose promise register can be used as if its qubits were clean ancilla qubits.
This can lead to substantially cheaper implementations of $\tilde{V}$ and $\tilde{V}^\dag$ compared with implementing $V$ and $V^\dag$ without ancilla qubits.
The clean ancilla in Figure~\ref{fig:promise_controlled_conjugation} can also be replaced by a dirty ancilla by using the second equality in Figure~\ref{fig:conditionally_clean_ancillae} instead of the first, yielding the circuit identity in Figure~\ref{fig:promise_controlled_conjugation_dirty}, which requires $U^2=I$.

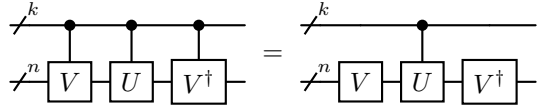
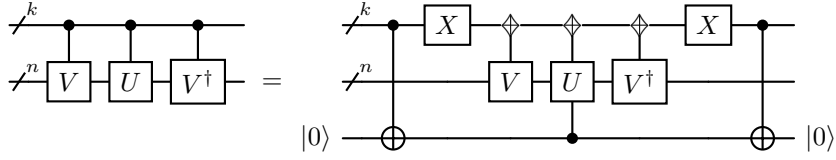
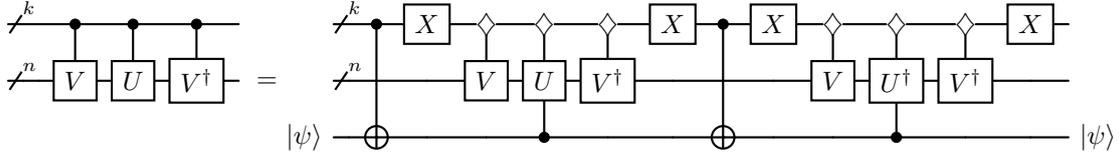
\begin{figure}[t]
    \begin{subfigure}[b]{\textwidth}
        \centering
        \begin{quantikz}[row sep={\the\circuitrowsep,between origins}, column sep=\the\circuitcolsep, align equals at=1.5]
            & \qwbundle{k} & \ctrl{1} & \ctrl{1} & \ctrl{1} & \\
            & \qwbundle{n} & \gate{V} & \gate{U} & \gate{V^\dag} &
        \end{quantikz}
        =
        \begin{quantikz}[row sep={\the\circuitrowsep,between origins}, column sep=\the\circuitcolsep, align equals at=1.5]
            & \qwbundle{k} & & \ctrl{1} & & \\
            & \qwbundle{n} & \gate{V} & \gate{U} & \gate{V^\dag} &
        \end{quantikz}
        \caption{Controlled conjugation: when adding controls to $V^\dag U V$, only the middle gate $U$ needs to be controlled.}\label{fig:controlled_conjugation}
    \end{subfigure}

    \vspace{1em}

    \begin{subfigure}[b]{\textwidth}
        \centering
        \begin{quantikz}[row sep={\the\circuitrowsep,between origins}, column sep=\the\circuitcolsep, align equals at=2]
            & \qwbundle{k} & \ctrl{1} & \ctrl{1} & \ctrl{1} & \\
            & \qwbundle{n} & \gate{V} & \gate{U} & \gate{V^\dag} &
        \end{quantikz}
        =
        \begin{quantikz}[row sep={\the\circuitrowsep,between origins}, column sep=\the\circuitcolsep, align equals at=2]
            & \qwbundle{k} & \ctrl{2} & \gate{X} & \push{\weakdiamondsuit}\wire[d][1]{a} & \push{\weakdiamondsuit}\wire[d][1]{a} & \push{\weakdiamondsuit}\wire[d][1]{a} & \gate{X} & \ctrl{2} & \\
            & \qwbundle{n} & & & \gate{V} & \gate{U} & \gate{V^\dag} & & & \\
            \lstick{\ket{0}} & & \targ{} & & & \ctrl{-1} & & & \targ{} & \rstick{\ket{0}}
        \end{quantikz}
        \caption{Combining~\subref{fig:controlled_conjugation} with Figure~\ref{fig:conditionally_clean_ancillae}: the unitaries $V$ and $V^\dag$ need not be controlled and are replaced by weak promise gates with target unitaries $V$ and $V^\dag$, respectively, using the top $k$-qubit register as the promise register.}\label{fig:promise_controlled_conjugation}
    \end{subfigure}

    \vspace{1em}

    \begin{subfigure}[b]{\textwidth}
        \centering
        \begin{quantikz}[row sep={\the\circuitrowsep,between origins}, column sep=.2cm, align equals at=2]
            & \qwbundle{k} & & \ctrl{1} & \ctrl{1} & \ctrl{1} & \\
            & \qwbundle{n} & & \gate{V} & \gate{U} & \gate{V^\dag} &
        \end{quantikz}
        =\!\!
        \begin{quantikz}[row sep={\the\circuitrowsep,between origins}, column sep=.2cm, align equals at=2]
            & \qwbundle{k} & \ctrl{2} & \gate{X} & \push{\diamondsuit}\wire[d][1]{a} & \push{\diamondsuit}\wire[d][1]{a} & \push{\diamondsuit}\wire[d][1]{a} & \gate{X} & \ctrl{2} & \gate{X} & \push{\diamondsuit}\wire[d][1]{a} & \push{\diamondsuit}\wire[d][1]{a} & \push{\diamondsuit}\wire[d][1]{a} & \gate{X} & \\
            & \qwbundle{n} & & & \gate{V} & \gate{U} & \gate{V^\dag} & & & & \gate{V} & \gate{U^\dag} & \gate{V^\dag} & &\\
            \lstick{\ket{\psi}} & & \targ{} & & & \ctrl{-1} & & & \targ{} & & & \ctrl{-1} & & & \rstick{\ket{\psi}}
        \end{quantikz}
        \caption{Dirty-ancilla variant of~\subref{fig:promise_controlled_conjugation}, obtained by combining~\subref{fig:controlled_conjugation} with the right-hand side of Figure~\ref{fig:conditionally_clean_ancillae}. Requires $U^2 = I$.}\label{fig:promise_controlled_conjugation_dirty}
    \end{subfigure}
    \caption{Using controlled conjugation simplification with promise gates.}\label{fig:controlled_conjugation_promise}
\end{figure}

We now state the following theorem, which shows how to trade clean ancilla qubits for control qubits with little overhead in gate count and circuit depth.
\begin{theorem}\label{thm:ancilla_control_trade}
    Let $W = V^{\dagger} U V$ be an $n$-qubit unitary, where $V$ and $U$ are each implementable over a gate set $\mathcal{G}$ whose elements are all involutory (i.e., $G^2 = I$ for every $G \in \mathcal{G}$), using $m$~clean ancilla qubits with gate counts $c_V$, $c_U$ and circuit depths $d_V$, $d_U$, respectively.
    Then the $k$-controlled gate~$C^{k}W$ can be implemented over the gate set
    \[
        \mathcal{G} \;\cup\; C(\mathcal{G}) \;\cup\; \{CCX, CX, X\},
    \]
    where $C(\mathcal{G})$ denotes the set of singly controlled versions of gates in~$\mathcal{G}$, with gate count~$\mathcal{O}(c_V + c_U + d_U n + k)$, circuit depth~$\mathcal{O}(d_V + d_U \log n + \log k)$, and using $\max(1, m - k + 1)$~clean ancilla qubits.

    Moreover, if $U^2 = I$ and the implementations of $V$ and $U$ preserve the ancilla register for all inputs (so that they yield implementations of strong promise gates), then one clean ancilla can be replaced by one dirty ancilla with the same asymptotic gate count and circuit depth complexities.
\end{theorem}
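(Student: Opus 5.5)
The plan is to instantiate the circuit identity of Figure~\ref{fig:promise_controlled_conjugation} and account for each piece. We take $k$ control qubits, one clean ancilla $a$, and, when $m>k$, a further $m-k$ clean ancillae. The circuit is:
\begin{enumerate}
\item Compute $a \leftarrow a \oplus \bigwedge_i c_i$ with a $C^kX$ gate.
\item Apply $X$ on every control qubit.
\item Apply the weak promise gates $\tilde V$, $C_a\tilde U$, $\tilde V^\dagger$. Their promise register consists of the $k$ (negated) control qubits together with the extra $m-k$ clean ancillae.
\item Undo the $X$ gates and the $C^kX$.
\end{enumerate}
Correctness is exactly the argument given after Figure~\ref{fig:controlled_conjugation_promise}. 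If all controls are $1$, the promise register is $\ket{0}^{\otimes \max(k,m)}$, so $\tilde V, \tilde U, \tilde V^\dagger$ act as $V, U, V^\dagger$. Otherwise $a=0$, the middle gate is the identity, and $\tilde V^\dagger\tilde V=I$ by the convention of Section~\ref{sub:promise_gate_def}. By the equivalence between weak promise gates and circuits with clean ancillae, the given implementations of $V$ and $U$ with $m$ clean ancillae serve directly as $\tilde V$ and $\tilde U$, provided at least $m$ promise qubits are available. This is where the count $\max(1,m-k+1)$ comes from: one clean ancilla for $a$ plus $\max(0,m-k)$ extra ones.

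For the costs, the two $C^kX$ gates are implemented by Lemma~\ref{lem:optimal_ckx}, using any target qubit as the dirty ancilla, in $\mathcal{O}(k)$ gates and $\mathcal{O}(\log k)$ depth. The $X$ layers have depth $1$ and $\mathcal{O}(k)$ gates. $\tilde V$ and $\tilde V^\dagger$ cost $c_V$ gates and $d_V$ depth. The remaining and main obstacle is $C_a\tilde U$: naively controlling every gate of the $U$-circuit by the single qubit $a$ serializes each layer and gives depth $\Theta(c_U)$.

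To avoid this, I would process $U$ layer by layer, exploiting that every $G\in\mathcal{G}$ is involutory. The key identity is that, for a helper qubit $h$ in an arbitrary state,
\[
C_h(G)\cdot CX_{a\to h}\cdot C_h(G)\cdot CX_{a\to h} = C_a(G),
\]
since $G^{h}G^{h\oplus a}=G^{a}$. So a whole layer of disjoint gates can be controlled in parallel once $a$ is fanned out onto one dirty helper per gate.

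The main difficulty is finding these helpers. For a given layer, I would split its gates into two groups, each covering a constant fraction of them, and use qubits outside the support of one group as helpers for the other. Gates have bounded size, so the $\mathcal{O}(n)$ qubits of the target and promise registers suffice. The fan-out of $a$ onto the helpers is done with $\mathcal{F}_1$ via Lemma~\ref{lem:fanout}, at cost $\mathcal{O}(n)$ gates and $\mathcal{O}(\log n)$ depth, and this is applied a constant number of times per group. Summed over the $d_U$ layers, this contributes the $\mathcal{O}(d_U n)$ gates and $\mathcal{O}(d_U\log n)$ depth terms, on top of the $\mathcal{O}(c_U)$ controlled gates from $C(\mathcal{G})$.

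For the dirty-ancilla part, I would use Figure~\ref{fig:promise_controlled_conjugation_dirty} in place of Figure~\ref{fig:promise_controlled_conjugation}. This requires $U^2=I$, and the hypothesis that $V$ and $U$ preserve the ancilla register gives strong promise gates, so the promise register is restored for every input. The circuit simply repeats each component once more, so the asymptotic costs are unchanged. The dirty-helper toggling above is already insensitive to the helpers' states, so it carries over verbatim.
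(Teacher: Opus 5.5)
Your proposal is correct and follows the paper's proof. You use the same instantiation of Figure~\ref{fig:promise_controlled_conjugation} (and of Figure~\ref{fig:promise_controlled_conjugation_dirty} for the dirty variant) with the same ancilla accounting, and your layer-by-layer dirty-helper parallelization of $C_a\tilde U$ is exactly the $k=1$ case of Lemma~\ref{lem:parallel_control_gates}, which the paper cites rather than rederives.

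The only point to tighten is the helper count. What guarantees enough helpers is not the bounded size of the gates but that each gate acts on at least one qubit, so the other group's support supplies enough qubits. The remaining off-by-one (when the layer has an odd number of gates) is absorbed by controlling one gate per group directly on $a$, as in Equation~\eqref{eq:fanout_u}.
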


We rely on the following lemma in the proof of Theorem~\ref{thm:ancilla_control_trade}.
\begin{lemma}\label{lem:parallel_control_gates}
    Let $\mathcal{G} = \{U_1, \dots, U_n\}$ be a set of $n>1$ unitary gates, each satisfying $U_i^2 = I$, and let $U = \bigotimes_{i=1}^{n} U_i$.
    Then the $k$-controlled gate~$C^kU$ can be implemented over the gate set
    \[
        C(\mathcal{G}) \;\cup\; \{CCX, CX, X\},
    \]
    where $C(\mathcal{G})$ denotes the set of singly controlled versions of gates in $\mathcal{G}$, with gate count~$\mathcal{O}(n + k)$ and circuit depth~$\mathcal{O}(\log n + \log k)$, without using any ancilla qubits.
\end{lemma}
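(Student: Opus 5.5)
The plan is to split the $n$ gates into two nonempty halves and handle one half at a time. Each half is implemented as a singly controlled gate whose controls are borrowed as dirty qubits from the other half. The borrowed controls are toggled by the condition "$k$-control register is all ones" using the dirty-ancilla idea of Figure~\ref{fig:toggle_detection}. This is where the involution hypothesis $U_i^2=I$ is used. Since $n>1$, we can write $\{1,\dots,n\}=A\sqcup B$ with $|A|=\lfloor n/2\rfloor\ge 1$ and $|B|=\lceil n/2\rceil\ge1$. Because $C^kU=C^k(\bigotimes_{i\in A}U_i)\,C^k(\bigotimes_{i\in B}U_i)$, it suffices to implement $C^k(\bigotimes_{i\in A}U_i)$ using only qubits of the $B$-registers as dirty workspace, and then the symmetric construction for $B$.

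For the $A$ half, fix one qubit $b_j$ from each register $U_j$, $j\in B$. Assign to each $i\in A$ a distinct index $j(i)\in B$; this is possible since $|A|\le|B|$. Let the $b_j$ hold unknown values $d_j$ and let $c$ be the AND of the $k$ controls. The circuit is:
\begin{enumerate}
\item In parallel, apply $CU_i$ controlled by $b_{j(i)}$, for all $i\in A$. This has depth $1$ and gate count $|A|$, and applies $\bigotimes_{i}U_i^{d_{j(i)}}$.
\item Toggle every $b_j$ by $c$ with the sequence: fan-out $\mathcal{F}_1$ from $b_{j_0}$ onto the other $b_j$'s, then $C^kX$ onto $b_{j_0}$, then the same fan-out again. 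Here $j_0$ is a fixed element of $B$. The net effect is $b_j\mapsto b_j\oplus c$ for all $j\in B$. The $C^kX$ is realised by Lemma~\ref{lem:optimal_ckx}, with a qubit of some $A$-register as its single dirty ancilla; that qubit is restored afterwards. The fan-outs are given by Lemma~\ref{lem:fanout}.
\item Repeat step 1. This applies $\bigotimes_i U_i^{d_{j(i)}\oplus c}$, so steps 1–3 together apply $\bigotimes_i U_i^{d_{j(i)}}U_i^{d_{j(i)}\oplus c}=\bigotimes_iU_i^{c}$, using $U_i^2=I$.
\item Undo the toggle of step 2, which restores all $b_j$ to $d_j$.
\end{enumerate}

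One point needs care in step 2. The $C^kX$ and the fan-outs act only on the control register, on the $b_j$'s, and on one $A$-qubit used as a dirty ancilla. They therefore commute with nothing problematic. Nevertheless, the dirty ancilla taken from an $A$-register must not also be a target of a $CU_i$ in steps 1–3. This is fine because its use is confined to step 2, where it is restored. The $A$-register gates $CU_i$ act only in steps 1 and 3, and between those steps the $A$-register is left exactly as it was found.

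The symmetric construction for $B$ uses one qubit per $A$-register as controls. There may be fewer of these than $|B|$ (by at most one), so one $A$-qubit controls two $CU_i$'s, at constant extra depth. The dirty ancilla for its $C^kX$ is taken from a $B$-register. The cost is then two fan-outs of size $\mathcal{O}(n)$ per toggle, a constant number of toggles, $\mathcal{O}(1)$ copies of $C^kX$, and $\mathcal{O}(n)$ singly controlled gates in constant depth. This gives gate count $\mathcal{O}(n+k)$ and depth $\mathcal{O}(\log n+\log k)$, with no ancillas.

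The main obstacle I expect is the bookkeeping that every borrowed qubit, both the $b_j$'s and the $C^kX$ dirty ancilla, is never simultaneously a target of a gate whose correctness relies on its value. This is exactly what the $A/B$ split is designed to ensure. Two small edge cases also need a check: the fan-out from $b_{j_0}$ must include all other $b_j$, which is vacuous when $|B|=1$, and the case $k=0$ is trivial.
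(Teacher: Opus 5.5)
Your proof is correct and is essentially the paper's argument: split the gates into two halves, borrow one qubit per register of the other half as dirty controls, and combine the toggle-detection identity of Figure~\ref{fig:toggle_detection} (which is where $U_i^2=I$ is used) with fan-outs and a single-dirty-ancilla $C^kX$ from Lemma~\ref{lem:optimal_ckx}. The only difference is organizational: you toggle the borrowed qubits directly with a $k$-controlled fan-out (fan-out, $C^kX$, fan-out, as in Equation~\eqref{eq:controlled_fan_out}), whereas the paper first treats $k=1$ and then reduces $k>1$ to two $C^kX$ gates plus two singly controlled instances, borrowing one control qubit as an extra dirty ancilla.
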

\begin{proof}
    When $k=1$, we repeatedly apply the circuit identity of Figure~\ref{fig:toggle_detection}, each application using one dirty ancilla, to obtain the following circuit identity:
    \begin{equation}\label{eq:fanout_u}
        \begin{quantikz}[row sep={\the\circuitrowsep,between origins}, column sep=\the\circuitcolsep, align equals at=5]
            & & & \ctrl{6} & & \\
            & \qwbundle{} & & \gate{U_1} & & \\
            & & & & & \\
            & \qwbundle{} & & \gate{U_2} & & \\
            & & & & & \\
            & \qwbundle{} & & \gate{U_3} & & \\
            \setwiretype{n} & & & \push{\vdots} & & \\
            & & & \wire[u,shorten >=0.18cm][1]{a} & & \\
            & \qwbundle{} & & \gate{U_n}\wire[u][1]{a} & &
        \end{quantikz}
        \;=\;
        \begin{quantikz}[row sep={\the\circuitrowsep,between origins}, column sep=\the\circuitcolsep, align equals at=5]
            & & & \ctrl{6} & \ctrl{1} & \ctrl{6} & & & \\
            & \qwbundle{} & & & \gate{U_1} & & & & \\
            & & & \targ{} & \ctrl{1} & \targ{} & \ctrl{1} & & \\
            & \qwbundle{} & & & \gate{U_2} & & \gate{U_2} & & \\
            & & & \targ{} & \ctrl{1} & \targ{} & \ctrl{1} & & \\
            & \qwbundle{} & & & \gate{U_3} & & \gate{U_3} & & \\
            \setwiretype{n} & & & \push{\vdots} & & \push{\vdots} & & & \\
            & & & \targ{}\wire[u,shorten >=0.18cm][1]{a} & \ctrl{1} & \targ{}\wire[u,shorten >=0.18cm][1]{a} & \ctrl{1} & & \\
            & \qwbundle{} & & & \gate{U_n} & & \gate{U_n} & &
        \end{quantikz}
    \end{equation}
    which uses $n-1$ dirty ancilla qubits.
    To ensure that enough dirty ancilla qubits are available, we split the set $\mathcal{G}$ into two subsets of size $\lfloor n/2 \rfloor$ and $\lceil n/2 \rceil$, respectively.
    Without loss of generality, we may assume that each unitary $U_i$ acts on at least one qubit, so that the unitaries in each subset act on at least $\lfloor n/2 \rfloor$ qubits.
    Applying Equation~\eqref{eq:fanout_u} to each subset requires at most $\lceil n/2 \rceil - 1 \leq \lfloor n/2 \rfloor$ dirty ancilla qubits, so the qubits acted on by one subset can serve as dirty ancilla qubits for the other.
    The fan-out gate is then applied four times over at most $\lceil n/2 \rceil$ qubits each time, and each controlled $U_i$ gate is applied at most twice.
    As stated in Lemma~\ref{lem:fanout}, the fan-out gate can be implemented with $\mathcal{O}(n)$ $CX$ gates and circuit depth $\mathcal{O}(\log n)$.
    Thus, the total gate count is $\mathcal{O}(n)$ and the total circuit depth is $\mathcal{O}(\log n)$.

    When $k>1$, we use the following identity, also based on the circuit identity of Figure~\ref{fig:toggle_detection}:
    \begin{equation}\label{eq:fanout_u_k_ctrl}
        \begin{quantikz}[row sep={\the\circuitrowsep,between origins}, column sep=\the\circuitcolsep, align equals at=3]
            & \qwbundle{k} & & \ctrl{3} & & \\
            & & & & & \\
            & \qwbundle{} & & \gate{U_1} & & \\
            \setwiretype{n} & & & \push{\vdots} & & \\
            & \qwbundle{} & & \gate{U_n}\wire[u][1]{a} & & 
        \end{quantikz}
        \;=\;
        \begin{quantikz}[row sep={\the\circuitrowsep,between origins}, column sep=\the\circuitcolsep, align equals at=3]
            & \qwbundle{k} & & \ctrl{1} & & \ctrl{1} & & & \\
            & & & \targ{} & \ctrl{2} & \targ{} & \ctrl{2} & & \\
            & \qwbundle{} & & & \gate{U_1} & & \gate{U_1} & & \\
            \setwiretype{n} & & & & \push{\vdots} & & \push{\vdots} & & \\
            & \qwbundle{} & & & \gate{U_n}\wire[u][1]{a} & & \gate{U_n}\wire[u][1]{a} & & 
        \end{quantikz}
    \end{equation}
    This reduces the $k$-controlled case to two $C^kX$ gates and two instances of the singly controlled case.
    We apply this identity twice, once for each subset as described above, using the qubits of the other subset as dirty ancilla qubits.
    The two singly controlled instances in each application are then implemented using Equation~\eqref{eq:fanout_u}, where one dirty ancilla can be taken from the $k$ control qubits to compensate for the additional dirty ancilla required by Equation~\eqref{eq:fanout_u_k_ctrl}.
    The two $C^kX$ gates can be implemented with $\mathcal{O}(k)$ gates and $\mathcal{O}(\log k)$ circuit depth using one dirty ancilla (Lemma~\ref{lem:optimal_ckx}).
    Thus, the total gate count is $\mathcal{O}(n + k)$ and the total circuit depth is $\mathcal{O}(\log n + \log k)$.
\end{proof}

Using Lemma~\ref{lem:parallel_control_gates} together with the circuit identities presented in Figure~\ref{fig:controlled_conjugation_promise}, we can now prove Theorem~\ref{thm:ancilla_control_trade}.
\begin{proof}[Proof of Theorem~\ref{thm:ancilla_control_trade}]
    We begin by applying the circuit identity of Figure~\ref{fig:promise_controlled_conjugation}, which introduces one clean ancilla qubit and two $C^kX$ gates, and replaces the $k$-controlled $V$, $V^\dag$, and $U$ gates with weak promise gates $\tilde{V}$ and $\tilde{V}^\dag$ and a controlled weak promise gate $\tilde{U}$, all sharing a promise register of size $k$.
    Each $C^k X$ gate can be implemented with $\mathcal{O}(k)$ gates and $\mathcal{O}(\log k)$ depth using one dirty ancilla from the $n$-qubit register (Lemma~\ref{lem:optimal_ckx}).
    The weak promise gates $\tilde{V}$ and $\tilde{V}^\dag$ are not controlled, so we can use the implementations of $V$ and $V^\dag$ with $c_V$ gates, circuit depth $d_V$, and $m$ clean ancilla qubits, replacing $k$ of the clean ancilla qubits with qubits from the promise register and thereby requiring only $\max(0, m-k)$ clean ancilla qubits.

    It remains to implement the controlled weak promise gate $\tilde{U}$.
    As for $\tilde{V}$, we can implement the weak promise gate $\tilde{U}$ by using the construction for $U$ with $c_U$ gates, circuit depth $d_U$, and $\max(0, m-k)$ clean ancilla qubits, substituting $k$ clean ancilla qubits with the promise register.
    We then add a control to each gate in this implementation and apply the construction of Lemma~\ref{lem:parallel_control_gates} to each of the $d_U$ layers, for a total cost of $\mathcal{O}(d_U n)$ $C(\mathcal{G}) \cup \{CX\}$ gates and $\mathcal{O}(d_U \log n)$ circuit depth.
    The total gate count is therefore $\mathcal{O}(c_V + c_U + d_U n + k)$, and the total circuit depth is $\mathcal{O}(d_V + d_U \log n + \log k)$, using $\max(1, m - k + 1)$ clean ancilla qubits.

    Whenever $U$ satisfies $U^2=I$ and the implementations of $V$ and $U$ preserve the ancilla register for all inputs, we can use the circuit identity of Figure~\ref{fig:promise_controlled_conjugation_dirty} instead of the one of Figure~\ref{fig:promise_controlled_conjugation}, which replaces the clean ancilla qubit with a dirty ancilla qubit at the cost of at most doubling the gate count and circuit depth.
\end{proof}

Note that when $\mathcal{G} = \{CCX, CX, X\}$, the circuit produced by Theorem~\ref{thm:ancilla_control_trade} is over the $\{CCCX$, $CCX$, $CX$, $X\}$ gate set.
To return to the $\{CCX, CX, X\}$ gate set, we can decompose each $C^3X$ gate into $4$ $CCX$ gates using one dirty ancilla via the following identity:
\begin{equation}\label{eq:cccx_to_ccx}
    \begin{quantikz}[row sep={\the\circuitrowsep,between origins}, column sep=\the\circuitcolsep, align equals at=3]
        & \ctrl{1} & \\
        & \ctrl{1} & \\
        & \ctrl{1} & \\
        & \targ{} & \\
        \lstick{\ket{\psi}} & & \rstick{\ket{\psi}}
    \end{quantikz}
    \;=\;
    \begin{quantikz}[row sep={\the\circuitrowsep,between origins}, column sep=\the\circuitcolsep, align equals at=3]
        & \ctrl{1} & & \ctrl{1} & & \\
        & \ctrl{3} & & \ctrl{3} & & \\
        & & \ctrl{1} & & \ctrl{1} & \\
        & & \targ{} & & \targ{} & \\
        \lstick{\ket{\psi}} & \targ{} & \ctrl{-1} & \targ{} & \ctrl{-1} & \rstick{\ket{\psi}}
    \end{quantikz}
\end{equation}
To ensure that each $C^3X$ decomposition has access to a dirty ancilla, we use the same splitting technique as in the proof of Lemma~\ref{lem:parallel_control_gates}: within each circuit layer, we partition the $C^3X$ gates into two groups and use the qubits acted on by one group as dirty ancilla qubits for the other.
This yields a circuit over the $\{CCX, CX, X\}$ gate set with the same asymptotic gate count and circuit depth as in Theorem~\ref{thm:ancilla_control_trade}.
This decomposition introduces only a constant factor overhead in gate count and circuit depth, yielding the following corollary.

\begin{corollary}\label{cor:ancilla_control_trade}
    Under the hypotheses of Theorem~\ref{thm:ancilla_control_trade} with $\mathcal{G} = \{CCX, CX, X\}$, the $k$-controlled gate $C^kW$ can be implemented over the $\{CCX, CX, X\}$ gate set with gate count $\mathcal{O}(c_V + c_U + d_U n + k)$, circuit depth $\mathcal{O}(d_V + d_U \log n + \log k)$, and using $\max(1, m - k + 1)$ clean ancilla qubits.
    Moreover, if $U^2 = I$ and the implementations of $V$ and $U$ preserve the ancilla register for all inputs, then one clean ancilla can be replaced by one dirty ancilla with the same asymptotic gate count and circuit depth complexities.
\end{corollary}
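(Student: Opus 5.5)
The plan is to apply Theorem~\ref{thm:ancilla_control_trade} directly with $\mathcal{G} = \{CCX, CX, X\}$. Each generator is self-adjoint, so $\mathcal{G}$ is involutory and the hypotheses hold. The theorem gives a circuit for $C^kW$ over $\mathcal{G} \cup C(\mathcal{G}) \cup \{CCX, CX, X\} = \{CCCX, CCX, CX, X\}$. It has gate count $\mathcal{O}(c_V + c_U + d_U n + k)$, depth $\mathcal{O}(d_V + d_U \log n + \log k)$, and uses $\max(1, m-k+1)$ clean ancillae. In the strong-promise case ($U^2 = I$ and the ancilla register preserved), one of these clean ancillae becomes a dirty one. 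All that remains is to eliminate the $C^3X$ gates without changing the asymptotic costs or the ancilla count.

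The $C^3X$ gates come only from controlling the $CCX$ gates in the implementation of $\tilde U$, inside the layers handled by Lemma~\ref{lem:parallel_control_gates}. I will rewrite each $C^3X$ using identity~\eqref{eq:cccx_to_ccx}, which costs four $CCX$ gates and depth four and needs one dirty ancilla. This ancilla may be any qubit the $C^3X$ does not touch, in any state, since the identity restores it exactly.

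Supplying these dirty ancillae is the one delicate point, because a layer may contain many parallel $C^3X$ gates that together cover almost all qubits. Within each layer I will split the $C^3X$ gates into two groups of nearly equal size. I first decompose the gates of the first group in parallel, each borrowing a distinct qubit acted on by the second group, and then do the reverse. Each $C^3X$ acts on four qubits, so a group of size $g$ offers $4g$ qubits to borrow from, more than the other group needs. If a layer contains a single $C^3X$, I borrow any other qubit of the circuit instead, such as a qubit of the $n$-qubit register or a control outside its support. This is harmless for $n \geq 2$; degenerate sizes change only constants and can be treated separately.

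Each $C^3X$ layer thus becomes at most eight layers of $CCX$ gates, and each gate becomes at most four gates. The gate count and depth therefore grow by a constant factor. No new ancillae are introduced, since only qubits already present are borrowed as dirty ancillae. The clean/dirty ancilla statement therefore carries over verbatim from Theorem~\ref{thm:ancilla_control_trade}. The remaining gates from $C(\mathcal{G})$ are $CCX$, $CX$ and $X$ themselves (controlled $CX$ and controlled $X$), so they are already in the target gate set.
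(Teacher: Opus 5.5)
Your proposal is correct and is essentially the paper's argument: apply Theorem~\ref{thm:ancilla_control_trade}, expand each resulting $C^3X$ with identity~\eqref{eq:cccx_to_ccx}, and obtain the needed dirty ancillae by splitting each layer's $C^3X$ gates into two groups that borrow qubits from each other's supports, which costs only constant factors and adds no qubits. Your treatment of layers containing a single $C^3X$ is a detail the paper glosses over, but in tiny instances (e.g.\ $n=2$, $k=m=1$, where the whole circuit has only four qubits) there is no spare qubit at all, so it is your ``treat separately'' clause that covers them, not the $n\geq 2$ remark.
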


\subsection{Application to controlled addition}\label{sub:controlled_addition}

\begin{figure}[t]
    \makebox[\textwidth][c]{%
    \begin{quantikz}[row sep={\the\circuitrowsep,between origins}, column sep=\the\circuitcolsep]
        \lstick{\ket{a_0}} & & \slice{1} & & & \slice{2} & \ctrl{1} & & & \slice{3} & & & \slice{4} & & & & \ctrl{1}\slice{5} & \slice{6} & & & \slice{7} & \ctrl{1}\slice{8} & \rstick{\ket{a_0}} \\
        \lstick{\ket{b_0}} & & & & & & \ctrl{1} & & & & & & & & & & \ctrl{1} & & & & & \targ{} & \rstick{\ket{s_0}} \\
        \lstick{\ket{a_1}} & \ctrl{1} & & & & \ctrl{2} & \targ{} & \ctrl{1} & & & & \ctrl{1} & & & & \ctrl{1} & \targ{} & & \ctrl{2} & & & \ctrl{1} & \rstick{\ket{a_1}} \\
        \lstick{\ket{b_1}} & \targ{} & & & & & & \ctrl{1} & & & & \targ{} & \gate{X} & & & \ctrl{1} & & \gate{X} & & & & \targ{} & \rstick{\ket{s_1}} \\
        \lstick{\ket{a_2}} & \ctrl{1} & & & \ctrl{2} & \targ{} & & \targ{} & \ctrl{1} & & & \ctrl{1} & & & \ctrl{1} & \targ{} & & & \targ{} & \ctrl{2} & & \ctrl{1} & \rstick{\ket{a_2}} \\
        \lstick{\ket{b_2}} & \targ{} & & & & & & & \ctrl{1} & & & \targ{} & \gate{X} & & \ctrl{1} & & & \gate{X} & & & & \targ{} & \rstick{\ket{s_2}} \\
        \lstick{\ket{a_3}} & \ctrl{1} & & \ctrl{2} & \targ{} & & & & \targ{} & \ctrl{1} & & \ctrl{1} & & \ctrl{1} & \targ{} & & & & & \targ{} & \ctrl{2} & \ctrl{1} & \rstick{\ket{a_3}} \\
        \lstick{\ket{b_3}} & \targ{} & & & & & & & & \ctrl{1} & & \targ{} & \gate{X} & \ctrl{1} & & & & \gate{X} & & & & \targ{} & \rstick{\ket{s_3}} \\
        \lstick{\ket{a_4}} & \ctrl{1} & \ctrl{2} & \targ{} & & & & & & \targ{} & \ctrl{1} & \ctrl{1} & & \targ{} & & & & & & & \targ{} & \ctrl{1} & \rstick{\ket{a_4}} \\
        \lstick{\ket{b_4}} & \targ{} & & & & & & & & & \ctrl{1} & \targ{} & & & & & & & & & & \targ{} & \rstick{\ket{s_4}} \\
        \lstick{\ket{z}} & & \targ{} & & & & & & & & \targ{} & & & & & & & & & & & & \rstick{\ket{z \oplus s_5}} 
    \end{quantikz}
    }
    \caption{Ancilla-free circuit for quantum addition over $5$-bit registers, where $s = a + b$~\cite{Takahashi_2010}.}\label{fig:ripple_carry_adder}
\end{figure}
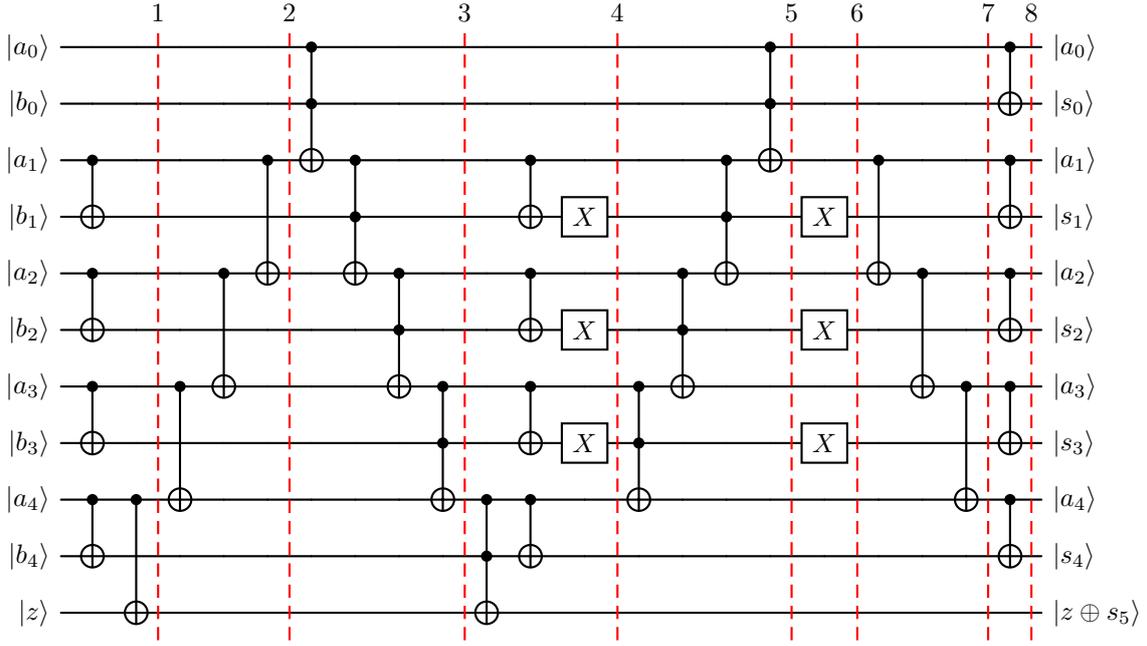

We now demonstrate a straightforward application of Theorem~\ref{thm:ancilla_control_trade}: efficiently implementing a controlled adder by trading clean ancilla qubits for control qubits.

\begin{corollary}\label{cor:controlled_addition}
    The $k$-controlled $n$-bit quantum adder can be implemented over the $\{CCX, CX, X\}$ gate set with $\mathcal{O}(k + n)$ gates, $\mathcal{O}(\log(kn))$ circuit depth, and $\max(1, n-k+1)$ clean ancilla qubits.
\end{corollary}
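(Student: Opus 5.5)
The plan is to invoke Corollary~\ref{cor:ancilla_control_trade} with $m=n$ clean ancillae. This requires writing the $n$-bit adder as $W=V^\dagger U V$, where:
\begin{itemize}
\item $V$ has gate count $c_V=\mathcal{O}(n)$ and depth $d_V=\mathcal{O}(\log n)$ over $\{CCX,CX,X\}$, using at most $n$ clean ancillae;
\item $U$ has \emph{constant} depth $d_U=\mathcal{O}(1)$ and $c_U=\mathcal{O}(n)$.
\end{itemize}
Substituting into the corollary then gives gate count $\mathcal{O}(c_V+c_U+d_U n+k)=\mathcal{O}(n+k)$ and depth $\mathcal{O}(d_V+d_U\log n+\log k)=\mathcal{O}(\log(kn))$. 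The ancilla count is $\max(1,n-k+1)$, which is exactly the claimed statement. All of the work is therefore in exhibiting such a decomposition.

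For the decomposition, I would start from the ancilla-free ripple-carry adder of Figure~\ref{fig:ripple_carry_adder} and read it slice by slice.
\begin{itemize}
\item Slices~1--3 consist of a $CX$ layer, a $CX$ ladder (an instance of $\mathcal{L}_1$), and a Toffoli ladder (an instance of $\mathcal{L}_2$) that propagates the carries into the $a$ register. These form the candidate $V$.
\item The middle slices (the $CX$ layer writing the carries into $b$ together with the $X$ layers) are a constant number of gate layers. These form the candidate $U$.
\item The remaining slices mirror the first ones: Toffoli ladder, $CX$ ladder, $CX$ layer.
\end{itemize}
The key identity to check is that the second half equals $V^\dagger$ once the conjugating $X$ layers on the $b$ register are absorbed into $U$. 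This is the usual complement trick: the carries of $a+b$ are recovered from $a$ and $\overline{s}$, so the uncomputation ladder is literally the reverse of the computation ladder. The $\mathcal{L}_1$ parts are then realized by Lemma~\ref{lem:ladder_cnot}, and the $\mathcal{L}_2$ parts by Lemma~\ref{lem:ladder_2_n_ancilla} with $n$ clean ancillae. Both are $\mathcal{O}(n)$ gates and $\mathcal{O}(\log n)$ depth, which gives the required $c_V$ and $d_V$.

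The main obstacle is making $W=V^\dagger U V$ hold \emph{exactly}, so that the controlled-conjugation identity of Figure~\ref{fig:promise_controlled_conjugation} applies. The middle slices of the Takahashi circuit interleave $X$ layers with parts of the two ladders, and the last qubit $z$ and the $s_0$ Toffoli/CNOT at the boundary break the symmetry. My first attempt would be to push every non-mirrored gate into $U$, checking that $U$ stays a constant number of layers. If the mirrored halves still fail to cancel exactly, the fallback is a cleaner decomposition using the $n$ clean ancillae as a carry register. In that version, $V$ computes $p=a\oplus b$ and the prefix carries into the ancillae via an $\mathcal{L}_2$-type prefix computation, $U$ XORs the carries into the $p$ register in depth one, and the uncomputation recomputes carries from $(a,\overline{s})$. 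I would then verify by a short bitwise argument that this uncomputation is the inverse of $V$.

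In either version, $U$ is a constant number of layers of $X$, $CX$ and $CCX$ gates. Adding the control to each such layer via Lemma~\ref{lem:parallel_control_gates} (implicitly, through the corollary) costs only $\mathcal{O}(n)$ gates and $\mathcal{O}(\log n)$ depth per layer, which completes the argument.
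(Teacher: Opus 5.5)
There is a genuine gap exactly at the step you flag as the main obstacle. Theorem~\ref{thm:ancilla_control_trade} needs $W=V^\dagger UV$ as an exact operator identity: with the control off, $\tilde V^\dagger\tilde V$ must cancel on every input. The adder of Figure~\ref{fig:ripple_carry_adder} is \emph{not} of this form with $V$ given by slices~1--3 and $U$ of constant depth.

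Write $U_i$ for slice $i$. The adder is $A=U_8U_2^\dagger U_6U_3^\dagger U_4U_3U_2U_1$, and two things break the mirror symmetry:
the closing layer $U_8$ (all $a_i\to b_i$, including $i=0$) is not $U_1^\dagger$, which is $a_i\to b_i$ for $i\ge 1$ together with $a_{n-1}\to z$;
and the $X$ layer $U_6$ on the $b$ register sits \emph{outside} the uncomputing Toffoli ladder $U_3^\dagger$, whose controls it complements.

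For your $V=U_3U_2U_1$, the middle factor is forced to be $U=VAV^\dagger=U_3U_2U_1U_8U_2^\dagger U_6U_3^\dagger U_4$. Tracing it through, $U$ XORs into $z$ the value $\bigoplus_{i=1}^{n-1}a_i\oplus\bigoplus_{i=0}^{n-1}a_ib_i$, which depends on all $2n$ input bits. The fan-in $U_2\,CX(a_{n-1}\to z)\,U_2^\dagger$ survives. Hence $d_U=\Omega(\log n)$, and the theorem then only yields $\mathcal{O}(n\log n)$ gates and $\mathcal{O}(\log^2 n)$ depth.

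Your fallback has the same defect. Recomputing the carries from $(a,\bar s)$ is the circuit $X_bKX_b$, where $K$ is the carry computation and $X_b$ the complementing layer. It agrees with $K^\dagger$ only on the states actually reached, not as an operator. So no ``short bitwise argument'' can show that the uncomputation equals $V^\dagger$.

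The missing idea is to leave the unmatched constant-depth layers outside the conjugation and control them separately. Since $U_6$ acts only on $b$ and $U_2$ only on $a$, they commute. This gives $A=L\,V^\dagger U_4V$ with your $V=U_3U_2U_1$, $U=U_4$, and $L=U_8U_6U_1$. Here $L$ is a single layer of disjoint involutory gates: $CX(a_0\to b_0)$, $CX(a_{n-1}\to z)$, and the $X$ gates of $U_6$.

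Then $C^kA=C^k(L)\,C^k(V^\dagger U_4V)$. The factor $C^k(L)$ costs $\mathcal{O}(n+k)$ gates and $\mathcal{O}(\log(kn))$ depth with no ancilla, by Lemma~\ref{lem:parallel_control_gates}. The second factor is exactly your application of Corollary~\ref{cor:ancilla_control_trade} with $m=n$.

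This is essentially the paper's proof. It writes $C^kA=C^k(U_8)\,U_2^\dagger\,C^k(U_6)\,U_3^\dagger\,C^k(U_4)\,U_3\,U_2\,C^k(U_1)$ and removes the controls on the $CX$ ladders by plain controlled conjugation. It then applies Theorem~\ref{thm:ancilla_control_trade} only to $U_3^\dagger C^k(U_4)U_3$.
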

\begin{proof}
    We start from the ripple-carry adder of~\cite{Takahashi_2010}, an example of which is shown in Figure~\ref{fig:ripple_carry_adder}.
    Let $U_i$ denote the unitary implemented by slice $i$ in Figure~\ref{fig:ripple_carry_adder}.
    The ladders of $CX$ gates in slices 2 and 7, corresponding to $U_2$ and $U_7 = U_2^\dag$, can each be implemented with $\mathcal{O}(n)$ $CX$ gates and circuit depth $\mathcal{O}(\log n)$, as stated in Lemma~\ref{lem:ladder_cnot}.
    The ladders of $CCX$ gates in slices 3 and 5, corresponding to $U_3$ and $U_5 = U_3^\dag$, can each be implemented with $\mathcal{O}(n)$ $CCX$ gates and circuit depth $\mathcal{O}(\log n)$, using $n$ clean ancilla qubits, as stated in Lemma~\ref{lem:ladder_2_n_ancilla}.

    Applying the controlled conjugation identity of Figure~\ref{fig:controlled_conjugation} twice, first to the outer $U_2 / U_2^\dag$ pair and then to the inner $U_3 / U_3^\dag$ pair, the $k$-controlled adder can be written as
    \begin{equation}\label{eq:controlled_adder_decomp}
        C^k A = C^k(U_8) \cdot U_2^\dag \cdot C^k(U_6) \cdot U_3^\dag \cdot C^k(U_4) \cdot U_3 \cdot U_2 \cdot C^k(U_1).
    \end{equation}
    We apply Theorem~\ref{thm:ancilla_control_trade} to implement $U_3^\dag \cdot C^k(U_4) \cdot U_3$ in Equation~\eqref{eq:controlled_adder_decomp} with $\mathcal{O}(k + n)$ gates and $\mathcal{O}(\log(kn))$ circuit depth, using $\max(1, n - k + 1)$ clean ancilla qubits.
    The single $C^3X$ gate we obtain in slice 4 can be decomposed into four $CCX$ gates using one available dirty ancilla, as in Equation~\eqref{eq:cccx_to_ccx}.
    The remaining controlled unitaries $C^k(U_8)$, $C^k(U_6)$, and $C^k(U_1)$ can each be implemented over the $\{CCX, CX, X\}$ gate set with $\mathcal{O}(k + n)$ gates and $\mathcal{O}(\log(kn))$ circuit depth by Lemma~\ref{lem:parallel_control_gates}.
    Thus, the resulting $k$-controlled adder has $\mathcal{O}(k + n)$ $\{CCX, CX, X\}$ gates, $\mathcal{O}(\log(kn))$ circuit depth, and uses $\max(1, n - k + 1)$ clean ancilla qubits.
\end{proof}

Theorem~\ref{thm:ancilla_control_trade} is not limited to simply adding controls to a unitary so that they can be traded for ancilla qubits.
As we will see in the following sections, it is also applicable when an operator decomposes into a multi-controlled unitary and a simpler one, allowing its full benefit to be exploited without adding controls to the entire operator.

\section{Quantum circuits for comparators}\label{sec:comparator}
\subsection{Quantum--quantum comparator} \label{sub:comparator_quantum_quantum}

A quantum--quantum comparator over $n$-bit registers computes the following map:
\begin{equation}
    \ket{\bs a}\ket{\bs b}\ket{z} \mapsto \ket{\bs a}\ket{\bs b}\ket{z \oplus (a < b)},
\end{equation}
where $\bs a, \bs b \in \{0, 1\}^n$ and $z \in \{0, 1\}$.
Note that such a comparator can also be used to compute $a \leq b$, since $a \leq b$ is equivalent to $\neg (b < a)$.
The comparison operator can be implemented by computing the high bit of $a - b$, which is $1$ if and only if $a < b$~\cite{Cuccaro_2004}.
The subtraction $a - b$ can be constructed from an adder by conjugating it with $X$ gates on the register holding $b$.
An example of a comparator obtained from this construction is shown in Figure~\ref{fig:comparator}, using the adder from Figure~\ref{fig:ripple_carry_adder}.
Because this approach relies on a quantum adder, comparators have traditionally inherited the same circuit complexity as addition.
In this section, we show that this need not be the case: comparison can be implemented with lower circuit complexity than the best-known adders.

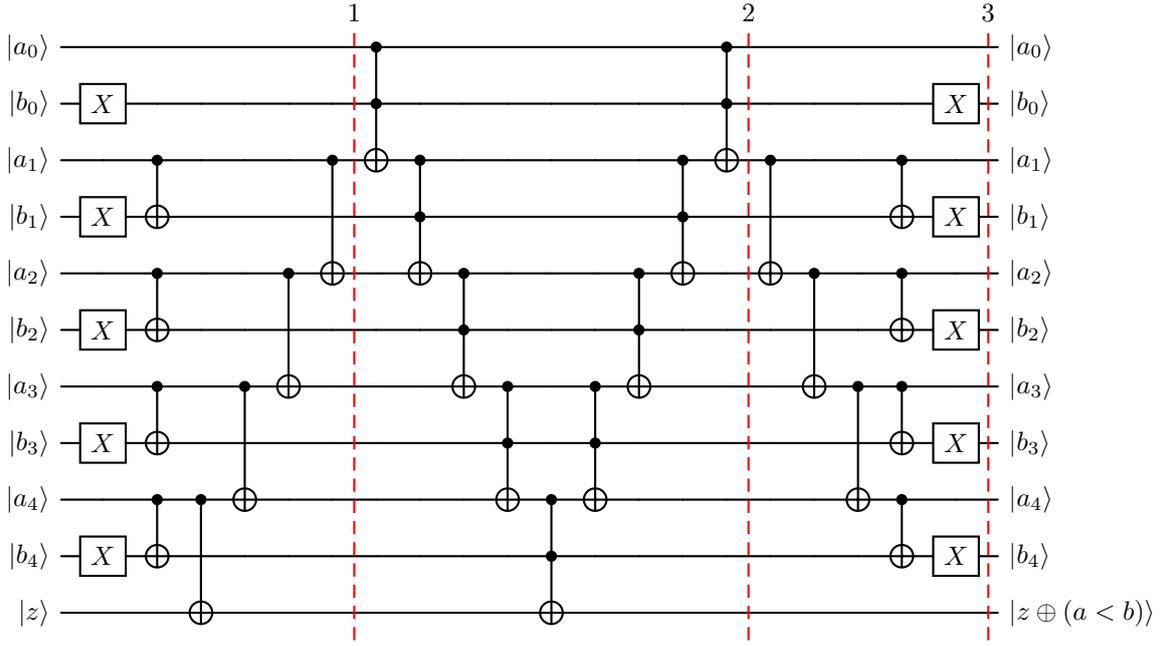
\begin{figure}[t]
    \makebox[\textwidth][c]{%
    \begin{quantikz}[row sep={\the\circuitrowsep,between origins}, column sep=\the\circuitcolsep]
        \lstick{\ket{a_0}} & & & & & & \slice{1} & \ctrl{1} & & & & & & & & \ctrl{1}\slice{2} & & & & & \slice{3} & \rstick{\ket{a_0}} \\
        \lstick{\ket{b_0}} & \gate{X} & & & & & & \ctrl{1} & & & & & & & & \ctrl{1} & & & & & \gate{X} & \rstick{\ket{b_0}} \\
        \lstick{\ket{a_1}} & & \ctrl{1} & & & & \ctrl{2} & \targ{} & \ctrl{1} & & & & & & \ctrl{1} & \targ{} & \ctrl{2} & & & \ctrl{1} & & \rstick{\ket{a_1}} \\
        \lstick{\ket{b_1}} & \gate{X} & \targ{} & & & & & & \ctrl{1} & & & & & & \ctrl{1} & & & & & \targ{} & \gate{X} & \rstick{\ket{b_1}} \\
        \lstick{\ket{a_2}} & & \ctrl{1} & & & \ctrl{2} & \targ{} & & \targ{} & \ctrl{1} & & & & \ctrl{1} & \targ{} & & \targ{} & \ctrl{2} & & \ctrl{1} & & \rstick{\ket{a_2}} \\
        \lstick{\ket{b_2}} & \gate{X} & \targ{} & & & & & & & \ctrl{1} & & & & \ctrl{1} & & & & & & \targ{} & \gate{X} & \rstick{\ket{b_2}} \\
        \lstick{\ket{a_3}} & & \ctrl{1} & & \ctrl{2} & \targ{} & & & & \targ{} & \ctrl{1} & & \ctrl{1} & \targ{} & & & & \targ{} & \ctrl{2} & \ctrl{1} & & \rstick{\ket{a_3}} \\
        \lstick{\ket{b_3}} & \gate{X} & \targ{} & & & & & & & & \ctrl{1} & & \ctrl{1} & & & & & & & \targ{} & \gate{X} & \rstick{\ket{b_3}} \\
        \lstick{\ket{a_4}} & & \ctrl{1} & \ctrl{2} & \targ{} & & & & & & \targ{} & \ctrl{1} & \targ{} & & & & & & \targ{} & \ctrl{1} & & \rstick{\ket{a_4}} \\
        \lstick{\ket{b_4}} & \gate{X} & \targ{} & & & & & & & & & \ctrl{1} & & & & & & & & \targ{} & \gate{X} & \rstick{\ket{b_4}} \\
        \lstick{\ket{z}} & & & \targ{} & & & & & & & & \targ{} & & & & & & & & & & \rstick{\ket{z \oplus (a < b)}} 
    \end{quantikz}
    }
    \caption{Ancilla-free circuit for the quantum--quantum comparator over $5$-bit registers.}\label{fig:comparator}
\end{figure}

A key component of the comparator is the $\mathcal{V}_2^{(n)}$ operator, as can be seen in slice 2 of Figure~\ref{fig:comparator}.
This operator admits the following implementation, using the $\mathcal{L}_2^{(n-1)}$ operator and its adjoint together with a $CCX$ gate in the middle:
\begin{equation}\label{eq:v_to_ladders}
    \begin{quantikz}[row sep={\the\circuitrowsep,between origins}, column sep=\the\circuitcolsep, align equals at=2.5]
        & \qwbundle{} & \gate[3]{\mathcal{V}_2}\vqw{3} & & \\
        & & & & \\
        & & & & \\
        & & \targ{} & & 
    \end{quantikz}
    \;=\;
    \begin{quantikz}[row sep={\the\circuitrowsep,between origins}, column sep=\the\circuitcolsep, align equals at=2.5]
        & \qwbundle{} & \gate[2]{\mathcal{L}^\dag_2} & & \gate[2]{\mathcal{L}_2} & & \\
        & & & \ctrl{1} & & & \\
        & & & \ctrl{1} & & & \\
        & & & \targ{} & & & 
    \end{quantikz}
\end{equation}

This decomposition motivates the following corollary.
\begin{corollary}\label{cor:promise_ladder}
    A strong promise gate with a promise register of size $n$ whose target unitary is the $\mathcal{L}_k^{(n)}$ operator can be implemented with $\mathcal{O}(kn)$ $\{CCX, CX, X\}$ gates and $\mathcal{O}(\log(kn))$ circuit depth.
\end{corollary}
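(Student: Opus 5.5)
The plan is to derive this directly from Corollary~\ref{cor:ckx_ladder} together with the equivalence between promise gates and circuits that use clean ancillae, discussed in Subsection~\ref{sub:promise_gate_def}. Corollary~\ref{cor:ckx_ladder} gives an implementation of $\mathcal{L}_k^{(n)}$ with $\mathcal{O}(kn)$ $\{CCX, CX, X\}$ gates and $\mathcal{O}(\log(kn))$ depth, using $n$ ancilla qubits initialized to $\ket{0}$. I will reinterpret this $n$-qubit ancilla register as the promise register. By the weak-promise equivalence, the resulting circuit is then a weak promise gate with target $\mathcal{L}_k^{(n)}$. The real content of the statement is strength: I must check that the circuit preserves the ancilla register for every computational basis input, not only for $\ket{0}^{\otimes n}$.

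\textbf{Step 1 (structure of the construction).} I will open the proofs of Lemma~\ref{lem:ladder_2_n_ancilla} and Corollary~\ref{cor:ckx_ladder} in the appendix and verify that they have a compute--uncompute form. Concretely, the construction should look like $A^\dagger B A$, where:
\begin{itemize}
\item $A$ computes partial products (prefix ANDs) into the ancillae;
\item $B$ uses the ancillae only as controls and writes only to the target register;
\item $A$ and $A^\dagger$ are built from gates whose targets lie in the ancilla register and whose controls lie in the target register or in the ancilla register.
\end{itemize}
The paper already remarks that the $\mathcal{L}_2^{(n)}$ construction of Equation~\eqref{eq:l2_first_iteration} has this property.

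\textbf{Step 2 (ancilla preservation on every basis state).} Suppose the ancilla register starts in an arbitrary basis state $\ket{\bs j}$. The gates of $A$ are classical reversible gates, so $A$ sends $\ket{\bs x}\ket{\bs j}$ to $\ket{\bs x}\ket{\bs j \oplus f(\bs x)}$. This holds provided every ancilla-targeting gate is controlled only by target-register qubits, or more generally provided $A$ maps the ancilla register by an $\bs x$-dependent bijection.

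Next, $B$ may change the target register. If $B$ alters target-register qubits that $A$ used as controls, then $A^\dagger$ would not exactly undo $A$. I will handle this with the standard check used for the clean-ancilla case: the uncomputation reads only target qubits that $B$ left unchanged, or that were restored beforehand. If that holds, $A^\dagger$ inverts $A$ on the ancillae for every $\bs j$, and the ancilla register returns to $\ket{\bs j}$.

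Once preservation is established, the whole circuit is block-diagonal in the ancilla basis, $\sum_{\bs j} \ket{\bs j}\bra{\bs j}\otimes U_{\bs j}$, with $U_{\bs 0}=\mathcal{L}_k^{(n)}$. This is exactly Definition~\ref{def:strong_promise_gate}.

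\textbf{Step 3 (reducing general $k$ to $k=2$).} For Corollary~\ref{cor:ckx_ladder}, each $C^kX$ is decomposed via Lemma~\ref{lem:optimal_ckx} using dirty ancillae. Dirty-ancilla gadgets preserve their helper qubits for all inputs, so this step keeps the strong property. The ladder itself can be viewed as an $\mathcal{L}_2$-type ladder over the blocks' AND values, and that ladder inherits the preservation property established in Step 2.

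\textbf{Main obstacle.} The hard part is Step 2: confirming that the logarithmic-depth $\mathcal{L}_2^{(n)}$ construction truly uncomputes the ancillae independently of their initial value. If some intermediate step instead relies on an ancilla being $0$, for example by using an XOR into a fresh ancilla as a copy, I would repair it. The repair is to wrap each such step in a compute/uncompute pair, or to add a final layer of uncomputation. Either change costs only a constant factor in gates and depth, so the stated bounds of $\mathcal{O}(kn)$ gates and $\mathcal{O}(\log(kn))$ depth still hold.
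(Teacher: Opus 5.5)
Your proposal is correct and is essentially the paper's argument: you reinterpret the clean ancillae of the construction behind Lemma~\ref{lem:ladder_2_n_ancilla} and Corollary~\ref{cor:ckx_ladder} as the promise register, and since each ancilla is computed, used only as a control, and uncomputed with controls left untouched in between, the register is restored for every initial basis state, which gives a strong promise gate. The paper phrases this locally, for each use of Equation~\eqref{eq:ckx_to_toffoli_ancilla}, rather than through your global $A^\dagger B A$ form (which requires commuting the outer ladder layers past the ancilla computations, valid here), so the contingency repair in your last paragraph is never needed.
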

This corollary follows straightforwardly from the construction used to prove Lemma~\ref{lem:ladder_2_n_ancilla}; the proof is provided in Appendix~\ref{app:proof_ladder_promise}.

The following lemma provides a key construction used in the main result of this section.
\begin{lemma}\label{lem:promise_v_sqrt_ancillae}
    A controlled strong promise gate with a promise register of size $2\lceil\sqrt{n}\rceil$ whose target unitary is the $\mathcal{V}_2^{(n)}$ operator can be implemented with $\mathcal{O}(n)$ $\{CCX, CX, X\}$ gates and $\mathcal{O}(\log n)$ circuit depth.
\end{lemma}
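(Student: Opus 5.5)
Our plan is to use a two-level blocking of the V-shaped ladder, working with the $2\lceil\sqrt n\rceil$ promise qubits as if they were clean ancillae (they are clean whenever the control is $1$ and the promise holds). Write $\mathcal{V}_2^{(n)}$ through Equation~\eqref{eq:v_to_ladders} as $\mathcal{L}_2^{(n-1)}$, a middle $CCX$, and ${\mathcal{L}_2^{(n-1)}}^\dagger$, where only the middle $CCX$ must be controlled; by Figure~\ref{fig:controlled_conjugation} the outer ladders can stay uncontrolled. Partition the $n-1$ steps of $\mathcal{L}_2^{(n-1)}$ into $s=\lceil\sqrt n\rceil$ consecutive blocks of at most $s$ steps each. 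The net effect of the whole V is that each carry-type bit $x_{2i}$ is toggled by the AND of a contiguous chain of lower bits, and for the comparator only the top target actually changes.

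The construction has three steps.

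\emph{Step 1 (local blocks).} Inside block $j$, compute into promise qubit $p_j$ the AND of all the control bits of that block. This is a $C^{O(s)}X$ gate. Using a dirty ancilla taken from another block, Lemma~\ref{lem:optimal_ckx} gives cost $O(s)$ and depth $O(\log s)$, and all blocks run in parallel, so this layer costs $O(n)$ gates and $O(\log n)$ depth. Equivalently, we can view it as a strong-promise ladder in the spirit of Corollary~\ref{cor:promise_ladder} applied to each block separately.

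\emph{Step 2 (across blocks).} Combine the $s$ block values with a second promise ladder $\mathcal{L}_2^{(s)}$ on the second group of $s$ promise qubits. This is a strong promise gate implemented by Corollary~\ref{cor:promise_ladder}, using the remaining promise qubits as its own promise register, at cost $O(s)$ gates and $O(\log s)$ depth. The resulting prefix values let every block be finished locally.

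\emph{Step 3 (controlled toggles and uncompute).} Apply the single controlled toggle of the middle $CCX$ as a controlled $CCCX$, decomposed via Equation~\eqref{eq:cccx_to_ccx} with a dirty ancilla. Then uncompute Steps~2 and~1 in reverse order.

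Every step is a compute--uncompute pattern on computational basis states. The promise register is therefore restored for \emph{every} input, and the target register undergoes some permutation, so the circuit has the block-diagonal form of Definition~\ref{def:strong_promise_gate}. When the promise register is $\ket{0}$, the blocked computation reproduces $\mathcal{V}_2^{(n)}$ exactly. Only the middle gate depends on the external control. The uncontrolled parts cancel against their inverses when the control is $0$, so the controlled gate acts as the identity, as required. Summing the costs gives $O(n)$ gates and $O(\log n)$ depth.

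We expect the main obstacle to be Step~1. Each inner block is itself a $CCX$-ladder of length $s$ that must be implemented in $O(\log s)$ depth with only $O(1)$ extra clean space per block, since only about $2s$ promise qubits are available in total. One option is to invoke Corollary~\ref{cor:promise_ladder} recursively with a block-size promise. If that needs more promise qubits than we have, the fallback is to use the $2s$ promise qubits as two banks of $s$: alternate blocks use one bank while the idle blocks supply dirty ancillae to the other, in the same spirit as the splitting argument in the proof of Lemma~\ref{lem:parallel_control_gates}.
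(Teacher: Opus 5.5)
There is a genuine gap, and it is exactly the point you flag as the main obstacle. First, Step~1 computes the wrong block summary. The net action of $\mathcal{V}_2^{(n)}$ is to XOR into the final qubit the carry of the recurrence $y_{2i}=x_{2i}\oplus x_{2i-1}y_{2i-2}$. A block of $s=\lceil\sqrt n\rceil$ steps is summarized by a pair $(G_j,P_j)$ with $y_{\mathrm{out}}=G_j\oplus P_j\,y_{\mathrm{in}}$. Only $P_j$, the AND of the block's odd bits, is a multi-controlled $X$. $G_j$ is the block's local carry, which needs a genuine $CCX$ ladder of length $s$. Your remark that Step~1 is "equivalently" a per-block promise ladder is therefore not an equivalence: the ladder is what produces $G_j$, and it is the expensive part. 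As written, Steps~1--2 never compute the $G_j$, so the cross-block ladder has nothing correct to combine, and no qubit ever holds the full carry your single $CCCX$ in Step~3 needs.

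Second, your fixes for the block ladders do not fit $2\lceil\sqrt n\rceil$ promise qubits and $\mathcal{O}(\log n)$ depth. Corollary~\ref{cor:promise_ladder} needs, for each block, a size-$s$ register that is genuinely $\ket{0}$, so running all $s$ block ladders in parallel needs $\Theta(n)$ such qubits. Sharing two banks of $s$ serializes the blocks into depth $\Theta(\sqrt n\log n)$. Qubits borrowed dirty from idle blocks cannot replace those zeros, because the ladder construction computes ANDs into its ancillae and then uses them as controls, which is only correct if they start in $\ket{0}$.

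The missing idea is that the neighbouring block's qubits are not merely dirty but \emph{conditionally clean} exactly when they are needed. Using Equations~\eqref{eq:v_to_ladders} and~\eqref{eq:v_ckx_decomposition}, write $\mathcal{V}_2^{(n)}$ as a sum over blocks. Block $j$ runs $\mathcal{L}_2^\dagger$, then a $CCX$ into the final target controlled by its local carry and by a qubit $a_{j+1}$, then $\mathcal{L}_2$. Here $a_{j+1}$ holds the AND of all odd bits between block $j$ and the target. The $a_j$ occupy $s$ promise qubits and are computed and uncomputed by ladders of multi-controlled $X$ gates (Corollary~\ref{cor:ckx_ladder}).

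When the outer promise holds, block $j$'s middle $CCX$ can fire only if $a_{j+1}=1$, which forces the odd bits of block $j+1$ to be all ones. After conjugating them with $X$ they are $\ket{0}^{\otimes s}$, so by Figure~\ref{fig:promise_controlled_conjugation} they serve as the promise register for block $j$'s $\mathcal{L}_2^\dagger/\mathcal{L}_2$ pair. Whenever that promise fails, the middle gate is idle and the two promise gates cancel. Adjacent blocks conflict, so this is done in two rounds. Only the block next to the target has no neighbour to borrow from, and it gets the remaining $s$ fresh promise qubits.

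Finally, under the external control only the $s$ middle $CCX$ gates into the common target need controlling. They form a controlled fan-in, implemented with fan-in and second-order fan-out gates as in Equation~\eqref{eq:controlled_fanin}, rather than a single $CCCX$.
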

\begin{proof}
    The $\mathcal{V}_2^{(n)}$ operator admits the following decomposition:
    \begin{equation}\label{eq:v_ckx_decomposition}
        \begin{quantikz}[row sep={\the\circuitrowsep,between origins}, column sep=\the\circuitcolsep, align equals at=2.5]
            & & \gate[3]{\mathcal{V}_2}\vqw{3} & & \\
            & & & & \\
            & \qwbundle{} & & & \\
            & & \targ{} & & 
        \end{quantikz}
        \;=\;
        \begin{quantikz}[row sep={\the\circuitrowsep,between origins}, column sep=\the\circuitcolsep, align equals at=2.5]
            & & \ctrl{1} & & & \\
            & & \ctrl{1} & & & \\
            & \qwbundle{} & \push{\partctrl}\vqw{1} & \gate{\mathcal{V}_2}\vqw{1} & & \\
            & & \targ{} & \targ{} & & 
        \end{quantikz}
    \end{equation}
    where the $\partctrl$ symbol indicates controls only on alternating qubits:
    \begin{equation}\label{eq:alternating_ctrl}
        \begin{quantikz}[row sep={.5cm,between origins}, column sep=\the\circuitcolsep, align equals at=1.5]
            & \qwbundle{k} & \push{\partctrl}\vqw{1} & & \\
            & & \targ{} & & 
        \end{quantikz}
        \;:=\;
        \begin{quantikz}[row sep={.5cm,between origins}, column sep=\the\circuitcolsep, align equals at=5.5]
            & & & & \\
            & & \ctrl{2} & & \\
            & & & & \\
            & & \ctrl{2} & & \\
            & & & & \\
            & & \ctrl{1} & & \\
            \setwiretype{n} & & \push{\vdots} & & \\
            & & \wire[u,shorten >=0.15cm][1]{a} & & \\
            & & \ctrl{1}\vqw{-1} & & \\
            & & \targ{} & & 
        \end{quantikz}
    \end{equation}
    By Equations~\eqref{eq:v_to_ladders} and~\eqref{eq:v_ckx_decomposition}, we obtain the following implementation of the $\mathcal{V}_2^{(n)}$ operator:
    \begin{equation}
        \begin{quantikz}[row sep={\the\circuitrowsep,between origins}, column sep=\the\circuitcolsep, align equals at=6]
            \setwiretype{n} & & & & & & & & & & & & \push{\vdots} & & \\ 
            & \qwbundle{\!\!\!\!\!\!\!2\lceil\sqrtsign{n}\rceil} & & & & & & & & \gate[2]{\mathcal{L}^\dag_2} & & \gate[2]{\mathcal{L}_2} & \push{\partctrl}\vqw{2}\wire[u,shorten >=0.18cm][1]{a} & \dots & \\
            & & & & & & & & & & \ctrl{1} & & & \dots & \\
            & \qwbundle{\!\!\!\!\!\!\!2\lceil\sqrtsign{n}\rceil} & & & & & & \gate[2]{\mathcal{L}^\dag_2} & & \gate[2]{\mathcal{L}_2} & \push{\partctrl}\vqw{2} & & \push{\partctrl}\vqw{2} & \dots & \\
            & & & & & & & & \ctrl{1} & & & & & \dots & \\
            & \qwbundle{\!\!\!\!\!\!\!2\lceil\sqrtsign{n}\rceil} & & & & \gate[2]{\mathcal{L}^\dag_2} & & \gate[2]{\mathcal{L}_2} & \push{\partctrl}\vqw{2} & & \push{\partctrl}\vqw{2} & & \push{\partctrl}\vqw{2} & \dots & \\
            & & & & & & \ctrl{1} & & & & & & & \dots & \\
            & \qwbundle{\!\!\!\!\!\!\!2\lceil\sqrtsign{n}\rceil} & & \gate[2]{\mathcal{L}^\dag_2} & & \gate[2]{\mathcal{L}_2} & \push{\partctrl}\vqw{2} & & \push{\partctrl}\vqw{2} & & \push{\partctrl}\vqw{2} & & \push{\partctrl}\vqw{2} & \dots & \\
            & & & & \ctrl{1} & & & & & & & & & \dots & \\
            & & & & \ctrl{1} & & \ctrl{1} & & \ctrl{1} & & \ctrl{1} & & \ctrl{1} & \dots & \\
            & & & & \targ{} & & \targ{} & & \targ{} & & \targ{} & & \targ{} & \dots & 
        \end{quantikz}
    \end{equation}
    where the topmost register has size at most $2\lceil\sqrt{n}\rceil$.

    We then introduce $\lceil\sqrt{n}\rceil$ clean ancilla qubits to parallelize the $\mathcal{L}_2$ gates as follows:
    \begin{equation}
        \begin{quantikz}[row sep={.6cm,between origins}, column sep=\the\circuitcolsep, align equals at=7.5]
            \setwiretype{n} & & & & & & \push{\vdots} & & & & & & & \push{\vdots} & & & & \push{\vdots} & & & & \\
            & \qwbundle{\!\!\!\!\!\!\!2\lceil\sqrtsign{n}\rceil} & & & & & \push{\partctrl}\wire[u,shorten >=0.1cm][1]{a} & \dots & \gate[2]{\mathcal{L}^\dag_2} & & & & & \wire[u,shorten >=0.1cm][1]{a} & \dots & \gate[2]{\mathcal{L}_2} & \dots & \push{\partctrl}\wire[u,shorten >=0.1cm][1]{a} & & & & \\
            & & & & & & \ctrl{-1} & \dots & & & & & \ctrl{1} & & \dots & & \dots & \ctrl{-1} & & & & \\
            \lstick{\ket{0}} & & & & & \targ{} & \ctrl{-1} & \dots & & & & & \ctrl{10} & & \dots & & \dots & \ctrl{-1} & \targ{} & & & \rstick{\ket{0}} \\
            & \qwbundle{\!\!\!\!\!\!\!2\lceil\sqrtsign{n}\rceil} & & & & \push{\partctrl}\vqw{-1} & & \dots & \gate[2]{\mathcal{L}^\dag_2} & & & & & & \dots & \gate[2]{\mathcal{L}_2} & \dots & & \push{\partctrl}\vqw{-1} & & & \\
            & & & & & \ctrl{-1} & & \dots & & & & \ctrl{1} & & & \dots & & \dots & & \ctrl{-1} & & & \\
            \lstick{\ket{0}} & & & & \targ{} & \ctrl{-1} & & \dots & & & & \ctrl{7} & & & \dots & & \dots & & \ctrl{-1} & \targ{} & & \rstick{\ket{0}} \\
            & \qwbundle{\!\!\!\!\!\!\!2\lceil\sqrtsign{n}\rceil} & & & \push{\partctrl}\vqw{-1} & & & \dots & \gate[2]{\mathcal{L}^\dag_2} & & & & & & \dots & \gate[2]{\mathcal{L}_2} & \dots & & & \push{\partctrl}\vqw{-1} & & \\
            & & & & \ctrl{-1} & & & \dots & & & \ctrl{1} & & & & \dots & & \dots & & & \ctrl{-1} & & \\
            \lstick{\ket{0}} & & & \targ{} & \ctrl{-1} & & & \dots & & & \ctrl{4} & & & & \dots & & \dots & & & \ctrl{-1} & \targ{} & \rstick{\ket{0}} \\
            & \qwbundle{\!\!\!\!\!\!\!2\lceil\sqrtsign{n}\rceil} & & \push{\partctrl}\vqw{-1} & & & & \dots & \gate[2]{\mathcal{L}^\dag_2} & & & & & & \dots & \gate[2]{\mathcal{L}_2} & \dots & & & & \push{\partctrl}\vqw{-1} & \\
            & & & \ctrl{-1} & & & & \dots & & \ctrl{1} & & & & & \dots & & \dots & & & & \ctrl{-1} & \\
            & & & \ctrl{-1} & & & & \dots & & \ctrl{1} & & & & & \dots & & \dots & & & & \ctrl{-1} & \\
            & & & & & & & \dots & & \targ{} & \targ{} & \targ{} & \targ{} & \targ{}\vqw{-12} & \dots & & \dots & & & & & 
        \end{quantikz}
    \end{equation}
    We can then apply the circuit identity of Figure~\ref{fig:promise_controlled_conjugation}, where $V$ and $V^\dag$ correspond to the $\mathcal{L}_2^\dag$ and $\mathcal{L}_2$ operators and $U$ corresponds to the $CCX$ gate in between.
    Each $CCX$ gate in the middle of the circuit acts nontrivially only when the ancilla qubit on which it is controlled is in the $\ket{1}$ state.
    In turn, this ancilla is in the $\ket{1}$ state only when the register below it is in the $\ket{1}$ state.
    For each $\mathcal{L}_2$ gate, the register on which the associated $CCX$ gate is conditioned can therefore serve as a promise register.
    Since each promise gate uses the register below it as its promise register, we cannot apply all promise gates simultaneously; instead, we proceed in two rounds.
    We obtain the following circuit:
    \begin{equation}
        \begin{quantikz}[row sep={.6cm,between origins}, column sep=\the\circuitcolsep, align equals at=8]
            \setwiretype{n} & & &[-0.11cm] &[-0.27cm] &[-0.27cm] &[-0.18cm] \push{\vdots} &[-0.18cm] &[-0.18cm] &[-0.27cm] \push{\vdots} &[-0.18cm] &[-0.18cm] &[-0.18cm] \push{\vdots} &[-0.18cm] &[-0.18cm] \push{\vdots} & & &[-0.18cm] &[-0.18cm] &[-0.18cm] \push{\vdots} &[-0.18cm] &[-0.18cm] &[-0.27cm] &[-0.18cm] &[-0.18cm] \push{\vdots} &[-0.18cm] &[-0.27cm] &[-0.27cm] &[-0.18cm] \\
            & \qwbundle{\!\!\!\!\!\!\!2\lceil\sqrtsign{n}\rceil} & & & & & \push{\partctrl}\wire[u,shorten >=0.1cm][1]{a} & \dots & \gate{X} & \push{\partdiamondsuit}\wire[u,shorten >=0.1cm][1]{a} & & & \wire[u,shorten >=0.1cm][1]{a} & \dots & \push{\partdiamondsuit}\wire[u,shorten >=0.1cm][1]{a} & \gate{X} & \gate[2]{\mathcal{L}^\dag_2} & & & \wire[u,shorten >=0.1cm][1]{a} & \dots & \gate[2]{\mathcal{L}_2} & & \dots & \push{\partctrl}\wire[u,shorten >=0.1cm][1]{a} & & & & \\
            & & & & & & \ctrl{-1} & \dots & & & & & & \dots & & & & & \ctrl{1} & & \dots & & & \dots & \ctrl{-1} & & & & \\
            \lstick{\ket{0}} & & & & & \targ{} & \ctrl{-1} & \dots & & & & & & \dots & & & & & \ctrl{11} & & \dots & & & \dots & \ctrl{-1} & \targ{} & & & \rstick{\ket{0}} \\
            & \qwbundle{\!\!\!\!\!\!\!2\lceil\sqrtsign{n}\rceil} & & & & \push{\partctrl}\vqw{-1} & & \dots & & \gate[2]{\mathcal{L}^\dag_2} & & & & \dots & \gate[2]{\mathcal{L}_2} & \gate{X} & \push{\partdiamondsuit}\vqw{-2} & & & & \dots & \push{\partdiamondsuit}\vqw{-2} & \gate{X} & \dots & & \push{\partctrl}\vqw{-1} & & & \\
            & & & & & \ctrl{-1} & & \dots & & & & \ctrl{1} & & \dots & & & & & & & \dots & & & \dots & & \ctrl{-1} & & & \\
            \lstick{\ket{0}} & & & & \targ{} & \ctrl{-1} & & \dots & & & & \ctrl{8} & & \dots & & & & & & & \dots & & & \dots & & \ctrl{-1} & \targ{} & & \rstick{\ket{0}} \\
            & \qwbundle{\!\!\!\!\!\!\!2\lceil\sqrtsign{n}\rceil} & & & \push{\partctrl}\vqw{-1} & & & \dots & \gate{X} & \push{\partdiamondsuit}\vqw{-2} & & & & \dots & \push{\partdiamondsuit}\vqw{-2} & \gate{X} & \gate[2]{\mathcal{L}^\dag_2} & & & & \dots & \gate[2]{\mathcal{L}_2} & & \dots & & & \push{\partctrl}\vqw{-1} & & \\
            & & & & \ctrl{-1} & & & \dots & & & & & & \dots & & & & \ctrl{1} & & & \dots & & & \dots & & & \ctrl{-1} & & \\
            \lstick{\ket{0}} & & & \targ{} & \ctrl{-1} & & & \dots & & & & & & \dots & & & & \ctrl{5} & & & \dots & & & \dots & & & \ctrl{-1} & \targ{} & \rstick{\ket{0}} \\
            & \qwbundle{\!\!\!\!\!\!\!2\lceil\sqrtsign{n}\rceil} & & \push{\partctrl}\vqw{-1} & & & & \dots & & \gate[2]{\mathcal{L}^\dag_2} & & & & \dots & \gate[2]{\mathcal{L}_2} & \gate{X} & \push{\partdiamondsuit}\vqw{-2} & & & & \dots & \push{\partdiamondsuit}\vqw{-2} & \gate{X} & \dots & & & & \push{\partctrl}\vqw{-1} & \\
            & & & \ctrl{-1} & & & & \dots & & & \ctrl{1} & & & \dots & & & & & & & \dots & & & \dots & & & & \ctrl{-1} & \\
            & & & \ctrl{-1} & & & & \dots & & & \ctrl{2} & & & \dots & & & & & & & \dots & & & \dots & & & & \ctrl{-1} & \\
            \lstick{\ket{0}} & \qwbundle{\!\!\!\!\!\!\!\lceil\sqrtsign{n}\rceil} & & & & & & \dots & & \push{\diamondsuit}\vqw{-2} & & & & \dots & \push{\diamondsuit}\vqw{-2} & & & & & & \dots & & & \dots & & & & & \rstick{\ket{0}} \\
            & & & & & & & \dots & & & \targ{} & \targ{} & \targ{}\vqw{-13} & \dots & & & & \targ{} & \targ{} & \targ{}\vqw{-13} & \dots & & & \dots & & & & & 
        \end{quantikz}
    \end{equation}
    where $\lceil\sqrt{n}\rceil$ clean ancilla qubits were inserted at the bottom to serve as the promise register for the bottommost promise gates, and where $\partdiamondsuit$ denotes $\diamondsuit$ symbols on alternating qubits, analogously to $\partctrl$ in Equation~\eqref{eq:alternating_ctrl}.

    We then replace the clean ancilla qubits with a promise register of size $2\lceil\sqrt{n}\rceil$ and add a control to each gate to obtain a controlled strong promise gate whose target unitary is the $\mathcal{V}_2$ operator.
    By Figure~\ref{fig:controlled_conjugation}, only the $CCX$ gates targeting the bottom qubit need to be controlled.
    We obtain the circuit presented in Figure~\ref{fig:ctrl_promise_v_sqrt}.

    We now analyze the cost of the resulting circuit.
    Using $\lceil\sqrt{n}\rceil$ qubits from the promise register, the first and last ladders of multi-controlled $X$ gates can be implemented with $\mathcal{O}(n)$ gates and $\mathcal{O}(\log n)$ circuit depth, as stated in Corollary~\ref{cor:ckx_ladder}.
    The $CCX$ gates targeting the same qubit and controlled by the bottom qubits can be implemented as follows:
    \begin{equation}\label{eq:controlled_fanin}
        \begin{quantikz}[row sep={.6cm,between origins}, column sep=\the\circuitcolsep, align equals at=6, execute at end picture={
            \draw [line width=0.8pt] (ctrl_first.north) -- (ccxbox.south -| ctrl_first.north);
        }]
            \setwiretype{n} & & \gategroup[10, steps=5, style={name=ccxbox, inner sep=1pt}]{} & & & \push{\vdots} & & \\
            & & & & \ctrl{1} & \wire[u,shorten >=0.1cm][1]{a} & \dots & & \\
            & & & & \ctrl{7} & & \dots & & \\
            & & & & & & \dots & & \\
            & & & \ctrl{1} & & & \dots & & \\
            & & & \ctrl{4} & & & \dots & & \\
            & & & & & & \dots & & \\
            & & \ctrl{1} & & & & \dots & & \\
            & & \ctrl{1} & & & & \dots & & \\
            & & \targ{} & \targ{} & \targ{} & \targ{}\vqw{-8} & \dots & & \\
            & & & & |[alias=ctrl_first, fill, circle, minimum size=4pt, inner sep=0pt]|{} & & \dots & & 
        \end{quantikz}
        \;=\;
        \begin{quantikz}[row sep={.6cm,between origins}, column sep=\the\circuitcolsep, align equals at=6, execute at end picture={
            \draw [line width=0.8pt] (ctrl_first.north) -- (ccxbox.south -| ctrl_first.north);
            \draw [line width=0.8pt] (ctrl_second.north) -- (ccxboxdag.south -| ctrl_second.north);
        }]
            \setwiretype{n} & & & & \push{\vdots} & & \push{\vdots}\gategroup[10, steps=1, style={name=ccxbox, inner sep=1pt}]{} & & \push{\vdots} & & & \push{\vdots}\gategroup[7, steps=1, style={name=ccxboxdag, inner sep=1pt}]{} & & \\
            & & & & \wire[u,shorten >=0.1cm][1]{a} & \dots & \ctrl{1} & \dots & \wire[u,shorten >=0.1cm][1]{a} & & & \ctrl{1} & & \\
            & & & & & \dots & \ctrl{1} & \dots & & & & \ctrl{1} & & \\
            & & & \ctrl{6} & & \dots & \targ{} & \dots & & \ctrl{6} & & \targ{} & & \\
            & & & & & \dots & \ctrl{1} & \dots & & & & \ctrl{1} & & \\
            & & & & & \dots & \ctrl{1} & \dots & & & & \ctrl{1} & & \\
            & & \ctrl{3} & & & \dots & \targ{} & \dots & & & \ctrl{3} & \targ{} & & \\
            & & & & & \dots & \ctrl{1} & \dots & & & & & & \\
            & & & & & \dots & \ctrl{1} & \dots & & & & & & \\
            & & \targ{} & \targ{} & \targ{}\vqw{-8} & \dots & \targ{} & \dots & \targ{}\vqw{-8} & \targ{} & \targ{} & & & \\
            & & & & & & |[alias=ctrl_first, fill, circle, minimum size=4pt, inner sep=0pt]|{} & \dots & & & & |[alias=ctrl_second, fill, circle, minimum size=4pt, inner sep=0pt]|{} & & 
        \end{quantikz}
    \end{equation}
    The two sets of $CX$ gates in Equation~\eqref{eq:controlled_fanin} sharing the same target correspond to fan-in gates, and can be implemented with $\mathcal{O}(n)$ gates and $\mathcal{O}(\log n)$ circuit depth (Lemma~\ref{lem:fanout}).
    Similarly, the two sets of $CCX$ gates in Equation~\eqref{eq:controlled_fanin} all controlled by the same control qubit correspond to the second-order fan-out operator and can also be implemented with $\mathcal{O}(n)$ gates and $\mathcal{O}(\log n)$ circuit depth (Lemma~\ref{lem:fanout}).
    Finally, by Corollary~\ref{cor:promise_ladder}, each promise gate with target unitary $\mathcal{L}_2$ or $\mathcal{L}_2^\dag$ can be implemented with $\mathcal{O}(\sqrt{n})$ gates and $\mathcal{O}(\log n)$ circuit depth.
    Thus, the total gate count and circuit depth of the circuit of Figure~\ref{fig:ctrl_promise_v_sqrt} are $\mathcal{O}(n)$ and $\mathcal{O}(\log n)$, respectively.
\end{proof}

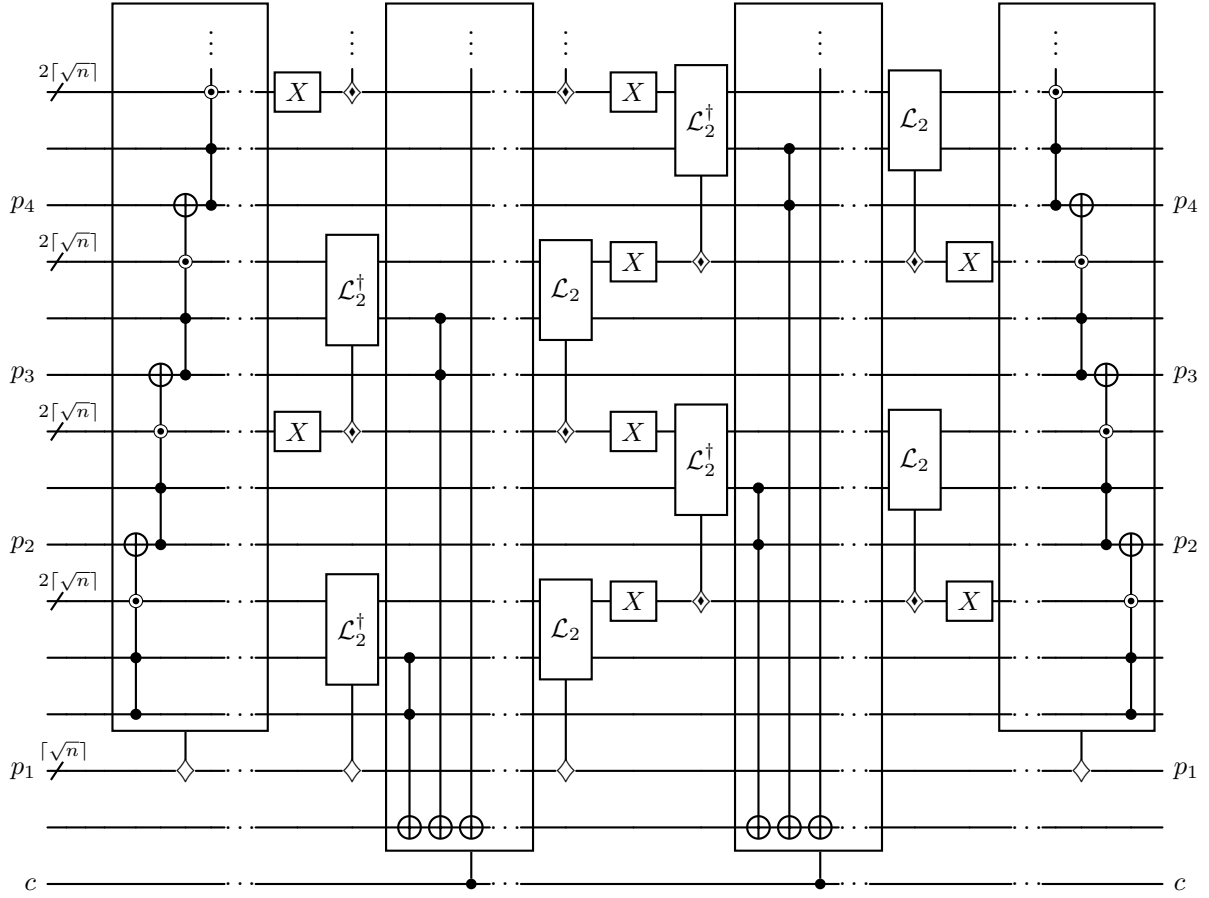
\begin{figure}[t]
    \makebox[\textwidth][c]{%
    \begin{quantikz}[row sep={\the\circuitrowsep,between origins}, column sep=\the\circuitcolsep, align equals at=8.5, execute at end picture={
        \draw [line width=0.8pt] (diamondccx.north) -- (ccxbox.south -| diamondccx.north);
        \draw [line width=0.8pt] (diamondccxdag.north) -- (ccxboxdag.south -| diamondccxdag.north);
        \draw [line width=0.8pt] (ctrl_first.north) -- (fanout_first.south -| ctrl_first.north);
        \draw [line width=0.8pt] (ctrl_second.north) -- (fanout_second.south -| ctrl_second.north);
    }]
        \setwiretype{n} & & & & \gategroup[13, steps=5, style={name=ccxbox, inner sep=1pt}]{} &[-0.25cm] &[-0.25cm] &[-0.17cm] \push{\vdots} &[-0.17cm] & &[-0.17cm] \push{\vdots} & \gategroup[15, steps=4, style={name=fanout_first, inner sep=1pt}]{} &[-0.17cm] &[-0.17cm] \push{\vdots} &[-0.17cm] & \push{\vdots} & & & \gategroup[15, steps=4, style={name=fanout_second, inner sep=1pt}]{} &[-0.17cm] &[-0.17cm] \push{\vdots} &[-0.17cm] & &[-0.17cm] & \gategroup[13, steps=5, style={name=ccxboxdag, inner sep=1pt}]{} &[-0.17cm] \push{\vdots} &[-0.17cm] &[-0.25cm] &[-0.25cm] & \\
        & \qwbundle{\!\!\!\!\!\!\!2\lceil\sqrtsign{n}\rceil} & & & & & & \push{\partctrl}\wire[u,shorten >=0.18cm][1]{a} & \dots & \gate{X} & \push{\partdiamondsuit}\wire[u,shorten >=0.18cm][1]{a} & & & \wire[u,shorten >=0.18cm][1]{a} & \dots & \push{\partdiamondsuit}\wire[u,shorten >=0.18cm][1]{a} & \gate{X} & \gate[2]{\mathcal{L}^\dag_2} & & & \wire[u,shorten >=0.18cm][1]{a} & \dots & \gate[2]{\mathcal{L}_2} & & \dots & \push{\partctrl}\wire[u,shorten >=0.18cm][1]{a} & & & & \\
        & & & & & & & \ctrl{-1} & \dots & & & & & & \dots & & & & & \ctrl{1} & & \dots & & & \dots & \ctrl{-1} & & & & \\
        \lstick{$p_4$} & & & & & & \targ{} & \ctrl{-1} & \dots & & & & & & \dots & & & & & \ctrl{11} & & \dots & & & \dots & \ctrl{-1} & \targ{} & & & \rstick{$p_4$} \\
        & \qwbundle{\!\!\!\!\!\!\!2\lceil\sqrtsign{n}\rceil} & & & & & \push{\partctrl}\vqw{-1} & & \dots & & \gate[2]{\mathcal{L}^\dag_2} & & & & \dots & \gate[2]{\mathcal{L}_2} & \gate{X} & \push{\partdiamondsuit}\vqw{-2} & & & & \dots & \push{\partdiamondsuit}\vqw{-2} & \gate{X} & \dots & & \push{\partctrl}\vqw{-1} & & & \\
        & & & & & & \ctrl{-1} & & \dots & & & & \ctrl{1} & & \dots & & & & & & & \dots & & & \dots & & \ctrl{-1} & & & \\
        \lstick{$p_3$} & & & & & \targ{} & \ctrl{-1} & & \dots & & & & \ctrl{8} & & \dots & & & & & & & \dots & & & \dots & & \ctrl{-1} & \targ{} & & \rstick{$p_3$} \\
        & \qwbundle{\!\!\!\!\!\!\!2\lceil\sqrtsign{n}\rceil} & & & & \push{\partctrl}\vqw{-1} & & & \dots & \gate{X} & \push{\partdiamondsuit}\vqw{-2} & & & & \dots & \push{\partdiamondsuit}\vqw{-2} & \gate{X} & \gate[2]{\mathcal{L}^\dag_2} & & & & \dots & \gate[2]{\mathcal{L}_2} & & \dots & & & \push{\partctrl}\vqw{-1} & & \\
        & & & & & \ctrl{-1} & & & \dots & & & & & & \dots & & & & \ctrl{1} & & & \dots & & & \dots & & & \ctrl{-1} & & \\
        \lstick{$p_2$} & & & & \targ{} & \ctrl{-1} & & & \dots & & & & & & \dots & & & & \ctrl{5} & & & \dots & & & \dots & & & \ctrl{-1} & \targ{} & \rstick{$p_2$} \\
        & \qwbundle{\!\!\!\!\!\!\!2\lceil\sqrtsign{n}\rceil} & & & \push{\partctrl}\vqw{-1} & & & & \dots & & \gate[2]{\mathcal{L}^\dag_2} & & & & \dots & \gate[2]{\mathcal{L}_2} & \gate{X} & \push{\partdiamondsuit}\vqw{-2} & & & & \dots & \push{\partdiamondsuit}\vqw{-2} & \gate{X} & \dots & & & & \push{\partctrl}\vqw{-1} & \\
        & & & & \ctrl{-1} & & & & \dots & & & \ctrl{1} & & & \dots & & & & & & & \dots & & & \dots & & & & \ctrl{-1} & \\
        & & & & \ctrl{-1} & & & & \dots & & & \ctrl{2} & & & \dots & & & & & & & \dots & & & \dots & & & & \ctrl{-1} & \\
        \lstick{$p_1$} & \qwbundle{\!\!\!\!\!\!\!\lceil\sqrtsign{n}\rceil} & & & & & |[alias=diamondccx]| \push{\diamondsuit} & & \dots & & \push{\diamondsuit}\vqw{-2} & & & & \dots & \push{\diamondsuit}\vqw{-2} & & & & & & \dots & & & \dots & & |[alias=diamondccxdag]| \push{\diamondsuit} & & & \rstick{$p_1$} \\
        & & & & & & & & \dots & & & \targ{} & \targ{} & \targ{}\vqw{-13} & \dots & & & & \targ{} & \targ{} & \targ{}\vqw{-13} & \dots & & & \dots & & & & & \\
        \lstick{$c$} & & & & & & & & \dots & & & & & |[alias=ctrl_first, fill, circle, minimum size=4pt, inner sep=0pt]|{} & \dots & & & & & & |[alias=ctrl_second, fill, circle, minimum size=4pt, inner sep=0pt]|{} & \dots & & & \dots & & & & & \rstick{$c$}
    \end{quantikz}}
    \caption{Quantum circuit for a controlled strong promise gate whose target unitary is the $\mathcal{V}_2^{(n)}$ operator. $c$ denotes the control qubit and $\bs{p}$ denotes the promise register of size $2\lceil\sqrt{n}\rceil$.
    The $\partctrl$ and $\partdiamondsuit$ symbols represent $\bullet$ and $\diamondsuit$ on alternating qubits, as in Equation~\eqref{eq:alternating_ctrl}.}\label{fig:ctrl_promise_v_sqrt}
\end{figure}

\begin{theorem}\label{thm:comparator}
    The $n$-bit quantum--quantum comparator can be implemented over the $\{CCX$, $CX$, $X\}$ gate set with $\Theta(n)$ gates and $\Theta(\log n)$ circuit depth, without any ancilla qubits.
\end{theorem}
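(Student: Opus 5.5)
The plan is to start from the ancilla-free comparator of Figure~\ref{fig:comparator} and bound each of its three slices separately. Slices~1 and~3 consist of a layer of $X$ gates, a layer of parallel $CX$ gates, and the $CX$ ladders $\mathcal{L}_1$ and $\mathcal{L}_1^\dagger$. By Lemma~\ref{lem:ladder_cnot} (together with trivially parallel layers), each of these costs $\mathcal{O}(n)$ gates and $\mathcal{O}(\log n)$ depth without ancillae. Everything therefore reduces to implementing slice~2, the $\mathcal{V}_2^{(n)}$ operator whose target is $z$, with $\mathcal{O}(n)$ gates, $\mathcal{O}(\log n)$ depth and no ancilla. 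The matching lower bounds $\Omega(n)$ and $\Omega(\log n)$ were already established in Section~\ref{sec:preliminaries}, which gives the $\Theta$ statement.

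For $\mathcal{V}_2^{(n)}$, I would use the fact that it only XORs a Boolean function of the other qubits into $z$, so it is an involution. I would peel off the first $2\lceil\sqrt n\rceil$ ladder qubits adjacent to the bottom of the V, using Equation~\eqref{eq:v_ckx_decomposition} iterated $\lceil\sqrt n\rceil$ times. This writes $\mathcal{V}_2^{(n)}$ as two pieces:
\begin{itemize}
\item[(i)] a ladder of $\mathcal{O}(\sqrt n)$ multi-controlled $X$ gates onto $z$, with alternating controls;
\item[(ii)] a $\mathcal{V}_2$ on the remaining $n-\mathcal{O}(\sqrt n)$ qubits, whose effect on $z$ is gated by the conjunction of the alternating controls in the peeled block.
\end{itemize}
Piece~(ii) is exactly the shape of Figure~\ref{fig:conditionally_clean_ancillae}: a multi-controlled involution whose control register, after $X$-conjugation, is $\ket{0}$ precisely when the gate matters. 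I would apply its dirty-ancilla version (the right-hand side, valid because the target unitary squares to $I$). This uses the peeled block of $2\lceil\sqrt n\rceil$ qubits as the promise register of a controlled strong promise gate with target $\mathcal{V}_2$, which is precisely the object provided by Lemma~\ref{lem:promise_v_sqrt_ancillae} at cost $\mathcal{O}(n)$ gates and $\mathcal{O}(\log n)$ depth.

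The single dirty ancilla required by this identity, and by the $C^kX$ gates of Lemma~\ref{lem:optimal_ckx}, would be borrowed from qubits not touched by the gate being implemented. A natural choice is a qubit of the large remaining block while the $C^kX$ on the peeled block is applied, and conversely. The remaining piece~(i) is a ladder of multi-controlled $X$ gates on $\mathcal{O}(\sqrt n)$ qubits, all targeting $z$. For this I would use Corollary~\ref{cor:ckx_ladder}, with its $\mathcal{O}(\sqrt n)$ required ancillae taken as dirty qubits from the untouched large block; alternatively, I would apply Equation~\eqref{eq:controlled_fanin} together with Lemma~\ref{lem:fanout}, in which case the ancilla need is covered there.

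The main obstacle is the bookkeeping of the decomposition itself. Piece~(i) and piece~(ii) share the target $z$, and the peeled block must simultaneously serve as controls for piece~(i) and as the promise register for piece~(ii). I must check carefully that the $X$-conjugated alternating qubits are $\ket{0}^{\otimes 2\lceil\sqrt n\rceil}$ exactly on the inputs where the lower $\mathcal{V}_2$ influences $z$. Failing that, I would fall back to Figure~\ref{fig:promise_controlled_conjugation_dirty}, with $V=\mathcal{L}_2^\dagger$ on the peeled block, so that the promise condition holds by construction. I must also ensure that every dirty-ancilla borrowing is disjoint from the qubits being acted on. Once this is verified, summing the costs gives $\mathcal{O}(n)$ gates and $\mathcal{O}(\log n)$ depth.
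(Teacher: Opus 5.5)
Your handling of slices~1 and~3 matches the paper. So does your split of $\mathcal{V}_2^{(n)}$ into two pieces: (ii) a $\mathcal{V}_2$ on the large block, gated by the alternating qubits of a $\Theta(\sqrt n)$-qubit block adjacent to $z$ and handled by the dirty-ancilla form of Figure~\ref{fig:conditionally_clean_ancillae} plus Lemma~\ref{lem:promise_v_sqrt_ancillae}; and (i) a $\mathcal{V}_2$ on that small block.

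\textbf{The gap is piece~(i).} It is itself a $\mathcal{V}_2^{(\Theta(\sqrt n))}$ operator targeting $z$, and neither tool you cite implements it.
\begin{itemize}
\item Corollary~\ref{cor:ckx_ladder} concerns $\mathcal{L}_k$ ladders, where each $C^kX$ has its own target that feeds the next gate. Its ancillas are also clean, created in $\ket{0}$ by Equation~\eqref{eq:ckx_to_toffoli_ancilla}, so dirty qubits borrowed from the large block do not qualify.
\item Equation~\eqref{eq:controlled_fanin} is a fan-in of gates whose control pairs are disjoint. Your gates instead have nested control sets $\{x_{2j}\}\cup\{x_{2i+1} : i\ge j\}$. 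Reducing them to disjoint pairs would require computing the prefix products, which is the ladder you are trying to avoid.
\end{itemize}
If you implement the pieces gate by gate, the $\Theta(\sqrt n)$ multi-controlled gates all act on $z$ and therefore serialize. This gives depth $\Omega(\sqrt n)$, or $\Theta(\sqrt n\log n)$ via Lemma~\ref{lem:optimal_ckx}. The gate count stays $\mathcal{O}(n)$, but the $\mathcal{O}(\log n)$ depth bound is lost.

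The missing idea is recursion: implement piece~(i) with the same construction on the small block. This gives
$C(n)=\Theta(n)+C(\mathcal{O}(\sqrt n))$ and $D(n)=\Theta(\log n)+D(\mathcal{O}(\sqrt n))$.
Since the logarithm of the instance size roughly halves at each level, these solve to $\Theta(n)$ and $\Theta(\log n)$. This is how the paper closes the argument.

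\textbf{There is also a bookkeeping error in piece~(ii).} In Figure~\ref{fig:conditionally_clean_ancillae}, the promise register is exactly the control register of the $C^kX$ gates. Only the alternating qubits of your peeled block are controls; the other half are unconstrained when the gate fires. So the whole $2\lceil\sqrt n\rceil$-qubit block cannot serve as the promise register.
\begin{itemize}
\item You actually get only $\lceil\sqrt n\rceil$ promise qubits, half of what Lemma~\ref{lem:promise_v_sqrt_ancillae} requires. Peel $4\lceil\sqrt n\rceil$ qubits instead. This is only a constant-factor change, and the paper's choice $\beta=2\lceil\sqrt n\rceil$ needs the same adjustment.
\item The non-control qubits of the block are also the only possible home for the dirty ancilla $\psi$ of that identity. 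The controlled promise gate acts on all of the large block and on $z$, and $\mathcal{V}_2^{(n)}$ involves all $2n+1$ qubits. So the large block can lend the internal ancilla of the $C^{k}X$ gates from Lemma~\ref{lem:optimal_ckx}, but not $\psi$.
\item With your stated promise register (the entire peeled block), no qubit would be left for $\psi$. Once you restrict the promise register to the control qubits, the remaining qubits of the block can host $\psi$.
\end{itemize}
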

\begin{proof}
    The two layers of parallel $X$ and $CX$ gates in slices 1 and 3 of Figure~\ref{fig:comparator} contribute $\mathcal{O}(n)$ gates with constant depth.
    The two ladders of $CX$ gates in these slices can each be implemented with $\mathcal{O}(n)$ $CX$ gates and $\mathcal{O}(\log n)$ circuit depth (Lemma~\ref{lem:ladder_cnot}).
    It remains to evaluate the cost of implementing the $\mathcal{V}_2^{(n)}$ operator in slice 2 of Figure~\ref{fig:comparator}.

    Based on the circuit identity of Figure~\ref{fig:promise_controlled_conjugation_dirty} and Equation~\eqref{eq:v_ckx_decomposition}, we have:
    \begin{equation}
        \begin{quantikz}[row sep={\the\circuitrowsep,between origins}, column sep=\the\circuitcolsep, align equals at=2]
            & \qwbundle{\alpha} & \gate[2]{\mathcal{V}_2} & \\
            & \qwbundle{\beta} & & \\ 
            & & \targ{}\vqw{-1} &
        \end{quantikz}
        \;=\;
        \begin{quantikz}[row sep={\the\circuitrowsep,between origins}, column sep=\the\circuitcolsep, align equals at=2]
            & \qwbundle{\alpha} & & & \gate{\mathcal{V}_2} & & & & \gate{\mathcal{V}^\dag_2} & & & \\
            & \qwbundle{\beta} & \push{\partctrl}\vqw{2} & \gate{X} & \push{\partdiamondsuit}\vqw{-1} &  \gate{X} & \push{\partctrl}\vqw{2} & \gate{X} & \push{\partdiamondsuit}\vqw{-1} &  \gate{X} & \gate{\mathcal{V}_2}\vqw{1} & \\
            & & & & \targ{}\vqw{-1} & & & & \targ{}\vqw{-1} & & \targ{} & \\
            \lstick{\ket{\psi}} & & \targ{} & & \ctrl{-1} & & \targ{} & & \ctrl{-1} & & & \rstick{\ket{\psi}} 
        \end{quantikz}
    \end{equation}
    where the dirty ancilla can be taken from the register of size $\beta$, and where the $\partctrl$ and $\partdiamondsuit$ symbols represent $\bullet$ and $\diamondsuit$ on alternating qubits, as in Equation~\eqref{eq:alternating_ctrl}.

    Let $\beta = 2\lceil\sqrt{n}\rceil$ and $\alpha = 2n - \beta$.
    The two $C^{\lceil\sqrt{n}\rceil} X$ gates can be implemented with $\mathcal{O}(\sqrt{n})$ gates and $\mathcal{O}(\log n)$ circuit depth (Lemma~\ref{lem:optimal_ckx}).
    By Lemma~\ref{lem:promise_v_sqrt_ancillae}, the two controlled strong promise gates whose target unitary is the $\mathcal{V}_2$ operator can each be implemented with $\mathcal{O}(n)$ $\{CCX, CX, X\}$ gates and $\mathcal{O}(\log n)$ circuit depth.
    We then implement the $\mathcal{V}_2$ operator on the bottom $\beta$ qubits recursively.
    The total gate count satisfies
    \begin{equation}
        C(n) = \Theta(n) + C(2\lceil\sqrt{n}\rceil)
    \end{equation}
    and the total circuit depth satisfies
    \begin{equation}
        D(n) = \Theta(\log n) + D(2\lceil\sqrt{n}\rceil)
    \end{equation}
    which yield $C(n) = \Theta(n)$ and $D(n) = \Theta(\log n)$, respectively.
\end{proof}

As a direct consequence of Theorem~\ref{thm:comparator}, we obtain an efficient construction for the $k$-controlled comparator.
\begin{corollary}\label{cor:controlled_comparator}
    The $k$-controlled $n$-bit (where $n>1$) quantum--quantum comparator can be implemented over the $\{CCX$, $CX$, $X\}$ gate set with $\Theta(k + n)$ gates and $\Theta(\log(kn))$ circuit depth, without any ancilla qubits.
\end{corollary}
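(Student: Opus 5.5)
Starting from the decomposition of Figure~\ref{fig:comparator}, the comparator splits into slices 1 and 3, which are the outer conjugating layers (parallel $X$ gates together with $CX$ ladders), and slice 2, which is the $\mathcal{V}_2^{(n)}$ operator whose final target is $z$. Only the $CCX$ gates that target $z$ act nontrivially on the output; everything else is conjugation. By the controlled conjugation identity of Figure~\ref{fig:controlled_conjugation}, adding $k$ controls to the comparator therefore requires no controls on slices 1 and 3. These slices are implemented uncontrolled exactly as in the proof of Theorem~\ref{thm:comparator}, with $\mathcal{O}(n)$ gates and $\mathcal{O}(\log n)$ depth (Lemma~\ref{lem:ladder_cnot}). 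The remaining task is to implement $C^k\mathcal{V}_2^{(n)}$, where the controls act only on the target $z$.

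For that part, the plan is to reduce $C^k\mathcal{V}_2^{(n)}$ to a singly controlled $\mathcal{V}_2^{(n)}$ via the toggle-detection identity of Figure~\ref{fig:toggle_detection}. Since $\mathcal{V}_2^{(n)}$ is an involution (its last gate toggles $z$, and $\mathcal{V}_2^2=I$ because it is a conjugation of a single $CCX$), the dirty-ancilla version applies. Concretely, a dirty qubit $\psi$ is borrowed from the comparator's data registers; the hypothesis $n>1$ guarantees that such a qubit is available outside the $k$ controls. The construction then applies $C^kX$ onto $\psi$, then $C\mathcal{V}_2$ controlled by $\psi$, then $C^kX$ again, then $C\mathcal{V}_2$ again. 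The catch is that $\psi$ must not lie in the support of $\mathcal{V}_2$, so it cannot be a data qubit. It is cleaner to use Figure~\ref{fig:toggle_detection} only for the $C^kX$ on $z$ inside the decomposition, as follows. By Equation~\eqref{eq:v_to_ladders}, $C^k\mathcal{V}_2 = \mathcal{L}_2\,(C^{k}CCX)\,\mathcal{L}_2^\dagger$ on the data registers, so the controls only land on a single $C^{k+2}X$ gate with target $z$. If that were all, Lemma~\ref{lem:optimal_ckx} would finish the job, but the uncontrolled $\mathcal{L}_2$ ladders are not cheap without ancillae. I would therefore instead reuse the recursive structure of the proof of Theorem~\ref{thm:comparator}.

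Specifically, I would run the same recursion: the two controlled strong promise $\mathcal{V}_2$ gates (Lemma~\ref{lem:promise_v_sqrt_ancillae}) and the $C^{\lceil\sqrt n\rceil}X$ gates are unchanged, since they serve only to compute $z$'s toggle. The only change is that the innermost $\mathcal{V}_2$ on the bottom $\beta$ qubits, whose effect on $z$ is a toggle, becomes $k$-controlled, and likewise the outer promise gates' toggles of $z$ must be $k$-controlled. I would handle all of these uniformly by a single toggle-detection step. First compute, by a $C^kX$ costing $\mathcal{O}(k)$ gates and $\mathcal{O}(\log k)$ depth (Lemma~\ref{lem:optimal_ckx}, with a dirty ancilla from the data), the AND of the controls into a borrowed dirty qubit $\psi$. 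Then replace every gate targeting $z$ by its $\psi$-controlled version, which in the constructions above are fan-in $CX$/$CCX$ gates onto $z$ that already carry one control $c$, becoming $C^3X$ and decomposed via Equation~\eqref{eq:cccx_to_ccx}. Finally apply the standard dirty-ancilla doubling: uncompute with $C^kX$ and repeat the $\psi$-controlled $z$-toggling part, which is valid because the overall action on $z$ is an XOR of a Boolean function of the data.

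The main obstacle is the bookkeeping of dirty ancillae. The qubit $\psi$ must be disjoint from every qubit touched in the $z$-toggling computation, while each $C^3X$ decomposition and each $C^kX$ needs its own dirty ancilla. For $\psi$ and the $C^kX$ ancilla I would use the control qubits and the splitting trick of Lemma~\ref{lem:parallel_control_gates}. In the degenerate case $k\le 1$ the corollary reduces to Theorem~\ref{thm:comparator} plus a single layer of added controls. The totals are $\mathcal{O}(n)+\mathcal{O}(k)$ gates and $\mathcal{O}(\log n+\log k)=\mathcal{O}(\log(kn))$ depth, and the matching lower bounds follow from the $C^kX$ reductions discussed in Section~\ref{sec:preliminaries}.
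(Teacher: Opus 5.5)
There are two genuine gaps.

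First, your claim that slices~1 and~3 of Figure~\ref{fig:comparator} need no controls is false. Slice~1 contains a $CX$ gate from $a_{n-1}$ onto $z$ that has no counterpart in slice~3, so it is not part of the conjugation. Left uncontrolled, your circuit toggles $z$ by $a_{n-1}$ even when the controls are off. This gate must become a $C^{k+1}X$ gate, which is cheap by Lemma~\ref{lem:optimal_ckx}. It is precisely the ``single $CX$ gate in slice~1'' that the paper's proof controls alongside $\mathcal{V}_2^{(n)}$.

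Second, you never actually obtain the qubit $\psi$. As you observe, $\psi$ must avoid the support of the controlled $\mathcal{V}_2^{(n)}$, and that support contains every $a_i$, $b_i$ and $z$. But $\psi$ also cannot be one of the $k$ control qubits, because it is the target of the $C^kX$ that computes their AND. So ``use the control qubits and the splitting trick'' gives no valid choice, and the doubling step has no qubit to act on. The paper takes this dirty qubit from the data registers, which is where $n>1$ enters. One concrete justification is that the outer pair $CCX(a_0,b_0\to a_1)$ of $\mathcal{V}_2^{(n)}$ is itself a conjugation. Hence only $\mathcal{V}_2^{(n-1)}$, acting on $a_1,b_1,\dots,a_{n-1},b_{n-1},z$, needs the $k$ controls, and this frees $a_0$ and $b_0$.

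Once $\psi$ is available, the paper's route is much simpler than reopening the recursion of Theorem~\ref{thm:comparator}. Apply Figure~\ref{fig:toggle_detection} to the middle $C^{k+2}X$ of Equation~\eqref{eq:v_to_ladders}, grouping the $k$ controls with the last $b$ qubit into two $C^{k+1}X$ gates onto $\psi$. Each remaining $\mathcal{L}_2^\dagger$--$CCX$--$\mathcal{L}_2$ block, with $\psi$ in place of that $b$ qubit, is then again a $\mathcal{V}_2$ operator, so Theorem~\ref{thm:comparator} is used twice as a black box. You set this route aside because the $\mathcal{L}_2$ ladders are costly on their own, but they never need to be implemented separately.

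Your alternative of $\psi$-controlling every gate that targets $z$ inside the recursive construction can be made to work. It does, however, additionally require checking that $z$ is never borrowed as a dirty ancilla, and hence used as a control, within that construction.
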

\begin{proof}
    By the controlled conjugation identity of Figure~\ref{fig:controlled_conjugation}, when adding $k$ control qubits to the comparator circuit of Figure~\ref{fig:comparator}, only a single $CX$ gate in slice 1 and the $\mathcal{V}_2^{(n)}$ operator in slice 2 need to be controlled.

    The $k$-controlled $\mathcal{V}_2$ operator can be implemented as follows:
    \begin{equation}
        \begin{quantikz}[row sep={\the\circuitrowsep,between origins}, column sep=\the\circuitcolsep, align equals at=3.5]
            & \qwbundle{k} & \ctrl{1} & & \\
            & \qwbundle{} & \gate[3]{\mathcal{V}_2}\vqw{4} & & \\
            & & & & \\
            & & & & \\
            & & & & \\
            & & \targ{} & & 
        \end{quantikz}
        \;=\;
        \begin{quantikz}[row sep={\the\circuitrowsep,between origins}, column sep=\the\circuitcolsep, align equals at=3.5]
            & \qwbundle{k} & & \ctrl{2} & & & \\
            & \qwbundle{} & \gate[2]{\mathcal{L}^\dag_2} & & \gate[2]{\mathcal{L}_2} & & \\
            & & & \ctrl{1} & & & \\
            & & & \ctrl{2} & & & \\
            & & & & & & \\
            & & & \targ{} & & & 
        \end{quantikz}
        \;=\;
        \begin{quantikz}[row sep={\the\circuitrowsep,between origins}, column sep=\the\circuitcolsep, align equals at=3.5]
            & \qwbundle{k} & \ctrl{3} & & & & \ctrl{3} & & & & \\
            & \qwbundle{} & & \gate[2]{\mathcal{L}^\dag_2} & & \gate[2]{\mathcal{L}_2} & & \gate[2]{\mathcal{L}^\dag_2} & & \gate[2]{\mathcal{L}_2} & \\
            & & & & \ctrl{3} & & & & \ctrl{3} & & \\
            & & \ctrl{1} & & & & \ctrl{1} & & & & \\
            & & \targ{} & & \ctrl{1} & & \targ{} & & \ctrl{1} & & \\
            & & & & \targ{} & & & & \targ{} & & 
        \end{quantikz}
    \end{equation}
    using one dirty ancilla qubit, which is available since $n>1$.
    The $C^{k+1}X$ gates can be implemented with $\Theta(k)$ gates and $\Theta(\log k)$ circuit depth (Lemma~\ref{lem:optimal_ckx}).
    Each pair of $\mathcal{L}_2$ and $\mathcal{L}^\dag_2$ operators with a $CCX$ gate in the middle corresponds to a $\mathcal{V}_2$ operator, which can be implemented with $\Theta(n)$ gates and $\Theta(\log n)$ circuit depth by Theorem~\ref{thm:comparator}.
\end{proof}

\subsection{Classical--quantum comparator} \label{sub:comparator_classical_quantum}
We now turn to the classical--quantum comparator and prove the analogous optimal result.
An $n$-bit classical--quantum comparator, parameterized by a classical constant $c \in \{0,1\}^n$, computes the following map:
\begin{equation}
    \ket{\bs a}\ket{z} \mapsto \ket{\bs a}\ket{z \oplus (c < a)},
\end{equation}
where $\bs a \in \{0, 1\}^n$ and $z \in \{0, 1\}$.
An example of a classical--quantum comparator is shown in Figure~\ref{fig:classical_quantum_comparator}.
We show that this comparator can be implemented with the same asymptotically optimal gate count and circuit depth as the quantum--quantum comparator, using one dirty ancilla.

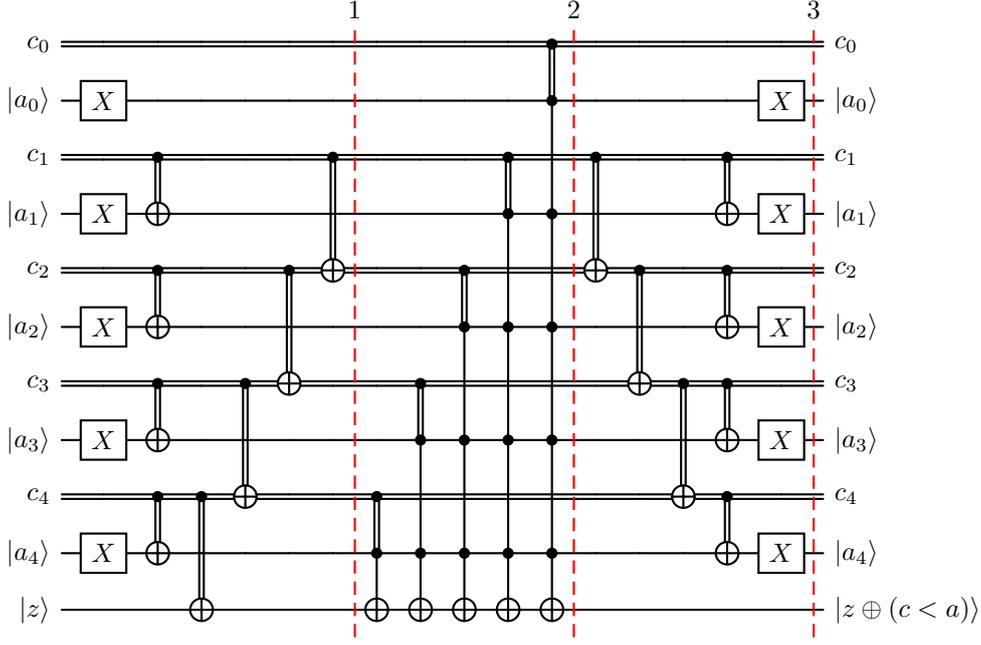
\begin{figure}[t]
    \makebox[\textwidth][c]{%
    \begin{quantikz}[row sep={\the\circuitrowsep,between origins}, column sep=\the\circuitcolsep]
        \setwiretype{c} \lstick{$c_0$} & & & & & & \slice{1} & & & & & \cctrl{1}\slice{2} & & & & & \slice{3} & \rstick{$c_0$} \\
        \lstick{\ket{a_0}} & \gate{X} & & & & & & & & & & \ctrl{2} & & & & & \gate{X} & \rstick{\ket{a_0}} \\
        \setwiretype{c} \lstick{$c_1$} & & \cctrl{1} & & & & \cctrl{2} & & & & \cctrl{1} & & \cctrl{2} & & & \cctrl{1} & & \rstick{$c_1$} \\
        \lstick{\ket{a_1}} & \gate{X} & \targ{} & & & & & & & & \ctrl{2} & \ctrl{2} & & & & \targ{} & \gate{X} & \rstick{\ket{a_1}} \\
        \setwiretype{c} \lstick{$c_2$} & & \cctrl{1} & & & \cctrl{2} & \targ{} & & & \cctrl{1} & & & \targ{} & \cctrl{2} & & \cctrl{1} & & \rstick{$c_2$} \\
        \lstick{\ket{a_2}} & \gate{X} & \targ{} & & & & & & & \ctrl{2} & \ctrl{2} & \ctrl{2} & & & & \targ{} & \gate{X} & \rstick{\ket{a_2}} \\
        \setwiretype{c} \lstick{$c_3$} & & \cctrl{1} & & \cctrl{2} & \targ{} & & & \cctrl{1} & & & & & \targ{} & \cctrl{2} & \cctrl{1} & & \rstick{$c_3$} \\
        \lstick{\ket{a_3}} & \gate{X} & \targ{} & & & & & & \ctrl{2} & \ctrl{2} & \ctrl{2} & \ctrl{2} & & & & \targ{} & \gate{X} & \rstick{\ket{a_3}} \\
        \setwiretype{c} \lstick{$c_4$} & & \cctrl{1} & \cctrl{2} & \targ{} & & & \cctrl{1} & & & & & & & \targ{} & \cctrl{1} & & \rstick{$c_4$} \\
        \lstick{\ket{a_4}} & \gate{X} & \targ{} & & & & & \ctrl{1} & \ctrl{1} & \ctrl{1} & \ctrl{1} & \ctrl{1} & & & & \targ{} & \gate{X} & \rstick{\ket{a_4}} \\
        \lstick{\ket{z}} & & & \targ{} & & & & \targ{} & \targ{} & \targ{} & \targ{} & \targ{} & & & & & & \rstick{\ket{z \oplus (c < a)}} 
    \end{quantikz}
    }
    \caption{Circuit for the classical--quantum comparator over $5$-bit registers.}\label{fig:classical_quantum_comparator}
\end{figure}

\begin{theorem}\label{thm:classical_quantum_comparator}
    The $n$-bit classical--quantum comparator can be implemented over the $\{CCX, CX, X\}$ gate set with $\Theta(n)$ gates and $\Theta(\log n)$ circuit depth, using one dirty ancilla qubit.
\end{theorem}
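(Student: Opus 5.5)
Lower bounds follow from Equation~\eqref{eq:ckx_from_cmp} and the discussion in Section~\ref{sec:preliminaries}. The whole argument is therefore about the upper bound: $\mathcal{O}(n)$ gates and $\mathcal{O}(\log n)$ depth with one dirty ancilla. I will follow the structure of Figure~\ref{fig:classical_quantum_comparator}.

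\textbf{Step 1: decomposition.} The circuit has three pieces:
\begin{itemize}
\item outer layers of $X$ gates;
\item classically controlled ladders of $CX$ gates (slices 1 and 3);
\item a middle slice 2 consisting of a sequence of multi-controlled $X$ gates targeting $z$.
\end{itemize}
In slice 2, each gate is controlled on a suffix $a_i,\dots,a_{n-1}$ together with the classical bit $c_i$ (or its complement). Since $c$ is classical, the classically controlled $CX$ gates become plain $CX$ gates or vanish. Slices 1 and 3 are then ladders of $CX$ gates (or disjoint pieces of ladders), costing $\mathcal{O}(n)$ gates and $\mathcal{O}(\log n)$ depth by Lemma~\ref{lem:ladder_cnot}.

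\textbf{Step 2: the middle slice.} After conjugation, slice 2 computes $z \oplus \bigoplus_{i \in S} \bigwedge_{j\ge i} a'_j$ for a set $S$ determined by $c$. This is exactly the classical--quantum analogue of the $\mathcal{V}$ structure: it equals a suffix-AND ladder $\mathcal{L}$, a fan-in of CNOTs onto $z$, and $\mathcal{L}^\dagger$.

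Directly, the ladder of prefix ANDs needs ancillae. With a clean ancilla register it could be done via Corollary~\ref{cor:ckx_ladder} and Lemma~\ref{lem:fanout}, but we have only one dirty qubit. I will therefore mimic the proofs of Lemma~\ref{lem:promise_v_sqrt_ancillae} and Theorem~\ref{thm:comparator}, splitting the register into a top part of size $\alpha = n-\beta$ and a bottom part of size $\beta=\lceil\sqrt n\rceil$.

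\begin{itemize}
\item The gates in slice 2 whose controls start in the top part all share the controls on the entire bottom part. They therefore factor as a $C^{\beta}$-controlled operation $W$ on the top part together with $z$.
\item Gates whose controls start in the bottom part form a smaller instance of the same problem on $\beta$ qubits, handled recursively.
\item For the controlled part, I use Figure~\ref{fig:promise_controlled_conjugation_dirty}, taking the dirty ancilla from the top register. The promise register is the bottom $\beta$ qubits after $X$ conjugation, which is valid because $W$ is an involution over $\{CCX,CX,X\}$.
\item Within the promise gate I need the middle portion of $W$ on $n-\beta$ qubits using only $\beta$ promise qubits. For that I reuse the block trick of Lemma~\ref{lem:promise_v_sqrt_ancillae}: split the top into $\sqrt n$ blocks of size $\sqrt n$, compute block-ANDs into promise/conditionally clean qubits in two rounds, and use Corollary~\ref{cor:promise_ladder} inside each block.
\end{itemize}

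\textbf{Step 3: counting.} This gives $\mathcal{O}(n)$ gates and $\mathcal{O}(\log n)$ depth per level. The recurrences $C(n)=\Theta(n)+C(\lceil\sqrt n\rceil)$ and $D(n)=\Theta(\log n)+D(\lceil\sqrt n\rceil)$ then give the claimed bounds. Only one external dirty ancilla is needed: in the recursive calls and in the base case $C^kX$ gates (Lemma~\ref{lem:optimal_ckx}), the untouched register parts serve as dirty ancillae.

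\textbf{Main obstacle.} The hard part is the promise-gate implementation of the $\sqrt n$-block structure for the classical--quantum fan-in. Each block contributes an XOR of several suffix ANDs selected by $c$, not a single $CCX$. I would handle this within each block by a strong promise gate of the form $\mathcal{L}_1$-ladder, then a fan-in controlled by the block's conditionally clean qubit (a second-order fan-out, Lemma~\ref{lem:fanout}), then the inverse ladder. I would also check that the two-round scheduling still supplies every block with a disjoint promise register.
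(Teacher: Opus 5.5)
There is a genuine gap, and it sits exactly at the step you yourself flag as the main obstacle. Your Step~2 treats slice~2 as a suffix-AND ladder $\mathcal{L}$, then a fan-in onto $z$, then $\mathcal{L}^\dagger$. In the classical--quantum setting such a ladder has nowhere to write. The only wires are $a_0,\dots,a_{n-1}$ and $z$, and the map $(a_i)_i\mapsto(\prod_{j\ge i}a_j)_i$ is not reversible, so it cannot be computed in place. In the quantum--quantum case it is the $b_i$ wires interleaved with the $a_i$ that make slice~2 of Figure~\ref{fig:comparator} a $\mathcal{V}_2$ operator; here there is no counterpart.

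As a result, the whole burden falls on building a controlled strong promise gate for $W$ itself with an $\mathcal{O}(\sqrt n)$ promise register, i.e.\ a classical--quantum analogue of Lemma~\ref{lem:promise_v_sqrt_ancillae}. That construction is not given. Corollary~\ref{cor:promise_ladder} cannot be applied ``inside each block'', because a block consists only of $a$ wires. The $\mathcal{L}_1$ ladder you suggest computes parities, not the products $\prod_{j\ge i}a_j$. To close this you would need to exhibit, for every block, about $\sqrt n$ further conditionally clean wires to hold its suffix ANDs, plus about $\sqrt n$ more for a logarithmic-depth ladder. You would also need a schedule keeping these disjoint across blocks processed in parallel, and a way to merge all blocks' contributions into the single target $z$.

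There are two smaller slips. The toggle qubit in Figure~\ref{fig:promise_controlled_conjugation_dirty} cannot come from the top register, since $W$ acts on it; it must be the external dirty ancilla. And $\lceil\sqrt n\rceil$ promise qubits are too few for the block-AND ladder via Corollary~\ref{cor:ckx_ladder}.

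The missing idea is that the paper manufactures the interleaved wires from borrowed dirty qubits. Interleave dirty wires $g_0,\dots,g_{n-1}$ with the $a_i$. One $\mathcal{V}_2^{(n)}$ then maps $z\mapsto z\oplus\bigoplus_i g_i\prod_{j\ge i}a_j$. Follow it with $X$ on every $g_i$ with $c_i=1$, a second $\mathcal{V}_2^{(n)}$, and the same $X$ gates. The unknown $g_i$ terms cancel, leaving $z\oplus\bigoplus_i c_i\prod_{j\ge i}a_j$, which is exactly slice~2.

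The $n$ borrowed wires come from a halving step. One extra qubit holds $\prod_{j\ge\lceil n/2\rceil-1}a_j$. The terms with $i<\lceil n/2\rceil$ then form a $\mathcal{V}_2^{(\lceil n/2\rceil)}$ whose last control is that qubit and which borrows the bottom $\lceil n/2\rceil$ wires of $a$ as its $g$'s. The remaining terms form a $\mathcal{V}_2^{(\lfloor n/2\rfloor)}$ that borrows the top wires. The extra qubit is then made dirty via Figure~\ref{fig:toggle_detection}. Every $\mathcal{V}_2$ is implemented ancilla-free by Theorem~\ref{thm:comparator}, so no new promise-gate or block construction is needed.
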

\begin{proof}
    Consider the classical--quantum comparator circuit given in Figure~\ref{fig:classical_quantum_comparator}.
    The subcircuit in slice 2 can be implemented as follows, using $\mathcal{V}_2^{(n)}$ operators and $n$ dirty ancilla qubits:
    \begin{equation}
    \begin{quantikz}[row sep={.5cm,between origins}, column sep=0.07cm, align equals at=8.5]
        \setwiretype{c} \lstick{$c_0$} & & & & & \cctrl{2} & \rstick{$c_0$} \\
        \lstick{$\ket{g_0}$} & & & & & & \rstick{$\ket{g_0}$} \\
        \lstick{$\ket{a_0}$} & & & & & \ctrl{3} & \rstick{$\ket{a_0}$} \\
        \setwiretype{c} \lstick{$c_1$} & & & & \cctrl{2} & & \rstick{$c_1$} \\
        \lstick{$\ket{g_1}$} & & & & & & \rstick{$\ket{g_1}$} \\
        \lstick{$\ket{a_1}$} & & & & \ctrl{3} & \ctrl{3} & \rstick{$\ket{a_1}$} \\
        \setwiretype{c} \lstick{$c_2$} & & & \cctrl{2} & & & \rstick{$c_2$} \\
        \lstick{$\ket{g_2}$} & & & & & & \rstick{$\ket{g_2}$} \\
        \lstick{$\ket{a_2}$} & & & \ctrl{3} & \ctrl{3} & \ctrl{3} & \rstick{$\ket{a_2}$} \\
        \setwiretype{c} \lstick{$c_3$} & & \cctrl{2} & & & & \rstick{$c_3$} \\
        \lstick{$\ket{g_3}$} & & & & & & \rstick{$\ket{g_3}$} \\
        \lstick{$\ket{a_3}$} & & \ctrl{3} & \ctrl{3} & \ctrl{3} & \ctrl{3} & \rstick{$\ket{a_3}$} \\
        \setwiretype{c} \lstick{$c_4$} & \cctrl{2} & & & & & \rstick{$c_4$} \\
        \lstick{$\ket{g_4}$} & & & & & & \rstick{$\ket{g_4}$} \\
        \lstick{$\ket{a_4}$} & \ctrl{1} & \ctrl{1} & \ctrl{1} & \ctrl{1} & \ctrl{1} & \rstick{$\ket{a_4}$} \\
        & \targ{} & \targ{} & \targ{} & \targ{} & \targ{} & 
    \end{quantikz}
    =
    \begin{quantikz}[row sep={.5cm,between origins}, column sep=0.07cm, align equals at=8.5]
        \setwiretype{c} \lstick{$c_0$} & & & & & & & & & & \cctrl{1} & & & & & & & & & & \cctrl{1} & \rstick{$c_0$} \\
        \lstick{$\ket{g_0}$} & \ctrl{1} & & & & & & & & \ctrl{1} & \targ{} & \ctrl{1} & & & & & & & & \ctrl{1} & \targ{} & \rstick{$\ket{g_0}$} \\
        \lstick{$\ket{a_0}$} & \ctrl{2} & & & & & & & & \ctrl{2} & & \ctrl{2} & & & & & & & & \ctrl{2} & & \rstick{$\ket{a_0}$} \\
        \setwiretype{c} \lstick{$c_1$} & & & & & & & & & & \cctrl{1} & & & & & & & & & & \cctrl{1} & \rstick{$c_1$} \\
        \lstick{$\ket{g_1}$} & \targ{} & \ctrl{1} & & & & & & \ctrl{1} & \targ{} & \targ{} & \targ{} & \ctrl{1} & & & & & & \ctrl{1} & \targ{} & \targ{} & \rstick{$\ket{g_1}$} \\
        \lstick{$\ket{a_1}$} & & \ctrl{2} & & & & & & \ctrl{2} & & & & \ctrl{2} & & & & & & \ctrl{2} & & & \rstick{$\ket{a_1}$} \\
        \setwiretype{c} \lstick{$c_2$} & & & & & & & & & & \cctrl{1} & & & & & & & & & & \cctrl{1} & \rstick{$c_2$} \\
        \lstick{$\ket{g_2}$} & & \targ{} & \ctrl{1} & & & & \ctrl{1} & \targ{} & & \targ{} & & \targ{} & \ctrl{1} & & & & \ctrl{1} & \targ{} & & \targ{} & \rstick{$\ket{g_2}$} \\
        \lstick{$\ket{a_2}$} & & & \ctrl{2} & & & & \ctrl{2} & & & & & & \ctrl{2} & & & & \ctrl{2} & & & & \rstick{$\ket{a_2}$} \\
        \setwiretype{c} \lstick{$c_3$} & & & & & & & & & & \cctrl{1} & & & & & & & & & & \cctrl{1} & \rstick{$c_3$} \\
        \lstick{$\ket{g_3}$} & & & \targ{} & \ctrl{1} & & \ctrl{1} & \targ{} & & & \targ{} & & & \targ{} & \ctrl{1} & & \ctrl{1} & \targ{} & & & \targ{} & \rstick{$\ket{g_3}$} \\
        \lstick{$\ket{a_3}$} & & & & \ctrl{2} & & \ctrl{2} & & & & & & & & \ctrl{2} & & \ctrl{2} & & & & & \rstick{$\ket{a_3}$} \\
        \setwiretype{c} \lstick{$c_4$} & & & & & & & & & & \cctrl{1} & & & & & & & & & & \cctrl{1} & \rstick{$c_4$} \\
        \lstick{$\ket{g_4}$} & & & & \targ{} & \ctrl{1} & \targ{} & & & & \targ{} & & & & \targ{} & \ctrl{1} & \targ{} & & & & \targ{} & \rstick{$\ket{g_4}$} \\
        \lstick{$\ket{a_4}$} & & & & & \ctrl{1} & & & & & & & & & & \ctrl{1} & & & & & & \rstick{$\ket{a_4}$} \\
        & & & & & \targ{} & & & & & & & & & & \targ{} & & & & & & 
    \end{quantikz}
    \end{equation}
    By introducing one clean ancilla qubit, this circuit can be split into two parts, with two $\mathcal{V}^{(\lceil\frac{n}{2}\rceil)}_2$ operators in the first part and two $\mathcal{V}^{(\lfloor\frac{n}{2}\rfloor)}_2$ operators in the second part:
    \begin{equation}\label{eq:v_split_example}
    \begin{quantikz}[row sep={.5cm,between origins}, column sep=0.2cm, align equals at=8.5]
        \setwiretype{c} \lstick{$c_0$} & & & & & & & \cctrl{1} & & & & & & \cctrl{1} & & & & & & & & & & \rstick{$c_0$} \\
        \lstick{$\ket{g_0}$} & & \ctrl{1} & & & & \ctrl{1} & \targ{} & \ctrl{1} & & & & \ctrl{1} & \targ{} & & & & & & & & & & \rstick{$\ket{g_0}$} \\
        \lstick{$\ket{a_0}$} & & \ctrl{2} & & & & \ctrl{2} & & \ctrl{2} & & & & \ctrl{2} & & & & & & & & & & & \rstick{$\ket{a_0}$} \\
        \setwiretype{c} \lstick{$c_1$} & & & & & & & \cctrl{1} & & & & & & \cctrl{1} & & & & & & & & & & \rstick{$c_1$} \\
        \lstick{$\ket{g_1}$} & & \targ{} & \ctrl{1} & & \ctrl{1} & \targ{} & \targ{} & \targ{} & \ctrl{1} & & \ctrl{1} & \targ{} & \targ{} & & & & & & & & & & \rstick{$\ket{g_1}$} \\
        \lstick{$\ket{a_1}$} & & & \ctrl{2} & & \ctrl{2} & & & & \ctrl{2} & & \ctrl{2} & & & & & & & & & & & & \rstick{$\ket{a_1}$} \\
        \setwiretype{c} \lstick{$c_2$} & & & & & & & \cctrl{1} & & & & & & \cctrl{1} & & & & & & & & & & \rstick{$c_2$} \\
        \lstick{$\ket{g_2}$} & & & \targ{} & \ctrl{8} & \targ{} & & \targ{} & & \targ{} & \ctrl{8} & \targ{} & & \targ{} & & & & & & & & & & \rstick{$\ket{g_2}$} \\
        \lstick{$\ket{a_2}$} & \ctrl{3} & & & & & & & & & & & & & \ctrl{3} & & & & & & & & & \rstick{$\ket{a_2}$} \\
        \setwiretype{c} \lstick{$c_3$} & & & & & & & & & & & & & & & & & & \cctrl{1} & & & & \cctrl{1} & \rstick{$c_3$} \\
        \lstick{$\ket{g_3}$} & & & & & & & & & & & & & & & \ctrl{1} & & \ctrl{1} & \targ{} & \ctrl{1} & & \ctrl{1} & \targ{} & \rstick{$\ket{g_3}$} \\
        \lstick{$\ket{a_3}$} & \ctrl{3} & & & & & & & & & & & & & \ctrl{3} & \ctrl{2} & & \ctrl{2} & & \ctrl{2} & & \ctrl{2} & & \rstick{$\ket{a_3}$} \\
        \setwiretype{c} \lstick{$c_4$} & & & & & & & & & & & & & & & & & & \cctrl{1} & & & & \cctrl{1} & \rstick{$c_4$} \\
        \lstick{$\ket{g_4}$} & & & & & & & & & & & & & & & \targ{} & \ctrl{1} & \targ{} & \targ{} & \targ{} & \ctrl{1} & \targ{} & \targ{} & \rstick{$\ket{g_4}$} \\
        \lstick{$\ket{a_4}$} & \ctrl{1} & & & & & & & & & & & & & \ctrl{1} & & \ctrl{2} & & & & \ctrl{2} & & & \rstick{$\ket{a_4}$} \\
        \lstick{$\ket{0}$} & \targ{} & & & \ctrl{1} & & & & & & \ctrl{1} & & & & \targ{} & & & & & & & & & \rstick{$\ket{0}$} \\
        & & & & \targ{} & & & & & & \targ{} & & & & & & \targ{} & & & & \targ{} & & & 
    \end{quantikz}
    \end{equation}
    The first two $\mathcal{V}^{(\lceil\frac{n}{2}\rceil)}_2$ operators can be implemented using the bottom $\lceil\frac{n}{2}\rceil$ input qubits as dirty ancilla qubits.
    Similarly, the last two $\mathcal{V}^{(\lfloor\frac{n}{2}\rfloor)}_2$ operators can be implemented using the top $\lceil\frac{n}{2}\rceil$ input qubits as dirty ancilla qubits.
    For instance, for the circuit of Equation~\eqref{eq:v_split_example}, we can use $a_2$, $a_3$, and $a_4$ instead of $g_0$, $g_1$, and $g_2$, and we can use $a_0$ and $a_1$ instead of $g_3$ and $g_4$.
    We obtain the following circuit:
    \begin{equation}
    \begin{quantikz}[row sep={.5cm,between origins}, column sep=0.2cm, align equals at=6.5]
        \setwiretype{c} \lstick{$c_3$} & & & & & & & & & & & & & & & & & & \cctrl{1} & & & & \cctrl{1} & \rstick{$c_3$} \\
        \lstick{$\ket{a_0}$} & & \ctrl{4} & & & & \ctrl{4} & & \ctrl{4} & & & & \ctrl{4} & & & \ctrl{2} & & \ctrl{2} & \targ{} & \ctrl{2} & & \ctrl{2} & \targ{}& \rstick{$\ket{a_0}$} \\
        \setwiretype{c} \lstick{$c_4$} & & & & & & & & & & & & & & & & & & \cctrl{1} & & & & \cctrl{1} & \rstick{$c_4$} \\
        \lstick{$\ket{a_1}$} & & & \ctrl{6} & & \ctrl{6} & & & & \ctrl{6} & & \ctrl{6} & & & & \targ{} & \ctrl{6} & \targ{} & \targ{} & \targ{} & \ctrl{6} & \targ{} & \targ{} & \rstick{$\ket{a_1}$} \\
        \setwiretype{c} \lstick{$c_0$} & & & & & & & \cctrl{1} & & & & & & \cctrl{1} & & & & & & & & & & \rstick{$c_0$} \\
        \lstick{$\ket{a_2}$} & \ctrl{2} & \ctrl{2} & & & & \ctrl{2} & \targ{} & \ctrl{2} & & & & \ctrl{2} & \targ{} & \ctrl{3} & & & & & & & & & \rstick{$\ket{a_2}$} \\
        \setwiretype{c} \lstick{$c_1$} & & & & & & & \cctrl{1} & & & & & & \cctrl{1} & & & & & & & & & & \rstick{$c_1$} \\
        \lstick{$\ket{a_3}$} & \ctrl{2} & \targ{} & \ctrl{2} & & \ctrl{2} & \targ{} & \targ{} & \targ{} & \ctrl{2} & & \ctrl{2} & \targ{} & \targ{} & \ctrl{3} & \ctrl{-4} & & \ctrl{-4} & & \ctrl{-4} & & \ctrl{-4} & & \rstick{$\ket{a_3}$} \\
        \setwiretype{c} \lstick{$c_2$} & & & & & & & \cctrl{1} & & & & & & \cctrl{1} & & & & & & & & & & \rstick{$c_2$} \\
        \lstick{$\ket{a_4}$} & \ctrl{1} & & \targ{} & \ctrl{1} & \targ{} & & \targ{} & & \targ{} & \ctrl{1} & \targ{} & & \targ{} & \ctrl{1} & & \ctrl{2} & & & & \ctrl{2} & & & \rstick{$\ket{a_4}$} \\
        \lstick{$\ket{0}$} & \targ{} & & & \ctrl{1} & & & & & & \ctrl{1} & & & & \targ{} & & & & & & & & & \rstick{$\ket{0}$} \\
        & & & & \targ{} & & & & & & \targ{} & & & & & & \targ{} & & & & \targ{} & & & 
    \end{quantikz}
    \end{equation}
    All $\mathcal{V}_2$ operators can then be implemented over the $\{CCX, CX, X\}$ gate set with $\Theta(n)$ gates and $\Theta(\log n)$ circuit depth, as shown in Theorem~\ref{thm:comparator}.
    Finally, the clean ancilla can be replaced by a dirty ancilla using the circuit identity of Figure~\ref{fig:toggle_detection}.
\end{proof}

As a direct consequence of Theorem~\ref{thm:classical_quantum_comparator}, we obtain an efficient construction for the $k$-controlled classical--quantum comparator.
\begin{corollary}\label{cor:controlled_classical_quantum_comparator}
    The $k$-controlled $n$-bit classical--quantum comparator can be implemented over the $\{CCX$, $CX$, $X\}$ gate set with $\Theta(k + n)$ gates and $\Theta(\log(kn))$ circuit depth, using one dirty ancilla qubit.
\end{corollary}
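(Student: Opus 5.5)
Our plan mirrors the proof of Corollary~\ref{cor:controlled_comparator}, applied to the circuit of Figure~\ref{fig:classical_quantum_comparator}. Slices 1 and 3 consist of $X$ gates and classically controlled $CX$ gates, i.e.\ plain $CX$ gates between $a$-qubits for those $i$ with $c_i = 1$. Together they form a conjugation $V^\dag(\cdot)V$ around slice 2. By the controlled conjugation identity of Figure~\ref{fig:controlled_conjugation}, adding $k$ controls to the whole comparator therefore only requires controlling slice 2. Slice 2 is a sequence of multi-controlled $X$ gates all targeting $z$, so the controlled operator has the form $C^k(\text{slice }2)$.

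To implement $C^k(\text{slice }2)$ we use one dirty ancilla qubit $\ket{\psi}$ and the toggle-detection identity of Figure~\ref{fig:toggle_detection}; this is valid since slice 2 is an involution. We apply a $C^kX$ from the $k$ controls onto $\psi$, then slice 2 with $\psi$ as an extra control, then the $C^kX$ again, then slice 2 controlled by $\psi$ once more. By Lemma~\ref{lem:optimal_ckx}, each $C^kX$ costs $\mathcal{O}(k)$ gates and $\mathcal{O}(\log k)$ depth, borrowing any $a$-qubit as its dirty ancilla. It then remains to implement the singly controlled slice~2 with only the $n$ input qubits, $z$, and the $k$ original control qubits available as borrowable workspace.

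The singly controlled slice 2 is again a classical--quantum comparison pattern: each gate now has one additional control on $\psi$. It is equivalent to the slice-2 subcircuit on $n+1$ quantum bits in which the top control $\psi$ is ANDed in. Concretely, we prepend $\psi$ as a new most significant bit with classical constant bit $0$ and treat it as part of the $a$ register. Under this view the whole expression becomes an instance of the $(n+1)$-bit slice-2 structure. The proof of Theorem~\ref{thm:classical_quantum_comparator} implements this structure with $\Theta(n)$ gates and $\Theta(\log n)$ depth using one dirty ancilla, which in the current setting can be taken from the $k$ control qubits when $k\ge1$.

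The main obstacle is verifying that this reduction is exact. Slice 2 realises $z \oplus \bigvee_i (c_i=0 \wedge a_i \wedge \bigwedge_{j>i} (a_j = \neg c_j))$ after the conjugation, and the extra control must be ANDed onto every term. If the prepended-bit trick does not match the structure exactly, we fall back on the $\mathcal{V}_2$ decomposition used in Theorem~\ref{thm:classical_quantum_comparator}. There, only the middle layer of classically controlled $CX$ gates targeting the $g$ register needs the extra control, by the same conjugation argument. The middle layer becomes a layer of parallel $CCX$ gates sharing one control, i.e.\ an $\mathcal{F}_1$/$\mathcal{F}_2$ fan-out implementable via Lemma~\ref{lem:fanout} or Lemma~\ref{lem:parallel_control_gates} in $\mathcal{O}(n)$ gates and $\mathcal{O}(\log n)$ depth. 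Summing the costs gives $\mathcal{O}(k+n)$ gates and $\mathcal{O}(\log k + \log n) = \mathcal{O}(\log(kn))$ depth with one dirty ancilla. The matching lower bounds follow from the discussion after Lemma~\ref{lem:optimal_ckx}, since a $k$-controlled comparator contains both a $C^kX$ and an $n$-bit comparator.
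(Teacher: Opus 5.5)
\textbf{The error.} Your first step is false as stated: slices 1 and 3 of Figure~\ref{fig:classical_quantum_comparator} are not inverses of each other. Slice 1 contains the $X$ gates on $\bs a$, the classically controlled $X$ gates on the $a_i$ (these are not $CX$ gates between $a$-qubits), and a classical $CX$ ladder on the constant bits. It also contains a classically controlled $X$ on $z$, conditioned on $c_{n-1}$, which has no counterpart in slice 3. That gate lies outside the conjugation. So if you control only slice 2, the circuit still flips $z$ when the controls are off and $c_{n-1}=1$.

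You must also put the $k$ controls on this gate, turning it into a $C^kX$ on $z$. By Lemma~\ref{lem:optimal_ckx} it costs $\mathcal{O}(k)$ gates and $\mathcal{O}(\log k)$ depth, with an $a$-qubit borrowed as dirty ancilla. This is exactly the extra gate the paper's proof controls, and the bounds are unaffected.

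The same slip appears in your fallback. Write $X_{\bs c}$ for a layer of classically controlled $X$ gates on $\bs g$. Each block is $X_{\bs c}\,\mathcal{V}_2\,X_{\bs c}\,\mathcal{V}_2 = X_{\bs c}\,(\mathcal{V}_2\,X_{\bs c}\,\mathcal{V}_2)$, so the conjugation argument only covers the inner layer. Both layers must be controlled, which is still a constant number of $\mathcal{F}_1$ fan-outs.

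\textbf{Comparison with the paper.} With this repaired, your argument is correct but takes a different route. The paper pushes the $k$ controls into every $\mathcal{V}_2$ of the slice-2 construction of Theorem~\ref{thm:classical_quantum_comparator} and treats each one as in Corollary~\ref{cor:controlled_comparator}. There, the middle $CCX$ of $\mathcal{L}_2^\dagger\,CCX\,\mathcal{L}_2$ becomes a $C^{k+2}X$, which a dirty ancilla splits into two $C^{k+1}X$ gates and two plain $\mathcal{V}_2$'s. The $\mathcal{O}(k)$ overhead is therefore paid once per $\mathcal{V}_2$.

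You instead pay it once overall. A single toggle step, valid because slice 2 is an involution, leaves one control $\psi$, which you then absorb. Your prepended-bit trick is in fact exact, so the hedge is unnecessary, provided it is read at the level of the slice-2 gate pattern. Slice 2 consists of gates $C^{n-i}X$ with controls $a'_i,\dots,a'_{n-1}$ and target $z$, present iff the post-ladder classical bit $c'_i$ equals $1$. Adding $\psi$ to all of them gives exactly the $(n+1)$-bit pattern with $a'_n=\psi$ and $c'_n=0$.

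Reading it instead as a full $(n+1)$-bit comparator with top constant bit $0$ would not work. Slices 1 and 3 would negate $\psi$, and the ladder would give $c'_n=c_{n-1}$.

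Since the slice-2 construction in the proof of Theorem~\ref{thm:classical_quantum_comparator} handles arbitrary classical bits, it applies verbatim to $n+1$ bits. Its one dirty ancilla can be borrowed from the idle control qubits.

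Your route is more modular: it uses the uncontrolled construction as a black box and never reopens $\mathcal{V}_2$ or the dirty-ancilla bookkeeping inside the split. The paper's route reuses the controlled-$\mathcal{V}_2$ gadget of Corollary~\ref{cor:controlled_comparator} directly, so both controlled comparators are handled uniformly.
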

\begin{proof}
    By the controlled conjugation identity of Figure~\ref{fig:controlled_conjugation}, when adding $k$ control qubits to the comparator circuit of Figure~\ref{fig:classical_quantum_comparator}, only a single classically controlled $X$ gate in slice 1 and the subcircuit in slice 2 need to be controlled.
    The classically controlled $X$ gate becomes a classically controlled $C^kX$ gate, implementable with $\Theta(k)$ gates and $\Theta(\log k)$ circuit depth (Lemma~\ref{lem:optimal_ckx}).
    Each $k$-controlled $\mathcal{V}_2$ operator arising in slice 2 can be implemented using the same decomposition as in the proof of Corollary~\ref{cor:controlled_comparator}, using $\Theta(k + n)$ gates and $\Theta(\log(kn))$ circuit depth.
\end{proof}

\section{Quantum circuits for incrementers}\label{sec:incrementer}
\begin{figure}[t]
    \begin{subfigure}[b]{0.39\textwidth}
        \centering
        \begin{quantikz}[row sep={\the\circuitrowsep,between origins}, column sep=\the\circuitcolsep]
            & \ctrl{1} & \ctrl{1} & \ctrl{1} & \ctrl{1} & \ctrl{1} & \gate{X} & \\
            & \ctrl{1} & \ctrl{1} & \ctrl{1} & \ctrl{1} & \targ{} & & \\
            & \ctrl{1} & \ctrl{1} & \ctrl{1} & \targ{} & & & \\
            & \ctrl{1} & \ctrl{1} & \targ{} & & & & \\
            & \ctrl{1} & \targ{} & & & & & \\
            & \targ{} & & & & & & 
        \end{quantikz}
    \caption{Naive implementation using $C^kX$ gates.}\label{fig:naive_incrementer}
    \end{subfigure}
    \begin{subfigure}[b]{0.59\textwidth}
        \centering
        \begin{quantikz}[row sep={\the\circuitrowsep,between origins}, column sep=\the\circuitcolsep]
            & \ctrl{1} & & & \slice{1} & \ctrl{1} & \slice{2} & & & & \ctrl{1}\slice{3} & \gate{X}\slice{4} & \\
            & \ctrl{1} & & & & \targ{} & \gate{X} & & & & \ctrl{1} & \gate{X} & \\
            \lstick{\ket{0}} & \targ{} & \ctrl{1} & & & \ctrl{1} & & & & \ctrl{1} & \targ{} & & \rstick{\ket{0}} \\
            & & \ctrl{1} & & & \targ{} & \gate{X} & & & \ctrl{1} & & \gate{X} & \\
            \lstick{\ket{0}} & & \targ{} & \ctrl{1} & & \ctrl{1} & & & \ctrl{1} & \targ{} & & & \rstick{\ket{0}} \\
            & & & \ctrl{1} & & \targ{} & \gate{X} & & \ctrl{1} & & & \gate{X} & \\
            \lstick{\ket{0}} & & & \targ{} & \ctrl{1} & \ctrl{1} & & \ctrl{1} & \targ{} & & & & \rstick{\ket{0}} \\
            & & & & \ctrl{1} & \targ{} & \gate{X} & \ctrl{1} & & & & \gate{X} & \\
            \lstick{\ket{0}} & & & & \targ{} & \ctrl{1} & & \targ{} & & & & & \rstick{\ket{0}} \\
            & & & & & \targ{} & & & & & & & 
        \end{quantikz}
    \caption{Over the $\{CCX, CX, X\}$ gate set using $n-2$ clean ancilla qubits~\cite{Gidney_incrementer}.}\label{fig:gidney_incrementer}
    \end{subfigure}
    \caption{Quantum circuits for the $6$-bit incrementer.}\label{fig:incrementer}
\end{figure}

An $n$-bit quantum incrementer computes the following map:
\begin{equation}
    \ket{\bs x} \mapsto \ket{(\bs x+1) \bmod 2^n},
\end{equation}
where $\bs x \in \{0, 1\}^n$.
An example of a naive implementation using $C^kX$ gates is shown in Figure~\ref{fig:naive_incrementer}.
A useful property of the increment operator is that it admits the following decomposition, which can be applied recursively to obtain the naive implementation of Figure~\ref{fig:naive_incrementer}:
\begin{equation}\label{eq:inc_recursive}
    \begin{quantikz}[row sep={\the\circuitrowsep,between origins}, column sep=\the\circuitcolsep, align equals at=1.5]
        & \qwbundle{k} & \gate[2]{+1} & \\
        & \qwbundle{n} & & & 
    \end{quantikz}
    \;=\;
    \begin{quantikz}[row sep={\the\circuitrowsep,between origins}, column sep=\the\circuitcolsep, align equals at=1.5]
        & \qwbundle{k} & \ctrl{1} & \gate{+1} & \\
        & \qwbundle{n} & \gate{+1} & & 
    \end{quantikz}
\end{equation}

The increment operator $U$ does not satisfy $U^2 = I$, which is problematic because we cannot directly apply the second circuit identity of Figure~\ref{fig:toggle_detection}, a key tool when working with dirty ancilla qubits.
However, the increment operator satisfies the following circuit identity:
\begin{equation}\label{eq:inc_x_dec}
    \begin{quantikz}[row sep={\the\circuitrowsep,between origins}, column sep=\the\circuitcolsep, align equals at=1]
        & \qwbundle{n} & \gate{+1} & \gate{X} & \gate{+1} & \gate{X} & 
    \end{quantikz}
    \;=\;
    \begin{quantikz}[row sep={\the\circuitrowsep,between origins}, column sep=\the\circuitcolsep, align equals at=1]
        & \qwbundle{n} & \gate{+1} & \gate{-1} &
    \end{quantikz}
    \;=\;
    \begin{quantikz}[row sep={\the\circuitrowsep,between origins}, column sep=\the\circuitcolsep, align equals at=1]
        & \qwbundle{n} & & &
    \end{quantikz}
\end{equation}
which means that instead of using Figure~\ref{fig:toggle_detection}, we can rely on the following circuit identity~\cite{Gidney_incrementer}:
\begin{equation}\label{eq:toggle_detection_inc}
    \begin{quantikz}[row sep={\the\circuitrowsep,between origins}, column sep=\the\circuitcolsep, align equals at=2]
        & \qwbundle{k} & \ctrl{2} & & \ctrl{2} & \\
        & \qwbundle{n} & & \gate{+1} & & \\
        \lstick{\ket{0}} & & \targ{} & \ctrl{-1} & \targ{} & \rstick{\ket{0}}
    \end{quantikz}
    \;=\;
    \begin{quantikz}[row sep={\the\circuitrowsep,between origins}, column sep=\the\circuitcolsep, align equals at=2]
        & \qwbundle{k} & & & & \ctrl{2} & & \ctrl{2} & \\
        & \qwbundle{n} & & \gate{+1} & & \targ{} & \gate{-1} & \targ{} & \\
        \lstick{\ket{\psi}} & & & \ctrl{-1} & & \targ{} & \ctrl{-1} & \targ{} & \rstick{\ket{\psi}}
    \end{quantikz}
\end{equation}
The $C^kX$ gates convert the decrement operation into an increment operation whenever the control register is in the $\ket{1}^{\otimes k}$ state.
When the control register is not in the $\ket{1}^{\otimes k}$ state, the $C^kX$ gates act as the identity, and the increment and decrement gates cancel each other if the dirty ancilla is in the $\ket{1}$ state; otherwise, they both act as the identity when the dirty ancilla is in the $\ket{0}$ state.
The $k$-controlled fan-out gates in Equation~\eqref{eq:toggle_detection_inc} can be implemented as follows:
\begin{equation}\label{eq:controlled_fan_out}
    \begin{quantikz}[row sep={\the\circuitrowsep,between origins}, column sep=\the\circuitcolsep, align equals at=3]
        & \qwbundle{k} & \ctrl{3} & \\
        & & \targ{} & \\
        & & \targ{} & \\
        \setwiretype{n} & & \vdots & \\
        & & \targ{}\wire[u,shorten >=0.18cm][1]{a} & 
    \end{quantikz}
    \;=\;
    \begin{quantikz}[row sep={\the\circuitrowsep,between origins}, column sep=\the\circuitcolsep, align equals at=3]
        & \qwbundle{k} & & \ctrl{1} & & \\
        & & \ctrl{2} & \targ{} & \ctrl{2} & \\
        & & \targ{} & & \targ{} & \\
        \setwiretype{n} & & \vdots & & \vdots & \\
        & & \targ{}\wire[u,shorten >=0.18cm][1]{a} & & \targ{}\wire[u,shorten >=0.18cm][1]{a} & 
    \end{quantikz}
\end{equation}
using two fan-out gates, each implementable with $\mathcal{O}(n)$ gates and $\mathcal{O}(\log n)$ circuit depth (Lemma~\ref{lem:fanout}), and one $C^kX$ gate, implementable with $\mathcal{O}(k)$ gates and $\mathcal{O}(\log k)$ circuit depth (Lemma~\ref{lem:optimal_ckx}).

We first state a series of lemmas for constructions that will be used in our main result: a quantum circuit implementing the increment operator with $\Theta(n)$ $\{CCX, CX, X\}$ gates, $\Theta(\log n)$ circuit depth, and one dirty ancilla qubit.

\begin{figure}[t]
    \centering
    \begin{quantikz}[row sep={\the\circuitrowsep,between origins}, column sep=\the\circuitcolsep, execute at end picture={
            \draw [line width=0.8pt] (topcontrol.south) -- (ccxbox.north -| topcontrol.south);
        }]
        \lstick{$c$} & & & & \slice{1} &[0.3cm] |[alias=topcontrol, fill, circle, minimum size=4pt, inner sep=0pt]|{} & \ctrl{8} \slice{2} & & & & \slice{3} & \ctrl{8} \slice{4} & \rstick{$c$} \\
        & \ctrl{1} & & & & \ctrl{1}\gategroup[10, steps=1, style={name=ccxbox, inner sep=1pt}]{} & & & & & \ctrl{1} & \targ{} & \\
        & \ctrl{1} & & & & \targ{} & \targ{} & & & & \ctrl{1} & \targ{} & \\
        \lstick{$p_0$} & \targ{} & \ctrl{1} & & & \ctrl{1} & & & & \ctrl{1} & \targ{} & & \rstick{$p_0$} \\
        & & \ctrl{1} & & & \targ{} & \targ{} & & & \ctrl{1} & & \targ{} & \\
        \lstick{$p_1$} & & \targ{} & \ctrl{1} & & \ctrl{1} & & & \ctrl{1} & \targ{} & & & \rstick{$p_1$} \\
        & & & \ctrl{1} & & \targ{} & \targ{} & & \ctrl{1} & & & \targ{} & \\
        \lstick{$p_2$} & & & \targ{} & \ctrl{1} & \ctrl{1} & & \ctrl{1} & \targ{} & & & & \rstick{$p_2$} \\
        & & & & \ctrl{1} & \targ{} & \targ{} & \ctrl{1} & & & & \targ{} & \\
        \lstick{$p_3$} & & & & \targ{} & \ctrl{1} & & \targ{} & & & & & \rstick{$p_3$} \\
        & & & & & \targ{} & & & & & & & 
    \end{quantikz}
    \caption{Quantum circuit for a controlled strong promise gate whose target unitary is the $6$-bit incrementer.
    Here $c$ denotes the control qubit and $\bs p$ denotes the promise register.
    The circuit is built from Figure~\ref{fig:gidney_incrementer} by replacing the clean ancilla qubits with a promise register $\bs{p}$ and adding a control to the gates in slices 2 and 4.}\label{fig:promise_incrementer}
\end{figure}
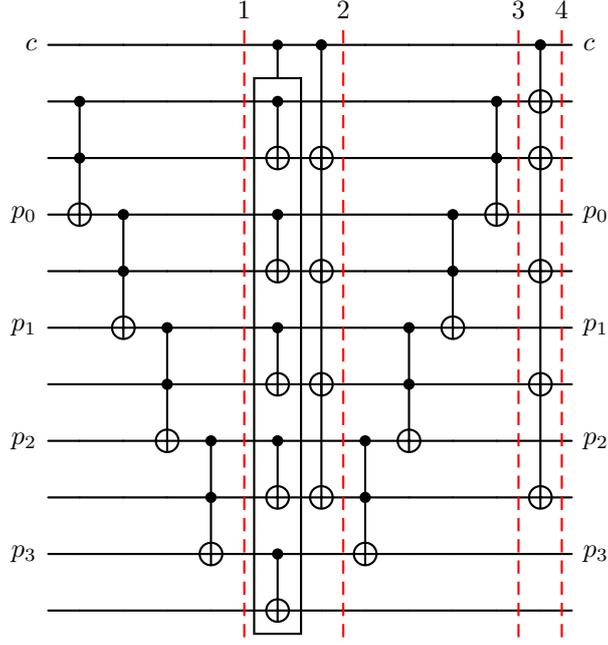

\begin{lemma}\label{lem:promise_inc_linear_ancillae}
    A $k$-controlled strong promise gate with a promise register of size $n-1$ whose target unitary is the $n$-bit incrementer can be implemented with $\mathcal{O}(k + n)$ $\{CCX, CX, X\}$ gates and $\mathcal{O}(\log(kn))$ circuit depth.
\end{lemma}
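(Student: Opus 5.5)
We start from the circuit of Figure~\ref{fig:promise_incrementer}, obtained from Gidney's incrementer with $n-2$ clean ancillae (Figure~\ref{fig:gidney_incrementer}) by treating those ancillae as part of a promise register $\bs p$ of size at most $n-1$. Slice~1 is a ladder of $CCX$ gates computing the partial carries into $\bs p$. Slice~3 is its adjoint, except that the top $CCX$ gate flips the first promise qubit. Slices~2 and~4 are the two layers that actually modify the target, and they are made controlled.

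\textbf{Correctness (weak promise).} If $\bs p = \bs 0$, slice~1 writes the prefix ANDs $x_0\wedge\cdots\wedge x_{i+1}$ into $p_i$. Slice~2 then applies, in a single layer, $CCX$ gates with controls $p_{i-1}$ and $x_{i+1}$ and target $x_{i+2}$. These implement the carries, and after them the prefix products are those of the incremented value. Slice~3 uncomputes the ladder using the updated bits. The final layer of $CX$ gates from $c$ and the $X$ on $x_0$ fix the low bits, so that the whole circuit acts as $+1$ when $c=1$.

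\textbf{Correctness (strong promise).} When $c=0$, only slices~1 and~3 act. Slice~3 is then exactly the inverse of slice~1 on the unchanged data, so $\bs p$ is restored for every input. When $c=1$, the same compute--uncompute structure holds relative to the modified target, so $\bs p$ is again restored. I would verify this by checking that in the uncompute ladder each $p_i$ is XORed with the same product as in the compute ladder, only evaluated on bits $x_j$ that slice~2 left invariant on the relevant controls. This check is the main obstacle: the promise-preservation argument must hold for arbitrary $\bs p$, not merely $\bs 0$.

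\textbf{Cost.} Adding $k$ controls affects only slices~2 and~4, by the controlled-conjugation identity of Figure~\ref{fig:controlled_conjugation}, since slices~1 and~3 are uncontrolled ladders.
\begin{itemize}
\item Slices~1 and~3 are promise gates with target $\mathcal{L}_2$ and promise register $\bs p$. By Corollary~\ref{cor:promise_ladder} and the construction of Lemma~\ref{lem:ladder_2_n_ancilla}, each costs $\mathcal{O}(n)$ gates and $\mathcal{O}(\log n)$ depth. Its extra workspace is supplied by the promise register, and the compute--uncompute structure means it preserves that register for all inputs.
\item Slice~2 with $k+1$ controls (the original control $c$ plus the $k$ new ones) becomes a $k$-controlled parallel layer of involutory $CCX$ gates. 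Lemma~\ref{lem:parallel_control_gates} implements it with $\mathcal{O}(n+k)$ gates and $\mathcal{O}(\log n+\log k)$ depth using no ancillae. The resulting $C^3X$ gates are reduced to $CCX$ via Equation~\eqref{eq:cccx_to_ccx}, taking dirty ancillae by the splitting trick.
\item Slice~4 is a $k$-controlled fan-out together with an $X$. It is realised by Equation~\eqref{eq:controlled_fan_out} and Lemmas~\ref{lem:fanout} and~\ref{lem:optimal_ckx}, with a dirty qubit borrowed from the target register.
\end{itemize}
Summing the four slices gives $\mathcal{O}(k+n)$ gates and $\mathcal{O}(\log(kn))$ depth.
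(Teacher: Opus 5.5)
\textbf{There is a genuine gap in your cost analysis of slices~1 and~3.} You treat them as promise gates with target $\mathcal{L}_2$ and promise register $\bs p$, but $\bs p$ cannot play that role.

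Slice~1 writes the prefix ANDs into $p_0,\dots,p_{n-3}$. So $\bs p$ is part of the $2n-3$ qubits on which the ladder $\mathcal{L}_2^{(n-2)}$ itself acts; it is not a register that stays at $\ket{0}$. Corollary~\ref{cor:promise_ladder}, and Lemma~\ref{lem:ladder_2_n_ancilla} behind it, need a \emph{separate} register of roughly $n-2$ qubits in the promise state to reach $\mathcal{O}(n)$ gates and $\mathcal{O}(\log n)$ depth. The paper offers no linear-size, logarithmic-depth $\mathcal{L}_2$ over $\{CCX,CX,X\}$ without such workspace.

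Gidney's $n-2$ ancillae already occupy all but one qubit of the size-$(n-1)$ promise register, so there is no room left for that workspace. As written, your construction needs a promise register of size roughly $2n-4$, which is exactly the obstruction the paper points out. The register size claimed in the statement therefore does not follow from your argument.

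\textbf{The missing idea is to halve the incrementer before invoking the ladder lemma,} as in Equation~\eqref{eq:promise_inc_decomposition}.
\begin{itemize}
\item By Equation~\eqref{eq:inc_recursive}, the $k$-controlled increment equals an increment of the top $\lfloor n/2\rfloor$ bits, controlled by the $k$ controls and the bottom $\lceil n/2\rceil$ bits, followed by a $k$-controlled increment of the bottom half.
\item Compute each control condition into one extra qubit with a $C^{k+\lceil n/2\rceil}X$ or $C^kX$ gate. By Lemma~\ref{lem:optimal_ckx} these cost $\mathcal{O}(k+n)$ gates and $\mathcal{O}(\log(kn))$ depth.
\item What remains are two \emph{singly} controlled strong promise incrementers on about $n/2$ bits, applied one after the other.
\item Each of these needs about $n/2-2$ promise qubits for Gidney's ancillae plus about $n/2-2$ more as ladder workspace. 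That is at most $2\lceil n/2\rceil-4<n-2$ in total, which fits in the size-$(n-1)$ register.
\item This leaves a spare promise qubit to serve as the extra qubit, made dirty via Equation~\eqref{eq:toggle_detection_inc}.
\end{itemize}

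\textbf{The rest of your proposal is fine or needs only small corrections.}
\begin{itemize}
\item Your handling of slices~2 and~4 via Lemma~\ref{lem:parallel_control_gates} and Equation~\eqref{eq:controlled_fan_out} works, although you count $k+1$ controls where the statement has $k$.
\item Your description misreads Figure~\ref{fig:promise_incrementer}. Slice~2 is the layer $x_{i+2}\mathrel{\oplus}= c\,p_i$ (and $x_1\mathrel{\oplus}= c\,x_0$) followed by a fan-out from $c$. Slice~3 is exactly the reversed ladder.
\item The restoration of $\bs p$ that you flag as the main obstacle is routine. With $c=1$, slice~2 flips $x_j$ only when the ladder qubit paired with $x_j$ in the $CCX$ ladder holds $0$. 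Hence every uncompute gate sees the same product as its partner in slice~1.
\end{itemize}
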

\begin{proof}
    We start from Gidney's circuit for implementing an $n$-bit incrementer with $\mathcal{O}(n)$ $\{CCX$, $CX$, $X\}$ gates, $\mathcal{O}(n)$ circuit depth, and $n-2$ clean ancilla qubits~\cite{Gidney_incrementer}.
    An example for $n=6$ is given in Figure~\ref{fig:gidney_incrementer}.
    Replacing the $n-2$ clean ancilla qubits with a promise register of the same size yields a strong promise gate whose target unitary is the increment operator.

    We then add a control to each gate to obtain a singly controlled promise gate with the incrementer as target unitary.
    By Figure~\ref{fig:controlled_conjugation}, the two ladders of $CCX$ gates in slices~1 and~3 do not need to be controlled.
    For example, for $n=6$ we obtain the circuit of Figure~\ref{fig:promise_incrementer}.
    All the other layers of gates controlled by the top qubit in slices 2 and 4 can be implemented by applying the construction of Lemma~\ref{lem:parallel_control_gates}, with a total of $\mathcal{O}(n)$ $\{CCX, CX, X\}$ gates and circuit depth $\mathcal{O}(\log n)$.

    We implement the two $CCX$ ladders using the construction of Lemma~\ref{lem:ladder_2_n_ancilla}, with $\mathcal{O}(n)$ $CCX$ gates and circuit depth $\mathcal{O}(\log n)$, using $n - 2 - \mathcal{O}(\log n)$ ancilla qubits from the promise register.
    This would require a promise register of size at least $2n - 4$.
    To use a promise register of smaller size, and to generalize to $k$ control qubits, we use the following circuit identity:
    \begin{equation}\label{eq:promise_inc_decomposition}
        \begin{quantikz}[row sep={\the\circuitrowsep,between origins}, column sep=\the\circuitcolsep, align equals at=3]
            \lstick{\ket{0}} & \qwbundle{n-1} & & \push{\diamondsuit}\wire[d][1]{a} & \rstick{\ket{0}} \\
            & \qwbundle{k} & & \ctrl{1} & \\
            & \qwbundle{\!\!\lceil \frac{n}{2} \rceil} & & \gate[2]{+1} & \\
            & \qwbundle{\!\!\lfloor \frac{n}{2} \rfloor} & & & \\
            \lstick{\ket{0}} & & & & \rstick{\ket{0}}
        \end{quantikz}
        \!\!=\!\!
        \begin{quantikz}[row sep={\the\circuitrowsep,between origins}, column sep=\the\circuitcolsep, align equals at=3]
            \lstick{\ket{0}} & \qwbundle{n-1} & & \push{\diamondsuit}\wire[d][1]{a} & \push{\diamondsuit}\wire[d][1]{a} & \rstick{\ket{0}} \\
            & \qwbundle{k} & & \ctrl{1} & \ctrl{1} & \\
            & \qwbundle{\!\!\lceil \frac{n}{2} \rceil} & & \ctrl{1} & \gate{+1} & \\
            & \qwbundle{\!\!\lfloor \frac{n}{2} \rfloor} & & \gate{+1} & & \\
            \lstick{\ket{0}} & & & & & \rstick{\ket{0}}
        \end{quantikz}
        \!\!=\!\!
        \begin{quantikz}[row sep={\the\circuitrowsep,between origins}, column sep=\the\circuitcolsep, align equals at=3]
            \lstick{\ket{0}} & \qwbundle{n-1} & & & \push{\diamondsuit}\wire[d][3]{a} & & & \push{\diamondsuit}\wire[d][2]{a} & & \rstick{\ket{0}} \\
            & \qwbundle{k} & & \ctrl{1} & & \ctrl{1} & \ctrl{3} & & \ctrl{3} & \\
            & \qwbundle{\!\!\lceil \frac{n}{2} \rceil} & & \ctrl{2} & & \ctrl{2} & & \gate{+1} & & \\
            & \qwbundle{\!\!\lfloor \frac{n}{2} \rfloor}  & & & \gate{+1} & & & & & \\
            \lstick{\ket{0}} & & & \targ{} & \ctrl{-1} & \targ{} & \targ{} & \ctrl{-2} & \targ{} & \rstick{\ket{0}}
        \end{quantikz}
    \end{equation}
    The two singly controlled promise gates whose target unitary is the incrementer can then be implemented as described above with a gate count of $\mathcal{O}(n)$ and circuit depth $\mathcal{O}(\log n)$, using $2\lceil \frac{n}{2} \rceil - 4 < n - 2$ ancilla qubits from the promise register.

    Finally, we exchange the clean ancilla qubit from Equation~\eqref{eq:promise_inc_decomposition} for a dirty ancilla (which can be taken from the promise register) using Equation~\eqref{eq:toggle_detection_inc}.
    Thus, the total gate count is $\mathcal{O}(k+n)$ and the circuit depth is $\mathcal{O}(\log(kn))$, using a promise register of size $n-1$.
\end{proof}

The following lemma relies on Lemma~\ref{lem:promise_inc_linear_ancillae} to further reduce the size of the promise register.
\begin{lemma}\label{lem:promise_inc_sqrt_ancillae}
    A controlled strong promise gate with a promise register of size $2\lceil\sqrt{n}\rceil$ whose target unitary is the $n$-bit increment operator can be implemented with $\mathcal{O}(n)$ $\{CCX, CX, X\}$ gates and $\mathcal{O}(\log n)$ circuit depth.
\end{lemma}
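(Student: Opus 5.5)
We will mirror the proof of Lemma~\ref{lem:promise_v_sqrt_ancillae}, with Lemma~\ref{lem:promise_inc_linear_ancillae} playing the role that Corollary~\ref{cor:promise_ladder} played there. Set $s=\lceil\sqrt{n}\rceil$ and split the $n$-bit target register into $s$ consecutive blocks $B_1,\dots,B_s$, listed from least to most significant, each of size at most $s$.

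Applying Equation~\eqref{eq:inc_recursive} repeatedly, the controlled incrementer becomes a sequence of block incrementers. For each $j$, block $B_j$ is incremented under the control $c$ and the condition that all lower blocks $B_1,\dots,B_{j-1}$ are all-ones. The gates are ordered from the most significant block down, so that each condition is evaluated on the original input.

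We introduce $s$ clean flag qubits $f_1,\dots,f_{s}$, where $f_j$ records that $B_1,\dots,B_j$ are all ones. A ladder of $C^{|B_j|+1}X$ gates computes $f_j$ from $f_{j-1}$ and $B_j$. This ladder is computed first and uncomputed last, exactly like the first and last ladders in Figure~\ref{fig:ctrl_promise_v_sqrt}. By Corollary~\ref{cor:ckx_ladder} it costs $\mathcal{O}(n)$ gates and $\mathcal{O}(\log n)$ depth, using $s$ ancillae from the promise register.

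In between, we apply to each block $B_j$ an incrementer controlled by $c$ and $f_{j-1}$. These can all be applied in parallel, provided we handle two issues.

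\textbf{Interference with the flags.} Incrementing $B_j$ changes whether $B_j$ is all ones. However, $f_j,\dots,f_s$ are already computed, and the uncompute ladder must see the same values. So we either apply the block increments before uncomputing in the original order, or use a compute/apply/uncompute pattern per block. We will check that the V-shaped ladder structure handles this, as it did for $\mathcal{V}_2$.

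\textbf{Implementing each controlled block incrementer.} Each block incrementer has $k=2$ controls, namely $c$ and $f_{j-1}$, and acts on at most $s$ bits. By Lemma~\ref{lem:promise_inc_linear_ancillae}, a $k$-controlled strong promise gate with a promise register of size $|B_j|-1<s$ costs $\mathcal{O}(\sqrt n)$ gates and $\mathcal{O}(\log n)$ depth.

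The promise register is supplied by the same device as in Lemma~\ref{lem:promise_v_sqrt_ancillae}. The increment on $B_j$ acts nontrivially only when $f_{j-1}=1$, that is, when $B_{j-1}$ (after $X$ conjugation) is all zeros. So $B_{j-1}$, $X$-conjugated, serves as the promise register for the gate on $B_j$. This yields a promise gate that is controlled by $f_{j-1}$ and $c$.

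Because the promise register of block $j$ is itself a target of block $j-1$, we run two rounds: first the even-indexed blocks, then the odd-indexed ones. Block $B_1$ has no lower block, so it uses the external promise register. This register has size $2s$, of which $s$ qubits hold the flags and $s$ serve $B_1$; this is where the bound $2\lceil\sqrt n\rceil$ comes from. Since the flag qubits themselves must come from the promise register, when the promise fails they are no longer clean. Here we rely on strong promise semantics: all ladders are compute/uncompute, so the register is preserved on every input.

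\textbf{Cost.} There are $s$ block incrementers, each costing $\mathcal{O}(\sqrt n)$ gates at $\mathcal{O}(\log n)$ depth, plus the ladders. This gives $\mathcal{O}(n)$ gates and $\mathcal{O}(\log n)$ depth in total.

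\textbf{Main obstacle.} The hard step is the interaction between the in-place modification of the blocks and the flags and promise registers. The incrementer is not self-inverse, so an increment on $B_{j-1}$ destroys the all-ones condition that $B_j$'s gate and promise rely on. I would resolve this by processing blocks from most to least significant, which matches the order of Equation~\eqref{eq:inc_recursive}. I would also replace the dirty-toggle trick by Equation~\eqref{eq:toggle_detection_inc} wherever a clean flag would otherwise be needed. The point to verify carefully is that the conjugating $X$ gates around each promise register commute appropriately with the round structure.
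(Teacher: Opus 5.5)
Your skeleton matches the paper's: blocks of size $\lceil\sqrt n\rceil$ from Equation~\eqref{eq:inc_recursive}, a flag ladder handled by Corollary~\ref{cor:ckx_ladder}, block increments via Lemma~\ref{lem:promise_inc_linear_ancillae} with the neighbouring block as promise register, two rounds, and $\lceil\sqrt n\rceil$ extra promise qubits for $B_1$. However, the two points you defer are exactly where the proof lies, and the resolutions you suggest do not work.

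\textbf{Uncomputing the flags.} Once the blocks have been incremented, the same ladder cannot uncompute the flags. If $B_1=1^s$, it becomes $0^s$ and $f_1$ stays $1$; if $B_1$ holds $2^s-2$, then $f_1$ gets wrongly set. The $\mathcal{V}_2$ analogy does not help, because there the middle $CCX$ gates never modify the blocks. Processing blocks one at a time (most significant first, or compute/apply/uncompute per block) forces $\sim\sqrt n$ sequential steps, i.e.\ depth $\Omega(\sqrt n)$. The missing idea is Equation~\eqref{eq:parallel_inc}: whenever $f_{j-1}=1$, $B_j$ was all ones before its increment iff it is all zeros after it. So the flags are uncomputed by the reversed ladder conjugated by $X$ on every block. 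This is what lets all block increments sit between the two ladders in a constant number of layers. In the controlled version, the ladders stay uncontrolled by Figure~\ref{fig:controlled_conjugation}, and these two $X$ layers become $c$-controlled fan-outs. The same fact answers your round-structure question. In the second round the promise block has already been incremented, so whenever its flag is set it is in $\ket{0}^{\otimes s}$ and no $X$ conjugation is needed.

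\textbf{The shared control $c$.} You make every block increment controlled by both $c$ and $f_{j-1}$ and then apply them ``in parallel''. But they all act on the qubit $c$, so they cannot share a layer. Invoking Lemma~\ref{lem:promise_inc_linear_ancillae} with $k=2$ for each block separately gives depth $\Omega(\sqrt n)$, not $\mathcal{O}(\log n)$. The paper removes $c$ from the parallel gates by applying Equation~\eqref{eq:toggle_detection_inc} to a whole round at once:
\begin{itemize}
\item each promise increment (except $B_1$'s, which is controlled by $c$ directly) is controlled by its flag and by a neighbouring flag qubit used as a dirty ancilla;
\item a $c$-controlled fan-out then flips those dirty qubits and the target blocks;
\item the corresponding promise decrements are applied;
\item a second $c$-controlled fan-out closes the pattern.
\end{itemize}
The fan-outs cost $\mathcal{O}(n)$ gates and $\mathcal{O}(\log n)$ depth by Lemma~\ref{lem:fanout}, and the promise gates now act on disjoint qubits. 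Only then does your per-block cost of $\mathcal{O}(\sqrt n)$ gates and $\mathcal{O}(\log n)$ depth add up to the claimed $\mathcal{O}(n)$ gates and $\mathcal{O}(\log n)$ depth.
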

\begin{proof}
    Using Equation~\eqref{eq:inc_recursive}, an incrementer can be implemented as follows:
    \begin{equation}\label{eq:inc_sqrt_decomposition}
        \begin{quantikz}[row sep={\the\circuitrowsep,between origins}, column sep=\the\circuitcolsep]
            & \qwbundle{\!\!\!\!\!\!\!\lceil\sqrtsign{n}\rceil} & & \ctrl{1} & \dots & \ctrl{1} & \ctrl{1} & \ctrl{1} & \ctrl{1} & \ctrl{1} & \ctrl{1} & \gate{+1} & \\
            & \qwbundle{\!\!\!\!\!\!\!\lceil\sqrtsign{n}\rceil} & & \ctrl{1} & \dots & \ctrl{1} & \ctrl{1} & \ctrl{1} & \ctrl{1} & \ctrl{1} & \gate{+1} & & \\
            & \qwbundle{\!\!\!\!\!\!\!\lceil\sqrtsign{n}\rceil} & & \ctrl{1} & \dots & \ctrl{1} & \ctrl{1} & \ctrl{1} & \ctrl{1} & \gate{+1} & & & \\
            & \qwbundle{\!\!\!\!\!\!\!\lceil\sqrtsign{n}\rceil} & & \ctrl{1} & \dots & \ctrl{1} & \ctrl{1} & \ctrl{1} & \gate{+1} & & & & \\
            & \qwbundle{\!\!\!\!\!\!\!\lceil\sqrtsign{n}\rceil} & & \ctrl{1} & \dots & \ctrl{1} & \ctrl{1} & \gate{+1} & & & & & \\
            & \qwbundle{\!\!\!\!\!\!\!\lceil\sqrtsign{n}\rceil} & & \ctrl{1} & \dots & \ctrl{1} & \gate{+1} & & & & & & \\
            \setwiretype{n} & \push{\vdots} & & \push{\vdots} & & \push{\vdots} & & & & & & & 
        \end{quantikz}
    \end{equation}
    where the last register has size at most $\lceil\sqrt{n}\rceil$.
    We then use the following circuit identity:
    \begin{equation}\label{eq:parallel_inc}
        \begin{quantikz}[row sep={\the\circuitrowsep,between origins}, column sep=\the\circuitcolsep]
            & \qwbundle{k} & \ctrl{2} & \gate{+1} & \\
            \lstick{\ket{0}} & & & & \rstick{\ket{0}} \\
            & \qwbundle{} & \gate{+1} & & & 
        \end{quantikz}
        \;=\;
        \begin{quantikz}[row sep={\the\circuitrowsep,between origins}, column sep=\the\circuitcolsep]
            & \qwbundle{k} & \ctrl{1} & \gate{+1} & \gate{X} & \ctrl{1} & \gate{X} & \\
            \lstick{\ket{0}} & & \targ{} & \ctrl{1} & & \targ{} & & \rstick{\ket{0}} \\
            & \qwbundle{} & & \gate{+1} & & & & 
        \end{quantikz}
    \end{equation}
    which holds because if the top register is in the $\ket{1}^{\otimes k}$ state, the increment operation maps it to the $\ket{0}^{\otimes k}$ state.
    The $X$ gates then switch the register back to the $\ket{1}^{\otimes k}$ state so that the two $C^kX$ gates have the same effect on the ancilla qubit.

    By Equations~\eqref{eq:inc_sqrt_decomposition} and~\eqref{eq:parallel_inc}, we can then obtain the following implementation of an $n$-bit incrementer:
    \begin{equation}\label{eq:parallel_inc_sqrt_ancillae}
        \begin{quantikz}[row sep={.6cm,between origins}, column sep=\the\circuitcolsep]
            & \qwbundle{\!\!\!\!\!\!\!\lceil\sqrtsign{n}\rceil} & \ctrl{1} & & & & & & \dots & \gate{+1} & \gate{X} & \dots & & & & & & & \ctrl{1} & \gate{X} & \\
            \lstick{\ket{0}} & & \targ{} & \ctrl{1} & & & & & \dots & \ctrl{1} & & \dots & & & & & & \ctrl{1} & \targ{} & & \rstick{\ket{0}} \\
            & \qwbundle{\!\!\!\!\!\!\!\lceil\sqrtsign{n}\rceil} & & \ctrl{1} & & & & & \dots & \gate{+1} & \gate{X} & \dots & & & & & & \ctrl{1} & & \gate{X} & \\
            \lstick{\ket{0}} & & & \targ{} & \ctrl{1} & & & & \dots & \ctrl{1} & & \dots & & & & & \ctrl{1} & \targ{} & & & \rstick{\ket{0}} \\
            & \qwbundle{\!\!\!\!\!\!\!\lceil\sqrtsign{n}\rceil} & & & \ctrl{1} & & & & \dots & \gate{+1} & \gate{X} & \dots & & & & & \ctrl{1} & & & \gate{X} & \\
            \lstick{\ket{0}} & & & & \targ{} & \ctrl{1} & & & \dots & \ctrl{1} & & \dots & & & \ctrl{1} & & \targ{} & & & & \rstick{\ket{0}} \\
            & \qwbundle{\!\!\!\!\!\!\!\lceil\sqrtsign{n}\rceil} & & & & \ctrl{1} & & & \dots & \gate{+1} & \gate{X} & \dots & & & \ctrl{1} & & & & & \gate{X} & \\
            \lstick{\ket{0}} & & & & & \targ{} & \ctrl{1} & & \dots & \ctrl{1} & & \dots & & \ctrl{1} & \targ{} & & & & & & \rstick{\ket{0}} \\
            & \qwbundle{\!\!\!\!\!\!\!\lceil\sqrtsign{n}\rceil} & & & & & \ctrl{1} & & \dots & \gate{+1} & \gate{X} & \dots & & \ctrl{1} & & & & & & \gate{X} & \\
            \lstick{\ket{0}} & & & & & & \targ{} & \ctrl{1} & \dots & \ctrl{1} & & \dots & \ctrl{1} & \targ{} & & & & & & & \rstick{\ket{0}} \\
            & \qwbundle{\!\!\!\!\!\!\!\lceil\sqrtsign{n}\rceil} & & & & & & \ctrl{1} & \dots & \gate{+1} & \gate{X} & \dots & \ctrl{1} & & & & & & & \gate{X} & \\
            \setwiretype{n} & \push{\vdots} & & & & & & \push{\vdots} & & & & & \push{\vdots} & & & & & & & &
        \end{quantikz}
    \end{equation}
    using $\lceil \sqrt{n} \rceil$ clean ancilla qubits.
    Each controlled increment on a register of size $\lceil\sqrt{n}\rceil$ is activated only when all qubits of the $\lceil\sqrt{n}\rceil$-qubit register above it are in the $\ket{1}$ state.
    That register can therefore serve as a promise register for the controlled increment, as in Figure~\ref{fig:conditionally_clean_ancillae}.
    Since each promise gate uses the register above it as its promise register, we cannot apply all promise gates simultaneously; instead, we proceed in two rounds.
    Moreover, if the increment on the register above has already been performed, there is no need to apply $X$ gates to the promise register, as the increment operator maps the $\ket{1}^{\otimes k}$ state to the $\ket{0}^{\otimes k}$ state.
    Applying this to the circuit of Equation~\eqref{eq:parallel_inc_sqrt_ancillae}, we obtain:
    \begin{equation}
        \begin{quantikz}[row sep={.6cm,between origins}, column sep=\the\circuitcolsep]
            \lstick{\ket{0}} & \qwbundle{\!\!\!\!\!\!\!\lceil\sqrtsign{n}\rceil} & &[-0.25cm] &[-0.25cm] &[-0.25cm] &[-0.25cm] &[-0.17cm] &[-0.1cm] \dots & & \push{\diamondsuit}\wire[d][1]{a} & & & & \dots &[-0.1cm] &[-0.17cm] &[-0.25cm] &[-0.25cm] &[-0.25cm] &[-0.25cm] &[-0.25cm] & & \rstick{\ket{0}} \\
            & \qwbundle{\!\!\!\!\!\!\!\lceil\sqrtsign{n}\rceil} & \ctrl{1} & & & & & & \dots & & \gate{+1} & & \push{\diamondsuit}\wire[d][1]{a} & \gate{X} & \dots & & & & & & & \ctrl{1} & \gate{X} & \\
            \lstick{\ket{0}} & & \targ{} & \ctrl{1} & & & & & \dots & & & & \ctrl{1} & & \dots & & & & & & \ctrl{1} & \targ{} & & \rstick{\ket{0}} \\
            & \qwbundle{\!\!\!\!\!\!\!\lceil\sqrtsign{n}\rceil} & & \ctrl{1} & & & & & \dots & \gate{X} & \push{\diamondsuit}\wire[d][1]{a} & \gate{X} & \gate{+1} & \gate{X} & \dots & & & & & & \ctrl{1} & & \gate{X} & \\
            \lstick{\ket{0}} & & & \targ{} & \ctrl{1} & & & & \dots & & \ctrl{1} & & & & \dots & & & & & \ctrl{1} & \targ{} & & & \rstick{\ket{0}} \\
            & \qwbundle{\!\!\!\!\!\!\!\lceil\sqrtsign{n}\rceil} & & & \ctrl{1} & & & & \dots & & \gate{+1} & & \push{\diamondsuit}\wire[d][1]{a} & \gate{X} & \dots & & & & & \ctrl{1} & & & \gate{X} & \\
            \lstick{\ket{0}} & & & & \targ{} & \ctrl{1} & & & \dots & & & & \ctrl{1} & & \dots & & & \ctrl{1} & & \targ{} & & & & \rstick{\ket{0}} \\
            & \qwbundle{\!\!\!\!\!\!\!\lceil\sqrtsign{n}\rceil} & & & & \ctrl{1} & & & \dots & \gate{X} & \push{\diamondsuit}\wire[d][1]{a} & \gate{X} & \gate{+1} & \gate{X} & \dots & & & \ctrl{1} & & & & & \gate{X} & \\
            \lstick{\ket{0}} & & & & & \targ{} & \ctrl{1} & & \dots & & \ctrl{1} & & & & \dots & & \ctrl{1} & \targ{} & & & & & & \rstick{\ket{0}} \\
            & \qwbundle{\!\!\!\!\!\!\!\lceil\sqrtsign{n}\rceil} & & & & & \ctrl{1} & & \dots & & \gate{+1} & & \push{\diamondsuit}\wire[d][1]{a} & \gate{X} & \dots & & \ctrl{1} & & & & & & \gate{X} & \\
            \lstick{\ket{0}} & & & & & & \targ{} & \ctrl{1} & \dots & & & & \ctrl{1} & & \dots & \ctrl{1} & \targ{} & & & & & & & \rstick{\ket{0}} \\
            & \qwbundle{\!\!\!\!\!\!\!\lceil\sqrtsign{n}\rceil} & & & & & & \ctrl{1} & \dots & \gate{X} & \push{\diamondsuit}\wire[d][1]{a} & \gate{X} & \gate{+1} & \gate{X} & \dots & \ctrl{1} & & & & & & & \gate{X} & \\
            \setwiretype{n} & \push{\vdots} & & & & & & \push{\vdots} & & & \push{\vdots} & & & & & \push{\vdots} & & & & &
        \end{quantikz}
    \end{equation}
    where $\lceil \sqrt{n} \rceil$ ancilla qubits were inserted at the top to serve as the promise register for the topmost promise gate.

    We then replace the clean ancilla qubits with a promise register of size $2\lceil\sqrt{n}\rceil$ and add a control to each gate to obtain a controlled strong promise gate whose target unitary is the increment operator.
    By Figure~\ref{fig:controlled_conjugation}, the two ladders of $C^kX$ gates do not need to be controlled.
    Using $\lceil \sqrt{n} \rceil$ qubits from the promise register, these two ladders can be implemented with $\mathcal{O}(n)$ gates and $\mathcal{O}(\log n)$ circuit depth, as stated in Corollary~\ref{cor:ckx_ladder}.
    Finally, using available dirty ancilla qubits from the promise register, we apply the following circuit identity based on Equation~\eqref{eq:toggle_detection_inc}:
    \begin{equation}
        \begin{quantikz}[row sep={.6cm,between origins}, column sep=\the\circuitcolsep, align equals at=6.5, execute at end picture={\draw [line width=0.8pt] (topcontrol.south) -- (ccxbox.north -| topcontrol.south);}]
            & & & |[alias=topcontrol, fill, circle, minimum size=4pt, inner sep=0pt]|{} & & \\
            & \qwbundle{} & & \push{\diamondsuit}\wire[d][1]{a}\gategroup[11, steps=1, style={name=ccxbox, inner sep=1pt}]{} & & \\
            & \qwbundle{} & & \gate{+1} & & \\
            & & & & & \\
            & \qwbundle{} & & \push{\diamondsuit}\wire[d][1]{a} & & \\
            & & & \ctrl{1} & & \\
            & \qwbundle{} & & \gate{+1} & & \\
            \setwiretype{n} & & & \push{\vdots} & & \\
            & & & & & \\
            & \qwbundle{} & & \push{\diamondsuit}\wire[d][1]{a} & & \\
            & & & \ctrl{1} & & \\
            & \qwbundle{} & & \gate{+1}\wire[u][1]{a} & &
        \end{quantikz}
        \;=\;
        \begin{quantikz}[row sep={.6cm,between origins}, column sep=\the\circuitcolsep, align equals at=6.5]
            & & & \ctrl{1} & \ctrl{7} & & \ctrl{7} & & \\
            & & & \push{\diamondsuit}\wire[d][1]{a} & & & & & \\
            & & & \gate{+1} & & & & & \\
            & & & \ctrl{1} & \targ{} & \ctrl{1} & \targ{} & & \\
            & \qwbundle{} & & \push{\diamondsuit}\wire[d][1]{a} & & \push{\diamondsuit}\wire[d][1]{a} & & & \\
            & & & \ctrl{1} & & \ctrl{1} & & & \\
            & \qwbundle{} & & \gate{+1} & \targ{} & \gate{-1} & \targ{} & & \\
            \setwiretype{n} & & & \push{\vdots} & \push{\vdots} & \push{\vdots} & \push{\vdots} & & \\
            & & & \ctrl{1} & \targ{}\wire[u,shorten >=0.18cm][1]{a} & \ctrl{1} & \targ{}\wire[u,shorten >=0.18cm][1]{a} & & \\
            & \qwbundle{} & & \push{\diamondsuit}\wire[d][1]{a} & & \push{\diamondsuit}\wire[d][1]{a} & & & \\
            & & & \ctrl{1} & & \ctrl{1} & & & \\
            & \qwbundle{} & & \gate{+1}\wire[u][1]{a} & \targ{}\wire[u][3]{a} & \gate{-1}\wire[u][1]{a} & \targ{}\wire[u][3]{a} & & 
        \end{quantikz}
    \end{equation}
    We obtain the circuit presented in Figure~\ref{fig:ctrl_promise_inc_sqrt}.

    We now analyze the cost of the resulting circuit.
    The two ladders of $C^kX$ gates contribute $\mathcal{O}(n)$ gates and $\mathcal{O}(\log n)$ circuit depth, as already discussed.
    The fan-out gates can be implemented with $\mathcal{O}(n)$ $CX$ gates and $\mathcal{O}(\log n)$ circuit depth (Lemma~\ref{lem:fanout}).
    The controlled strong promise gates, whose target unitary is the $\lceil\sqrt{n}\rceil$-bit increment or decrement operator, each require $\mathcal{O}(\sqrt{n})$ gates and $\mathcal{O}(\log n)$ circuit depth (Lemma~\ref{lem:promise_inc_linear_ancillae}).
    Since there are $\mathcal{O}(\sqrt{n})$ such gates arranged in a constant number of layers, their total cost is $\mathcal{O}(n)$ gates and $\mathcal{O}(\log n)$ circuit depth.
    The overall complexity of the circuit presented in Figure~\ref{fig:ctrl_promise_inc_sqrt} is therefore $\mathcal{O}(n)$ $\{CCX, CX, X\}$ gates and $\mathcal{O}(\log n)$ circuit depth.
\end{proof}

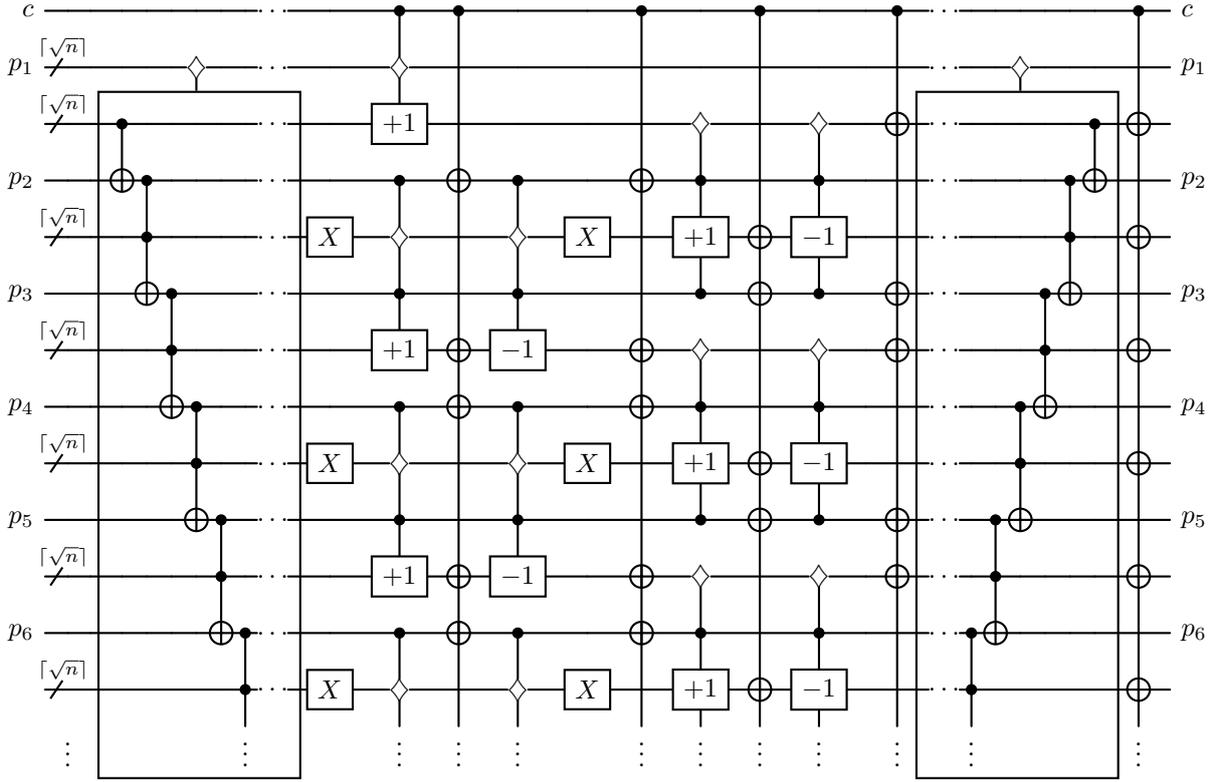
\begin{figure}[t]
    \makebox[\textwidth][c]{%
    \begin{quantikz}[row sep={\the\circuitrowsep,between origins}, column sep=\the\circuitcolsep, execute at end picture={
        \draw [line width=0.8pt] (topdiamond.south) -- (ccxbox.north -| topdiamond.south);
        \draw [line width=0.8pt] (topdiamonddag.south) -- (ccxboxdag.north -| topdiamonddag.south);
    }]
    \lstick{$c$} & & & &[-0.25cm] &[-0.25cm] &[-0.25cm] &[-0.25cm] &[-0.17cm] &[-0.17cm] \dots & & \ctrl{1} & \ctrl{13} & & & \ctrl{13} & & \ctrl{13} & & & \ctrl{13} & \dots &[-0.17cm] &[-0.17cm] &[-0.25cm] &[-0.25cm] &[-0.25cm] &[-0.25cm] & \ctrl{13} & \rstick{$c$} \\
    \lstick{$p_1$} & \qwbundle{\!\!\!\!\!\!\!\lceil\sqrtsign{n}\rceil} & & & & & |[alias=topdiamond]| \push{\diamondsuit} & & & \dots & & \push{\diamondsuit}\wire[d][1]{a} & & & & & & & & & & \dots & & & |[alias=topdiamonddag]| \push{\diamondsuit} & & & & & \rstick{$p_1$} \\
    & \qwbundle{\!\!\!\!\!\!\!\lceil\sqrtsign{n}\rceil} & & \ctrl{1}\gategroup[12, steps=7, style={name=ccxbox, inner sep=1pt}]{} & & & & & & \dots & & \gate{+1} & & & & & \push{\diamondsuit}\wire[d][1]{a} & & \push{\diamondsuit}\wire[d][1]{a} & & \targ{} & \gategroup[12, steps=7, style={name=ccxboxdag, inner sep=1pt}]{}\dots & & & & & & \ctrl{1} & \targ{} & \\
    \lstick{$p_2$} & & & \targ{} & \ctrl{1} & & & & & \dots & & \ctrl{1} & \targ{} & \ctrl{1} & & \targ{} & \ctrl{1} & & \ctrl{1} & & & \dots & & & & & \ctrl{1} & \targ{} & & \rstick{$p_2$} \\
    & \qwbundle{\!\!\!\!\!\!\!\lceil\sqrtsign{n}\rceil} & & & \ctrl{1}  & & & & & \dots & \gate{X} & \push{\diamondsuit}\wire[d][1]{a} & & \push{\diamondsuit}\wire[d][1]{a} & \gate{X} & & \gate{+1} & \targ{} & \gate{-1} & & & \dots & & & & & \ctrl{1} & & \targ{} & \\
    \lstick{$p_3$} & & & & \targ{} & \ctrl{1} & & & & \dots & & \ctrl{1} & & \ctrl{1} & & & \ctrl{-1} & \targ{} & \ctrl{-1} & & \targ{} & \dots & & & & \ctrl{1} & \targ{} & & & \rstick{$p_3$} \\
    & \qwbundle{\!\!\!\!\!\!\!\lceil\sqrtsign{n}\rceil} & & & & \ctrl{1}  & & & & \dots & & \gate{+1} & \targ{} & \gate{-1} & & \targ{} & \push{\diamondsuit}\wire[d][1]{a} & & \push{\diamondsuit}\wire[d][1]{a} & & \targ{} & \dots & & & & \ctrl{1} & & & \targ{} & \\
    \lstick{$p_4$} & & & & & \targ{} & \ctrl{1} & & & \dots & & \ctrl{1} & \targ{} & \ctrl{1} & & \targ{} & \ctrl{1} & & \ctrl{1} & & & \dots & & & \ctrl{1} & \targ{} & & & & \rstick{$p_4$} \\
    & \qwbundle{\!\!\!\!\!\!\!\lceil\sqrtsign{n}\rceil} & & & & & \ctrl{1} & & & \dots & \gate{X} & \push{\diamondsuit}\wire[d][1]{a} & & \push{\diamondsuit}\wire[d][1]{a} & \gate{X} & & \gate{+1} & \targ{} & \gate{-1} & & & \dots & & & \ctrl{1} & & & & \targ{} & \\
    \lstick{$p_5$} & & & & & & \targ{} & \ctrl{1} & & \dots & & \ctrl{1} & & \ctrl{1} & & & \ctrl{-1} & \targ{} & \ctrl{-1} & & \targ{} & \dots & & \ctrl{1} & \targ{} & & & & & \rstick{$p_5$} \\
    & \qwbundle{\!\!\!\!\!\!\!\lceil\sqrtsign{n}\rceil} & & & & & & \ctrl{1} & & \dots & & \gate{+1} & \targ{} & \gate{-1} & & \targ{} & \push{\diamondsuit}\wire[d][1]{a} & & \push{\diamondsuit}\wire[d][1]{a} & & \targ{} & \dots & & \ctrl{1} & & & & & \targ{} & \\
    \lstick{$p_6$} & & & & & & & \targ{} & \ctrl{1} & \dots & & \ctrl{1} & \targ{} & \ctrl{1} & & \targ{} & \ctrl{1} & & \ctrl{1} & & & \dots & \ctrl{1} & \targ{} & & & & & & \rstick{$p_6$} \\
    & \qwbundle{\!\!\!\!\!\!\!\lceil\sqrtsign{n}\rceil} & & & & & & & \ctrl{1} & \dots & \gate{X} & \push{\diamondsuit}\wire[d][1]{a} & & \push{\diamondsuit}\wire[d][1]{a} & \gate{X} & & \gate{+1}\wire[d][1]{a} & \targ{} & \gate{-1}\wire[d][1]{a} & & & \dots & \ctrl{1} & & & & & & \targ{} & \\
    \setwiretype{n} & \push{\vdots} & & & & & & & \push{\vdots} & & & \push{\vdots} & \push{\vdots} & \push{\vdots} & & \push{\vdots} & \push{\vdots} & \push{\vdots} & \push{\vdots} & & \push{\vdots} & & \push{\vdots} & & & & & & \push{\vdots} &
\end{quantikz}
}
\caption{Quantum circuit for a controlled strong promise gate whose target unitary is the increment operator. $c$ denotes the control qubit and $\bs p$ denotes the promise register of size $2\lfloor\sqrt{n}\rfloor$.}\label{fig:ctrl_promise_inc_sqrt}
\end{figure}

We now use Lemma~\ref{lem:promise_inc_sqrt_ancillae} to prove the following theorem.
\begin{theorem}\label{thm:incrementer}
    The $n$-bit increment operator can be implemented over the $\{CCX, CX, X\}$ gate set with $\Theta(n)$ gates and $\Theta(\log n)$ circuit depth, using one dirty ancilla qubit.
\end{theorem}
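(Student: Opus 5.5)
Following the pattern of Theorem~\ref{thm:comparator}, the plan is a recursion that shrinks the problem from $n$ to $\mathcal{O}(\sqrt{n})$ bits. Split the $n$-qubit register into a low part $\bs x_L$ of size $\beta = 2\lceil\sqrt{n}\rceil$ and a high part $\bs x_H$ of size $\alpha = n - \beta$. By Equation~\eqref{eq:inc_recursive},
\[
+1_n \;=\; (+1 \text{ on } \bs x_L)\cdot\bigl(C^{\beta}(+1 \text{ on } \bs x_H)\bigr),
\]
where the high increment is controlled by all bits of $\bs x_L$ and is applied first, before $\bs x_L$ is incremented. The ordering is the point of the decomposition: the high increment is applied exactly when $\bs x_L = \bs 1$, i.e.\ when a carry propagates.

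\textbf{Main step.} Implement the multi-controlled incrementer $C^{\beta}(+1_\alpha)$ by converting the control register into a promise register. Since the incrementer is not involutory, I will use Equation~\eqref{eq:toggle_detection_inc} in place of Figure~\ref{fig:toggle_detection}. Take one dirty ancilla $\ket{\psi}$, which is the single external dirty qubit. Then apply:
\begin{itemize}
\item a controlled strong promise incrementer on $\bs x_H$, controlled by $\ket{\psi}$, with promise register $X^{\otimes\beta}\bs x_L$;
\item a $C^{\beta}X$-controlled fan-out from $\bs x_L$ onto $\ket{\psi}$ and onto $\bs x_H$;
\item a controlled strong promise decrement, again with promise register $X^{\otimes\beta}\bs x_L$;
\item the same controlled fan-out again.
\end{itemize}
The promise register consists of the bits of $\bs x_L$ conjugated by $X$ gates, exactly as in Figure~\ref{fig:conditionally_clean_ancillae}.

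\textbf{Correctness.} I need to check that this works with strong promise gates.
\begin{itemize}
\item If $\bs x_L \neq \bs 1$: the promise gates preserve the promise register, and the fan-outs act trivially. The increment and decrement then share the same unspecified unitaries, $\hat U_{\bs j}$ and $\hat U_{\bs j}^\dagger$, so they cancel by the convention of Section~\ref{sub:promise_gate_def}.
\item If $\bs x_L = \bs 1$: the promise holds, so the circuit acts as Equation~\eqref{eq:toggle_detection_inc} and yields $+1$ on $\bs x_H$.
\end{itemize}
The decrement is obtained by $X$-conjugating the increment, as in Equation~\eqref{eq:inc_x_dec}, so it uses the same promise construction.

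\textbf{Cost.} Lemma~\ref{lem:promise_inc_sqrt_ancillae} requires a promise register of size $2\lceil\sqrt{\alpha}\rceil \le \beta$, which we have. Each controlled promise incrementer therefore costs $\mathcal{O}(n)$ gates and $\mathcal{O}(\log n)$ depth. The controlled fan-outs are built by Equation~\eqref{eq:controlled_fan_out}, combining Lemma~\ref{lem:fanout} with Lemma~\ref{lem:optimal_ckx}. The dirty ancilla needed for the $C^{\beta}X$ can be borrowed from $\bs x_H$.

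\textbf{Recursion.} The remaining $\beta$-bit increment on $\bs x_L$ is handled by recursion. It uses the same external dirty ancilla; alternatively, any qubit of $\bs x_H$ is idle at that point and can serve as a dirty ancilla. This gives
\[
C(n) = \mathcal{O}(n) + C(2\lceil\sqrt{n}\rceil), \qquad D(n) = \mathcal{O}(\log n) + D(2\lceil\sqrt{n}\rceil),
\]
which solve to $\mathcal{O}(n)$ and $\mathcal{O}(\log n)$ because the depth terms form a geometric sum in $\log n$. The base case is $n$ constant. Matching lower bounds, and the impossibility of using zero ancillae, follow from Equation~\eqref{eq:ckx_from_inc} together with the $C^kX$ lower bounds of Section~\ref{sec:preliminaries}, which gives $\Theta$.

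\textbf{Main obstacle.} The hard part is justifying the dirty-ancilla toggle for a non-involutory target within strong promise gates. Concretely, I must show that the promise decrement used here is exactly the adjoint of the promise increment as a full unitary, so that the cancellation holds off the promise. If that fails, I would fall back to the clean-ancilla version (Figure~\ref{fig:conditionally_clean_ancillae}) and then remove the clean ancilla with Equation~\eqref{eq:toggle_detection_inc} applied to the whole controlled $n$-bit increment block.
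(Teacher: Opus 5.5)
Your proposal is correct and is essentially the paper's proof. It uses the same split of a $2\lceil\sqrt{n}\rceil$-bit low-order block via Equation~\eqref{eq:inc_recursive}, the same dirty-ancilla toggle of Equation~\eqref{eq:toggle_detection_inc} built from controlled strong promise gates (Lemma~\ref{lem:promise_inc_sqrt_ancillae}) with the $X$-conjugated low block as promise register, and the same recursion on the low block. The obstacle you flag needs no fallback: by the paper's convention the decrement promise gate is the reversed circuit of the increment one, which is exactly $\hat{U}^\dagger$ at identical cost since every generator is self-inverse; $X$-conjugating the target of $\hat{U}$ would give a strong promise gate for $-1$ that is not guaranteed to cancel $\hat{U}$ off the promise.
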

\begin{proof}
    Based on Equations~\eqref{eq:inc_recursive} and~\eqref{eq:toggle_detection_inc}, and the circuit identity of Figure~\ref{fig:conditionally_clean_ancillae}, we have:
    \begin{equation}
        \begin{quantikz}[row sep={\the\circuitrowsep,between origins}, column sep=\the\circuitcolsep, align equals at=2]
            & \qwbundle{\alpha} & \gate[2]{+1} & \\
            & \qwbundle{\beta} & & \\
            \lstick{\ket{\psi}} & & & \rstick{\ket{\psi}}
        \end{quantikz}
        =
        \begin{quantikz}[row sep={\the\circuitrowsep,between origins}, column sep=\the\circuitcolsep, align equals at=2]
            & \qwbundle{\alpha} & \gate{X} & \push{\diamondsuit}\wire[d][1]{a} & \gate{X} & \ctrl{2} & \gate{X} & \push{\diamondsuit}\wire[d][1]{a} & \gate{X} & \ctrl{2} & \gate{+1} & \\
            & \qwbundle{\beta} & & \gate{+1} & & \targ{} & & \gate{-1} & & \targ{} & & \\
            \lstick{\ket{\psi}} & & & \ctrl{-1} & & \targ{} & & \ctrl{-1} & & \targ{} & & \rstick{\ket{\psi}}
        \end{quantikz}
    \end{equation}
    Let $\alpha = 2\lceil\sqrt{n}\rceil$ and $\beta = n - \alpha$.
    The $\alpha$-controlled fan-out gates can be implemented with a total cost of $\mathcal{O}(n)$ gates and $\mathcal{O}(\log n)$ circuit depth, using Equation~\eqref{eq:controlled_fan_out}.
    By Lemma~\ref{lem:promise_inc_sqrt_ancillae}, the two controlled strong promise gates whose target unitaries are the $\beta$-bit increment and decrement operators can be implemented with $\mathcal{O}(n)$ $\{CCX, CX, X\}$ gates and $\mathcal{O}(\log n)$ circuit depth.
    We then implement the increment operator on the top $\alpha$ qubits recursively.
    The total gate count satisfies
    \begin{equation}
        C(n) = \Theta(n) + C(2\lceil\sqrt{n}\rceil)
    \end{equation}
    and the total circuit depth satisfies
    \begin{equation}
        D(n) = \Theta(\log n) + D(2\lceil\sqrt{n}\rceil)
    \end{equation}
    which yield $C(n) = \Theta(n)$ and $D(n) = \Theta(\log n)$, respectively.
\end{proof}

As a direct consequence of Theorem~\ref{thm:incrementer}, we obtain an efficient construction for the $k$-controlled incrementer.
\begin{corollary}\label{cor:controlled_incrementer}
    The $k$-controlled $n$-bit increment operator can be implemented over the $\{CCX$, $CX$, $X\}$ gate set with $\Theta(k + n)$ gates and $\Theta(\log(kn))$ circuit depth, using one dirty ancilla qubit.
\end{corollary}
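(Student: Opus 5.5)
The plan is to reduce the $k$-controlled incrementer to an uncontrolled incrementer plus a cheap multi-controlled fan-out, in the same spirit as Equation~\eqref{eq:toggle_detection_inc}. Let $\bs{x}$ be the $n$-bit target register and $\bs{y}$ the $k$-qubit control register. One clean ancilla $a$ would realize $C^k(+1)$ by toggling $a$ with a $C^kX$, applying an $a$-controlled $(n{+}1)$-... increment, and untoggling. We only have a dirty ancilla, so instead we use the identity~\eqref{eq:toggle_detection_inc} directly with the dirty ancilla $\ket{\psi}$. The circuit is:
\begin{enumerate}
\item a $\psi$-controlled $(+1)$ on $\bs x$;
\item a $k$-controlled fan-out from $\bs y$ onto $\bs x$ and $\psi$;
\item a $\psi$-controlled $(-1)$ on $\bs x$;
\item the same $k$-controlled fan-out again.
\end{enumerate}
The identity is justified by Equation~\eqref{eq:inc_x_dec}: conjugating $-1$ by $X^{\otimes n}$ gives $+1$.

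This leaves two types of components to implement.

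\emph{The $k$-controlled fan-outs.} Using Equation~\eqref{eq:controlled_fan_out}, each costs two fan-outs (Lemma~\ref{lem:fanout}) plus one $C^kX$ (Lemma~\ref{lem:optimal_ckx}). The $C^kX$ can borrow a qubit of $\bs x$ as its dirty ancilla, since that qubit is not its target. The total is $\mathcal{O}(k+n)$ gates and $\mathcal{O}(\log(kn))$ depth.

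\emph{The singly controlled increment and decrement.} The cleanest route is to view a $\psi$-controlled increment on $\bs x$ as an uncontrolled $(n{+}1)$-bit increment on the register $(\psi,\bs x)$ with $\psi$ as the least significant bit, corrected by an $X$ on $\psi$. Concretely, $\psi$-controlled $(+1)$ on $\bs x$ equals $(+1)$ on $(\psi,\bs x)$ followed by $X$ on $\psi$, up to low-bit bookkeeping. This follows from Equation~\eqref{eq:inc_recursive} with the top register being the single qubit $\psi$.

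Theorem~\ref{thm:incrementer} implements an $(n{+}1)$-bit incrementer with one dirty ancilla, and a qubit of $\bs y$ serves for that, provided $k\ge1$. The cost is $\mathcal{O}(n)$ gates and $\mathcal{O}(\log n)$ depth. The decrement follows by conjugating with $X$ gates.

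Summing the pieces gives $\mathcal{O}(k+n)$ gates and $\mathcal{O}(\log(kn))$ depth, using only the single dirty ancilla $\psi$. The matching lower bounds follow from Equation~\eqref{eq:ckx_from_inc}: a $k$-controlled $n$-bit incrementer yields a $C^{k+n-1}X$ gate, which needs $\Omega(k+n)$ gates and $\Omega(\log(kn))$ depth. A dirty ancilla is also necessary, since already $k=0$ requires one.

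The main obstacle is keeping the dirty-ancilla accounting consistent. Inside each sub-block (the $C^kX$, the $(n{+}1)$-bit incrementer) we must pick borrowed qubits that are not acted on by that block.
\begin{itemize}
\item The $C^kX$ acts on $\bs y$ and $\psi$, so it borrows from $\bs x$.
\item The incrementer acts on $\psi$ and $\bs x$, so it borrows from $\bs y$.
\end{itemize}
This works whenever $k\ge 1$ and $n\ge1$; the case $k=0$ is Theorem~\ref{thm:incrementer} itself. I would also double-check the low-bit correction in the controlled-increment-as-bigger-increment step, and fall back to Equation~\eqref{eq:toggle_detection_inc}-style reasoning if it does not hold exactly.
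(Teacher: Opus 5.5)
Your construction is correct, but it takes a longer route than the paper's. The paper applies Equation~\eqref{eq:inc_recursive} with the whole control register $\bs y$ as the low-order block. This gives $C^k(+1)_n=(-1)_k\,(+1)_{k+n}$ on $(\bs y,\bs x)$: one $(k+n)$-bit incrementer from Theorem~\ref{thm:incrementer}, using the external dirty ancilla, followed by one $k$-bit decrementer that borrows a qubit of $\bs x$. No toggle identity and no fan-outs are needed.

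You use that same identity only in its one-qubit instance, $C_\psi(+1)_n=X_\psi\,(+1)_{n+1}$ on $(\psi,\bs x)$. This is exact, since the one-bit increment is $X$, so your ``low-bit bookkeeping'' hedge and fallback are unnecessary. You then wrap it inside the dirty-ancilla toggle identity~\eqref{eq:toggle_detection_inc}. That costs two $(n+1)$-bit incrementers plus two $k$-controlled fan-outs, i.e.\ four plain fan-outs and two $C^kX$ gates via Equation~\eqref{eq:controlled_fan_out}. Your borrowing scheme is consistent: a qubit of $\bs y$ serves the incrementers, and a qubit of $\bs x$ serves the $C^kX$ targeting $\psi$. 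The totals of $\mathcal{O}(k+n)$ gates and $\mathcal{O}(\log k+\log n)$ depth follow.

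The trade-off is as follows. The paper's argument is a one-line reduction that calls the expensive incrementer once and covers $k=0$ uniformly. Yours keeps the $k$ controls entirely out of the incrementer machinery: they are touched only by two $C^kX$ gates from Lemma~\ref{lem:optimal_ckx}, whereas the paper feeds them through both a $(k+n)$-bit incrementer and a $k$-bit decrementer. That could help constant factors when $k\gg n$. Finally, the $k$-qubit version of your absorption step, with $X_\psi$ replaced by a $k$-bit decrement on $\bs y$, is exactly the paper's proof and makes the toggle layer superfluous.
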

\begin{proof}
    The following circuit identity, which is a direct consequence of Equation~\eqref{eq:inc_recursive}, expresses the $k$-controlled $n$-bit increment operator in terms of a $(k+n)$-bit increment and a $k$-bit decrement:
    \begin{equation}
        \begin{quantikz}[row sep={\the\circuitrowsep,between origins}, column sep=\the\circuitcolsep, align equals at=1.5]
            & \qwbundle{k} & \ctrl{1} & \\
            & \qwbundle{n} & \gate{+1} &
        \end{quantikz}
        \;=\;
        \begin{quantikz}[row sep={\the\circuitrowsep,between origins}, column sep=\the\circuitcolsep, align equals at=1.5]
            & \qwbundle{k} & \gate[2]{+1} & \gate{-1} & \\
            & \qwbundle{n} & & &
        \end{quantikz}
    \end{equation}
    The $(k+n)$-bit increment can be implemented with $\Theta(k+n)$ gates and $\Theta(\log(k+n))$ circuit depth using one dirty ancilla qubit (Theorem~\ref{thm:incrementer}).
    The $k$-bit decrement, obtained by conjugating the $k$-bit incrementer with $X$ gates, can be implemented with $\Theta(k)$ gates and $\Theta(\log k)$ circuit depth, using one dirty ancilla qubit taken from the $n$-qubit register.
\end{proof}

\section{Classical--quantum adder and modular multiplication}\label{sec:classical_quantum_adder}
In this section, we show how the results of the previous sections can be combined to construct efficient classical--quantum adders, which in turn can serve as building blocks for modular multiplication circuits used in Shor's factoring algorithm~\cite{Shor_1994}.

\subsection{Classical--quantum adder}
An $n$-bit classical--quantum adder computes the following map:
\begin{equation}
    \ket{\bs x} \mapsto \ket{(\bs x + \bs c) \bmod 2^n},
\end{equation}
where $\bs x \in \{0, 1\}^n$ and $\bs c \in \{0, 1\}^n$ is a classically known constant.

We now show how plugging our optimal constructions for incrementers and comparators into the approach of H\"{a}ner et al.~\cite{Haner_2017} yields improved asymptotic complexities for classical--quantum addition.
\begin{theorem}\label{thm:classical_quantum_adder}
    The $n$-bit classical--quantum adder can be implemented over the $\{CCX, CX, X\}$ gate set with $\Theta(n\log n)$ gates and $\Theta(\log^2 n)$ circuit depth, using one dirty ancilla qubit.
\end{theorem}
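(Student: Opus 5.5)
I will follow the divide-and-conquer construction of H\"{a}ner et al.~\cite{Haner_2017}, replacing its two linear-depth subroutines (the carry computation and the incrementer) with the constructions of Theorems~\ref{thm:classical_quantum_comparator} and~\ref{thm:incrementer} and their controlled versions.

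Write $\bs x = (\bs x_L, \bs x_H)$ with $\bs x_L$ the low $\lceil n/2\rceil$ bits, and split $c$ in the same way into $c_L, c_H$. Adding $c$ to $x$ is the same as adding $c_L$ to $x_L$ modulo $2^{\lceil n/2\rceil}$, adding $c_H$ to $x_H$, and adding the carry bit to $x_H$. The carry bit is $[x_L \geq 2^{\lceil n/2\rceil} - c_L]$, a classical--quantum comparison. Following H\"{a}ner et al., I will not store the carry in a clean qubit. Instead I use a dirty qubit $g$ (the single dirty ancilla) together with the toggle trick of Equation~\eqref{eq:toggle_detection_inc}. First apply an incrementer on $x_H$ controlled by $g$. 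Then fan out $g$ onto $x_H$ ($CX$ on each qubit), which turns a later increment into a decrement via Equation~\eqref{eq:inc_x_dec}. Next XOR the carry into $g$ with the comparator, and repeat the controlled increment and fan-out. Finally uncompute $g$ with the comparator. The controlled increments and decrements cancel unless the carry toggled $g$, which leaves a net $+1$ on $x_H$.

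For this step, the controlled increment on $n/2$ bits with one dirty ancilla taken from $x_L$ costs $\mathcal{O}(n)$ gates and $\mathcal{O}(\log n)$ depth by Corollary~\ref{cor:controlled_incrementer}. The comparator acts on $x_L$ with target $g$ and takes a dirty ancilla from $x_H$, costing $\mathcal{O}(n)$ gates and $\mathcal{O}(\log n)$ depth by Theorem~\ref{thm:classical_quantum_comparator}. The fan-outs cost the same by Lemma~\ref{lem:fanout}. After this step, the two halves are independent classical--quantum additions of sizes about $n/2$. Each can use a qubit of the other half as its dirty ancilla, so they run in parallel, and recursion gives
\[
  C(n) = 2C(n/2) + \Theta(n), \qquad D(n) = D(n/2) + \Theta(\log n).
\]
Hence $C(n)=\Theta(n\log n)$ and $D(n)=\Theta(\log^2 n)$. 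The base case of constant size is trivial.

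The step I expect to be the main obstacle is the dirty ancilla bookkeeping. The top level needs $g$ to be an external dirty qubit, since all $n$ register qubits are busy: the comparator reads $x_L$ while the incrementer acts on $x_H$. In the recursive calls, however, each half must find its dirty ancilla among the other half's qubits, and both halves run in parallel. I would check that the comparator and incrementer calls inside one level only ever borrow a qubit from the half they do not touch. In the sequential top-level gadget, I would also make sure the comparator, the incrementer and $g$ never need the same borrowed qubit at the same time. Swapping which half supplies each borrowed qubit should suffice.

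A second point to verify is the overflow convention. The adder is modulo $2^n$, so the top-level carry out of $x_H$ is simply discarded, which is consistent with the incrementer acting modulo $2^{\lfloor n/2\rfloor}$.

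The lower bounds will not be matched here, so the $\Theta$ claims refer to the cost of this construction as given by the recurrence.
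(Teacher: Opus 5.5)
\textbf{The gadget as written does not add the carry.} Otherwise your route is the paper's: H\"{a}ner et al.'s split, the carry computed by the comparator of Theorem~\ref{thm:classical_quantum_comparator}, the controlled incrementer of Corollary~\ref{cor:controlled_incrementer}, fan-outs via Lemma~\ref{lem:fanout}, and the same recurrences.

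You order the gadget as $C_g(+1)$, fan-out, CMP, $C_g(+1)$, fan-out, CMP. The second fan-out is then controlled by $g\oplus\mathrm{carry}$ rather than by the original value of $g$. So the two fan-outs no longer conjugate the middle increment into a decrement. Concretely, with carry $1$ and either value of $g$, you end with $x_H\mapsto\overline{x_H+1}=-x_H-2$ instead of $x_H+1$. Only the carry-$0$ cases come out right.

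The fan-out pair must bracket the whole comparator--increment--comparator block: $C_g(+1)$, fan-out, CMP, $C_g(+1)$, CMP, fan-out, which is the order in Figure~\ref{fig:classical_quantum_adder}. Checking $(g,\mathrm{carry})=(0,0),(0,1),(1,0),(1,1)$ then gives $x_H$, $x_H+1$, $\overline{\overline{x_H+1}+1}=x_H$ and $\overline{\overline{x_H+1}}=x_H+1$. The cost analysis is unchanged.

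\textbf{Running the halves in parallel.} Your statement that each half borrows a qubit of the other half, ``so they run in parallel'', is not literally achievable. A qubit of $x_H$ serving as $g$ for the addition on $x_L$ would be modified by the concurrent addition on $x_H$. Yet the gadget on $x_L$ needs $g$ to hold the same value from its first controlled increment to its last fan-out. If ``swapping which half supplies the borrowed qubit'' means running the halves one after the other, you get $D(n)=2D(n/2)+\Theta(\log n)=\Theta(n)$, which is too deep.

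The fix is to unroll the recursion. Each block gadget acts as a unitary on its own block tensored with the identity on its borrowed qubit, so gadgets on disjoint blocks commute and can be grouped by recursion level. At level $j$, first run the even-indexed blocks borrowing from odd neighbours, then the odd ones borrowing from even neighbours. Each gadget's comparator and incrementer borrow only from inside their own block, so there are no other conflicts. This costs two rounds of depth $\mathcal{O}(\log(n/2^j))$ per level, hence $\mathcal{O}(\log^2 n)$ overall, with the same gate count. The paper's recurrence $D(n)=\Theta(\log n)+D(n/2)$ tacitly assumes this same parallelism, so this is a matter of rigour rather than a departure from the paper. Since you flagged it as the main obstacle, this is how to close it.
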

\begin{proof}
    The classical--quantum adder can be implemented using the circuit of Figure~\ref{fig:classical_quantum_adder}, following the approach of Häner et al.~\cite{Haner_2017}.
    The carry operator computes the highest bit of the sum $x_L + c_L$.
    This bit is $1$ if and only if $x_L + c_L \geq 2^{\lceil n/2 \rceil}$, so the carry can be computed via the classical--quantum comparison $x_L \geq 2^{\lceil n/2 \rceil} - c_L$.

    By Theorem~\ref{thm:classical_quantum_comparator}, the carry operator can be implemented over the $\{CCX, CX, X\}$ gate set with $\Theta(n)$ gates and $\Theta(\log n)$ circuit depth, using one dirty ancilla qubit.
    The two fan-out gates can each be implemented with $\Theta(n)$ $CX$ gates and $\Theta(\log n)$ circuit depth (Lemma~\ref{lem:fanout}).
    The two controlled incrementers can each be implemented with $\Theta(n)$ gates and $\Theta(\log n)$ circuit depth, using one dirty ancilla qubit taken from the other half of the register (Corollary~\ref{cor:controlled_incrementer}).
    By Theorem~\ref{thm:incrementer}, each increment operator can be implemented with $\Theta(n)$ gates and $\Theta(\log n)$ circuit depth, using one dirty ancilla qubit taken from the other half of the register.

    The total gate count satisfies
    \begin{equation}
        C(n) = \Theta(n) + 2C\left(\frac{n}{2}\right)
    \end{equation}
    and the total circuit depth satisfies
    \begin{equation}
        D(n) = \Theta(\log n) + D\left(\frac{n}{2}\right)
    \end{equation}
    which yield $C(n) = \Theta(n\log n)$ and $D(n) = \Theta(\log^2 n)$, respectively.
\end{proof}

\begin{figure}[t]
    \centering
    \begin{quantikz}[row sep={\the\circuitrowsep,between origins}, column sep=\the\circuitcolsep, align equals at=2]
        \lstick{\ket{x_L}} & \qwbundle{\lceil\frac{n}{2}\rceil} & & & & & \gate[3]{CARRY} & & \gate[3]{CARRY} & & \gate{+c_L} & & \\
        \lstick{\ket{x_H}} & \qwbundle{\lfloor\frac{n}{2}\rfloor} & & & \gate{+1} & \targ{} & & \gate{+1} & & \targ{} & \gate{+c_H} & & \\
        \lstick{\ket{g}} & & & & \ctrl{-1} & \ctrl{-1} & & \ctrl{-1} & & \ctrl{-1} & & & \rstick{\ket{g}}
    \end{quantikz}
    \caption{Circuit for classical--quantum addition~\cite{Haner_2017}, where $\bs c$ is the classical constant added to the quantum register $\bs x$, with $x_H$, $x_L$ and $c_H$, $c_L$ denoting the high- and low-bit halves of $\bs x$ and $\bs c$, respectively.}\label{fig:classical_quantum_adder}
\end{figure}
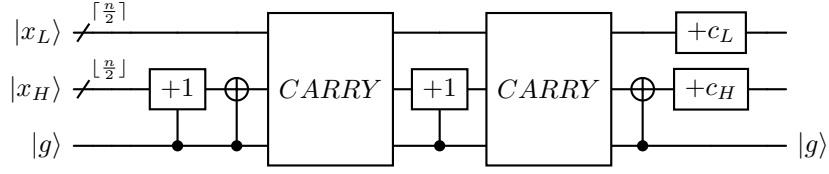

As a direct consequence of Theorem~\ref{thm:classical_quantum_adder}, we obtain an efficient construction for the $k$-controlled classical--quantum adder.
\begin{corollary}\label{cor:controlled_classical_quantum_adder}
    The $k$-controlled $n$-bit classical--quantum adder can be implemented over the $\{CCX$, $CX$, $X\}$ gate set with $\Theta(k + n\log n)$ gates and $\Theta(\log k + \log^2 n)$ circuit depth, using one dirty ancilla qubit.
\end{corollary}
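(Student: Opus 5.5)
The plan is to reduce the $k$-controlled classical--quantum adder to an uncontrolled classical--quantum adder, in the same way Corollary~\ref{cor:controlled_incrementer} reduces the controlled incrementer. The key observation is that adding a constant $\bs c$ conditioned on the controls can be handled by a controlled-conjugation argument, after which only $\mathcal{O}(n)$ single-qubit-level controlled operations remain.

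\textbf{Step 1: introduce a controlled register.} Write the addition of $\bs c$ as the addition of a register loaded with $\bs c$ when the controls are all ones and with $0$ otherwise. Since no clean register of size $n$ is available, I will instead use the recursive structure of Figure~\ref{fig:classical_quantum_adder} directly. In that circuit, by the controlled conjugation identity of Figure~\ref{fig:controlled_conjugation}, adding $k$ controls affects three kinds of operations:
\begin{itemize}
\item the carry computations, which change because the constant becomes $0$ when the controls are off;
\item the increments;
\item the recursive halves, which must be controlled.
\end{itemize}
A cleaner route is therefore to first compute one control bit. Apply a $C^kX$ gate targeting a dirty ancilla, toggle-style as in Figure~\ref{fig:toggle_detection}. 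This requires a self-inverse target, so I instead use the structure $A_{\bs c}$ followed by $A_{-\bs c}$, with $A_{-\bs c}=A_{\bs c}^\dagger$. The identity, analogous to Equation~\eqref{eq:toggle_detection_inc}, is: apply singly controlled $A_{\bs c}$ on the dirty qubit $\ket{\psi}$, then $C^kX$ on $\ket{\psi}$, then singly controlled $A_{-\bs c}$, then $C^kX$ again. This yields $k$-controlled $+\bs c$ regardless of $\psi$. The $C^kX$ gates cost $\Theta(k)$ gates and $\Theta(\log k)$ depth by Lemma~\ref{lem:optimal_ckx}, using a dirty qubit from $\bs x$. The qubit $\ket{\psi}$ itself is the single dirty ancilla.

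\textbf{Step 2: the singly controlled adder.} It remains to implement a singly controlled $n$-bit classical--quantum adder, with control $\psi$, in $\mathcal{O}(n\log n)$ gates and $\mathcal{O}(\log^2 n)$ depth, using no further ancilla, or one dirty qubit borrowed from the $k$ controls when $k\ge1$. I would follow the recursion of Theorem~\ref{thm:classical_quantum_adder}: controlled conjugation leaves the carry comparators and fan-outs uncontrolled in the conjugating pairs. The controlled pieces are then:
\begin{itemize}
\item the comparator, which needs a control since the carry should vanish when $\psi=0$; this is handled by Corollary~\ref{cor:controlled_classical_quantum_comparator} with one control, at cost $\Theta(n)$ and depth $\Theta(\log n)$;
\item the controlled incrementers, which become doubly controlled; these are handled by Corollary~\ref{cor:controlled_incrementer} with two controls;
\item the two recursive halves, each singly controlled.
\end{itemize}
The dirty ancillas needed are taken from the other half of the register, as in the uncontrolled proof. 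This gives $C(n)=\Theta(n)+2C(n/2)$ and $D(n)=\Theta(\log n)+D(n/2)$, hence $\Theta(n\log n)$ gates and $\Theta(\log^2 n)$ depth. Adding the Step~1 overhead gives the claimed $\Theta(k+n\log n)$ gates and $\Theta(\log k+\log^2 n)$ depth.

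\textbf{Main obstacle.} The main difficulty is the bookkeeping of dirty ancillas in Step~2. The Häner-style circuit already consumes the external dirty qubit $g$, and here $\psi$ is also occupied as a control. I expect to resolve this by borrowing a qubit from the $k$ control register, which is idle during Step~2, or, when $k=0$, by noting that the statement reduces to Theorem~\ref{thm:classical_quantum_adder}. If borrowing conflicts arise inside the recursion, I would fall back on the half-splitting trick used in Lemma~\ref{lem:parallel_control_gates}. The lower bounds follow from those for $C^kX$ and for the adder itself.
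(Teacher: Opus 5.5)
There is a genuine gap: Step~1's identity is false as written, and Step~2's depth recurrence does not hold.

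\textbf{Step~1.} If the $C^kX$ gates act only on $\psi$, take the case where all controls are $1$ and $\psi=0$. The $\psi$-controlled $A_{\bs c}$ does nothing, $\psi$ is flipped, and the $\psi$-controlled $A_{-\bs c}$ then fires. The net effect is $A_{-\bs c}$, not $A_{\bs c}$. Equation~\eqref{eq:toggle_detection_inc} works for two reasons. First, its $k$-controlled gate is a fan-out that also applies $X$ to every qubit of the target register. Second, $X^{\otimes n}A_{-\bs c}X^{\otimes n}=A_{\bs c}$, which is Equation~\eqref{eq:add_x_sub}, i.e.\ $\overline{\overline{a}+c}=a-c \bmod 2^n$. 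That conjugation identity is the key fact here, and you never state it. With it, Step~1 can be repaired at $\mathcal{O}(k+n)$ gates and $\mathcal{O}(\log k+\log n)$ depth via Equation~\eqref{eq:controlled_fan_out}.

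\textbf{Step~2.} The recurrence $D(n)=\Theta(\log n)+D(n/2)$ assumes the two recursive halves run in parallel, but both are controlled by the same qubit $\psi$. Unrolled, level $\ell$ contains $2^\ell$ comparators controlled by $\psi$, plus, in your version, incrementers controlled by $\psi$. So $\psi$ takes part in $\Theta(n)$ gates and the depth is $\Omega(n)$. Equivalently, serializing the halves gives $D(n)=2D(n/2)+\Theta(\log n)=\Theta(n)$.

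Your fallback, Lemma~\ref{lem:parallel_control_gates}, only spreads a control over involutory blocks, and controlled adders and incrementers are not involutions. One could rescue the recursion by noting that only the CARRY toggles need the control, and they are involutions. One would then fan $\psi$ out to all of them level by level, with careful dirty-qubit bookkeeping. You have not made that argument.

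\textbf{How the paper avoids this.} It never controls an adder, using the conjugation identity above. Write $\bs c=2c'+r$. Then $C^k(+2c')$ equals $+c'$, a $k$-controlled $X^{\otimes n}$, $-c'$, and a second $k$-controlled $X^{\otimes n}$. With the controls off, the two uncontrolled adders cancel; with them on, the fan-outs turn $-c'$ into $+c'$. If $r=1$, append a $k$-controlled incrementer (Corollary~\ref{cor:controlled_incrementer}). The uncontrolled adders come directly from Theorem~\ref{thm:classical_quantum_adder}, so no control ever has to be shared across parallel branches.
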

\begin{proof}
    Similary to Equation~\eqref{eq:inc_x_dec}, we have
    \begin{equation}\label{eq:add_x_sub}
        \begin{quantikz}[row sep={\the\circuitrowsep,between origins}, column sep=\the\circuitcolsep, align equals at=1]
            & \qwbundle{n} & \gate{+c} & \gate{X} & \gate{+c} & \gate{X} & 
        \end{quantikz}
        \;=\;
        \begin{quantikz}[row sep={\the\circuitrowsep,between origins}, column sep=\the\circuitcolsep, align equals at=1]
            & \qwbundle{n} & \gate{+c} & \gate{-c} &
        \end{quantikz}
        \;=\;
        \begin{quantikz}[row sep={\the\circuitrowsep,between origins}, column sep=\the\circuitcolsep, align equals at=1]
            & \qwbundle{n} & & &
        \end{quantikz}
    \end{equation}
    This follows from the fact that the one's complement $\overline{a}$ of an two's-complement number $a$ satisfies $\overline{a} = -a - 1 \pmod{2^n}$, so that
    \begin{equation}
        \overline{\overline{a} + c} = -(\overline{a} + c) - 1 = -\overline{a} - c - 1 = a + 1 - c - 1 = a - c \pmod{2^n}.
    \end{equation}

    Let $c = 2c' + r$ where $c' = \lfloor c/2 \rfloor$ and $r = c \bmod 2$.
    Then, based on Equation~\eqref{eq:add_x_sub}, if $r=0$ we have
    \begin{equation}\label{eq:toggle_detection_adder_r_0}
        \begin{quantikz}[row sep={\the\circuitrowsep,between origins}, column sep=\the\circuitcolsep, align equals at=1.5]
            & \qwbundle{k} & \ctrl{1} & \\
            & \qwbundle{n} & \gate{+c} & 
        \end{quantikz}
        \;=\;
        \begin{quantikz}[row sep={\the\circuitrowsep,between origins}, column sep=\the\circuitcolsep, align equals at=1.5]
            & \qwbundle{k} & \ctrl{1} & \\
            & \qwbundle{n} & \gate{+2c'} & 
        \end{quantikz}
        \;=\;
        \begin{quantikz}[row sep={\the\circuitrowsep,between origins}, column sep=\the\circuitcolsep, align equals at=1.5]
            & \qwbundle{k} & & & \ctrl{1} & & \ctrl{1} & \\
            & \qwbundle{n} & & \gate{+c'} & \targ{} & \gate{-c'} & \targ{} & 
        \end{quantikz}
    \end{equation}
    and if $r=1$ we have
    \begin{equation}\label{eq:toggle_detection_adder_r_1}
        \begin{quantikz}[row sep={\the\circuitrowsep,between origins}, column sep=\the\circuitcolsep, align equals at=1.5]
            & \qwbundle{k} & \ctrl{1} & \\
            & \qwbundle{n} & \gate{+c} & 
        \end{quantikz}
        \;=\;
        \begin{quantikz}[row sep={\the\circuitrowsep,between origins}, column sep=\the\circuitcolsep, align equals at=1.5]
            & \qwbundle{k} & \ctrl{1} & \ctrl{1} & \\
            & \qwbundle{n} & \gate{+2c'} & \gate{+1} & 
        \end{quantikz}
        \;=\;
        \begin{quantikz}[row sep={\the\circuitrowsep,between origins}, column sep=\the\circuitcolsep, align equals at=1.5]
            & \qwbundle{k} & & & \ctrl{1} & & \ctrl{1} & \ctrl{1} & \\
            & \qwbundle{n} & & \gate{+c'} & \targ{} & \gate{-c'} & \targ{} & \gate{+1} & 
        \end{quantikz}
    \end{equation}
    The two uncontrolled adders for the constants $+c'$ and $-c'$ can each be implemented with $\Theta(n\log n)$ gates and $\Theta(\log^2 n)$ circuit depth by Theorem~\ref{thm:classical_quantum_adder}, where $-c'$ is obtained by conjugating the $+c'$ circuit with $X^{\otimes n}$ gates at no additional asymptotic cost.
    The two $C^kX$ gates can each be implemented with $\Theta(k)$ gates and $\Theta(\log k)$ circuit depth (Lemma~\ref{lem:optimal_ckx}).
    For $r = 1$, the additional $k$-controlled incrementer can be implemented with $\Theta(k + n)$ gates and $\Theta(\log(kn))$ circuit depth (Corollary~\ref{cor:controlled_incrementer}).
    The total gate count is therefore $\Theta(k + n\log n)$ and the circuit depth is $\Theta(\log k + \log^2 n)$, using one dirty ancilla qubit.
\end{proof}

Historically, and somewhat counterintuitively, classical--quantum addition has been harder to implement efficiently than quantum--quantum addition, due to the lack of available workspace qubits.
Recently, the gap between the two settings has narrowed: Gidney~\cite{Gidney_2025} discovered a classical--quantum adder with a linear number of gates, linear depth, and a constant number of ancilla qubits, matching the costs of the quantum--quantum adder of Cuccaro et al.~\cite{Cuccaro_2004}.
Our results further contribute to closing this gap: the classical--quantum adder presented here achieves the same linearithmic gate count and sublinear circuit depth as Remaud's quantum--quantum adder~\cite{Remaud_2025}, which also uses a minimal number of qubits.

\subsection{Modular multiplication}
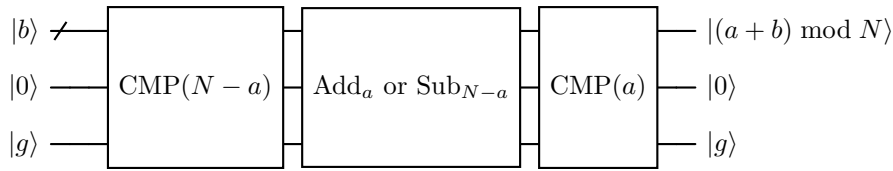
\begin{figure}[t]
    \centering
    \begin{quantikz}[row sep={\the\circuitrowsep,between origins}, column sep=\the\circuitcolsep, align equals at=2]
        \lstick{\ket{b}} & \qwbundle{} & & \gate[3]{\mathrm{CMP}(N-a)} & \gate[3]{\mathrm{Add}_a \text{ or } \mathrm{Sub}_{N-a}} & \gate[3]{\mathrm{CMP}(a)} & & \rstick{\ket{(a+b) \bmod N}} \\
        \lstick{\ket{0}} & & & & & & & \rstick{\ket{0}} \\
        \lstick{\ket{g}} & & & & & & & \rstick{\ket{g}}
    \end{quantikz}
\caption{Construction of a modular adder from a non-modular adder~\cite{Takahashi_2006}.
The first CMP gate computes the comparison $b < N - a$ and stores the result in the clean ancilla qubit: the result is $1$ if and only if adding $a$ to $b$ does not cause an overflow modulo $N$.
The add or subtract gate then adds $a$ to $b$ if the ancilla qubit is in the $\ket{1}$ state (no overflow), or adds $a - N$ if the ancilla qubit is in the $\ket{0}$ state (overflow).
The second CMP gate uncomputes the ancilla qubit by computing $(a + b) \bmod N \geq a$, which equals $1$ if and only if no overflow occurred.}\label{fig:modular_addition}
\end{figure}

\begin{table}[t]
    \centering
    \renewcommand{\arraystretch}{1.2}
    \begin{tabular}{lcccc}
        \toprule
        Reference & Year & Qubits & Depth & Gate Count \\
        \midrule
        Beckman et al.~\cite{Beckman_1996} & 1996 & $5n+1$ & $\mathcal{O}(n^3)$ & $\mathcal{O}(n^3)$ \\
        Vedral et al.~\cite{Vedral_1996} & 1996 & $4n+3$ & $\mathcal{O}(n^3)$ & $\mathcal{O}(n^3)$ \\
        Zalka~\cite{Zalka_1998} & 1998 & $3n+\mathcal{O}(1)$ & $\mathcal{O}(n^3)$ & $\mathcal{O}(n^3)$ \\
        Beauregard~\cite{Beauregard_2003} & 2003 & $2n+3$ & $\mathcal{O}(n^3\log n)$ & $\mathcal{O}(n^3\log^2 n)$ \\
        Takahashi et al.~\cite{Takahashi_2006} & 2006 & $2n+2$ & $\mathcal{O}(n^3\log n)$ & $\mathcal{O}(n^3\log^2 n)$ \\
        Zalka~\cite{Zalka_2006} & 2006 & $1.5n+\mathcal{O}(1)$ & $\mathcal{O}(n^3 \log n)$ & $\mathcal{O}(n^3\log^2 n)$ \\
        Häner et al.~\cite{Haner_2017} & 2016 & $2n+2$ & $\mathcal{O}(n^3)$ & $\mathcal{O}(n^3\log n)$ \\
        Gidney~\cite{Gidney_2018} & 2017 & $2n+1$ & $\mathcal{O}(n^3)$ & $\mathcal{O}(n^3\log n)$ \\
        Gidney et al.~\cite{Gidney_2021} & 2019 & $3n+0.002n\log n$ & $500n^2 + n^2\log n$ & $0.3n^3 + 0.0005n^3 \log n$ \\
        Chevignard et al.~\cite{Chevignard_2025} & 2024 & $n/2 + o(n)$ & $\mathcal{O}(n^3\log n)$ & $\mathcal{O}(n^2\log^4 n)$ \\
        This paper & 2026 & $2n+2$ & $\mathcal{O}(n^2\log^2 n)$ & $\mathcal{O}(n^3\log n)$ \\
        \bottomrule
    \end{tabular}
    \caption{Comparison of space-efficient quantum circuits for Shor's factoring algorithm applied to an $n$-bit RSA integer, in terms of qubit count, circuit depth, and gate count.}\label{tab:shor_circuits}
\end{table}

The classical--quantum adder presented in the previous subsection can be used to implement Shor's factoring algorithm~\cite{Shor_1994}, following the approach of Häner et al.~\cite{Haner_2017}.
Factoring an $n$-bit integer can be done by computing a modular exponentiation, which can be decomposed into $\mathcal{O}(n)$ singly-controlled modular multiplications~\cite{Beauregard_2003}, each controlled by one bit of the exponent register.
Each singly-controlled modular multiplication is implemented via $\mathcal{O}(n)$ singly-controlled modular additions~\cite{Haner_2017}.
In turn, each singly-controlled modular addition is implemented using the circuit of Figure~\ref{fig:modular_addition}, which consists of two singly-controlled comparators and one doubly-controlled adder or subtractor: one control qubit comes from the exponent register, and the other from the output of the comparator, which determines whether to add $a$ or $a - N$.

This yields a total of $\mathcal{O}(n^2)$ singly-controlled comparators and $\mathcal{O}(n^2)$ doubly-controlled adders.
By Corollary~\ref{cor:controlled_classical_quantum_comparator}, a singly-controlled classical--quantum comparator can be implemented with $\mathcal{O}(n)$ gates and $\mathcal{O}(\log n)$ circuit depth, using one dirty ancilla qubit.
By Corollary~\ref{cor:controlled_classical_quantum_adder}, a doubly-controlled classical--quantum adder can be implemented with $\mathcal{O}(n \log n)$ gates and $\mathcal{O}(\log^2 n)$ circuit depth, using one dirty ancilla qubit.
The $\mathcal{O}(n^2)$ comparators each contribute $\mathcal{O}(n)$ gates and $\mathcal{O}(\log n)$ circuit depth, and the $\mathcal{O}(n^2)$ doubly-controlled adders each contribute $\mathcal{O}(n \log n)$ gates and $\mathcal{O}(\log^2 n)$ circuit depth.
The overall implementation therefore has a total gate count of $\mathcal{O}(n^3 \log n)$ and a circuit depth of $\mathcal{O}(n^2 \log^2 n)$.
The total qubit count is $2n + 2$: $n$ qubits for the input register $\ket{x}$, $n$ clean ancilla qubits to compute the modular multiplications, one clean ancilla qubit to store the output of the comparator in Figure~\ref{fig:modular_addition}, and one qubit for the control reused at each step of the semi-classical quantum Fourier transform~\cite{Haner_2017}.

A comparison with other space-efficient implementations of Shor's algorithm is given in Table~\ref{tab:shor_circuits}.

\section{Conclusion}

We presented Clifford+Toffoli quantum circuits for comparators and incrementers that simultaneously match the established lower bounds on gate count and circuit depth while using a provably minimal number of qubits.
Our optimal results are analogous to those of Nie et al.~\cite{Nie_2024}, who showed how to implement the multi-controlled $X$ gate with asymptotically optimal gate count and circuit depth using a minimal number of ancilla qubits.
This raises a natural question: can the family of operators implementable with linear gate count, logarithmic depth, and a minimal number of qubits be further extended?
In particular, this question remains open for quantum addition, for which a gap persists between the known lower bounds and the best-known constructions.

While the proposed circuits are asymptotically optimal in gate count and circuit depth, the constant factors have not been optimized, and numerous avenues remain for improving these constructions, making them more practical, and exploring various resource trade-offs.
These optimizations, as well as detailed resource estimation and cost comparisons with other approaches, are left as future work.

A key tool used in our constructions is the notion of promise gates, which provides a framework for reasoning about conditionally clean ancilla qubits and enabled us to identify new ways of making effective use of them in quantum circuit design.
In particular, we established a general theorem for trading clean ancilla qubits for control qubits with little overhead in gate count and circuit depth.
We expect this technique to find applications beyond the operators studied in this paper.

\section*{Acknowledgements}
This work is dedicated to the memory of Maxime Remaud, with whom I shared many valuable discussions and a deep enthusiasm for this line of research.\\
The author thanks Silas Dilkes and Yuta Kikuchi for reviewing this paper.

\bibliographystyle{quantum}
\bibliography{ref.bib}

\begin{thebibliography}{10}

\bibitem{Douglas_2009}
B.~L. Douglas and J.~B. Wang.
\newblock ``{Efficient quantum circuit implementation of quantum walks}''.
\newblock \href{https://dx.doi.org/10.1103/physreva.79.052335}{Physical Review
  A{\bf 79}}~(2009).

\bibitem{Durr_1999}
Christoph Durr and Peter Hoyer.
\newblock ``{A Quantum Algorithm for Finding the Minimum}''~(1999).
\newblock
  \href{http://arxiv.org/abs/quant-ph/9607014}{arXiv:quant-ph/9607014}.

\bibitem{Hindlycke_2024}
Christoffer Hindlycke and Jan-Åke Larsson.
\newblock ``{Single-qubit rotation algorithm with logarithmic Toffoli count and
  gate depth}''.
\newblock \href{https://dx.doi.org/10.1103/physrevresearch.6.l042027}{Physical
  Review Research{\bf 6}}~(2024).

\bibitem{Shor_1994}
P.W. Shor.
\newblock ``{Algorithms for quantum computation: discrete logarithms and
  factoring}''.
\newblock In Proceedings 35th Annual Symposium on Foundations of Computer
  Science.
\newblock \href{https://dx.doi.org/10.1109/sfcs.1994.365700}{Page 124–134}.
\newblock SFCS-94. IEEE Comput. Soc. Press~(1994).

\bibitem{Haner_2017}
Thomas Haner, Martin Roetteler, and Krysta~M. Svore.
\newblock ``{Factoring using 2n+2 qubits with Toffoli based modular
  multiplication}''.
\newblock \href{https://dx.doi.org/10.26421/qic17.7-8-7}{Quantum Information
  and Computation {\bf 17}, 673–684}~(2017).

\bibitem{Nie_2024}
Junhong Nie, Wei Zi, and Xiaoming Sun.
\newblock ``{Quantum circuit for multi-qubit Toffoli gate with optimal
  resource}''~(2024).
\newblock  \href{http://arxiv.org/abs/2402.05053}{arXiv:2402.05053}.

\bibitem{Claudon_2024}
Baptiste Claudon, Julien Zylberman, César Feniou, Fabrice Debbasch, Alberto
  Peruzzo, and Jean-Philip Piquemal.
\newblock ``{Polylogarithmic-depth controlled-NOT gates without ancilla
  qubits}''.
\newblock \href{https://dx.doi.org/10.1038/s41467-024-50065-x}{Nature
  Communications{\bf 15}}~(2024).

\bibitem{Khattar_2025}
Tanuj Khattar and Craig Gidney.
\newblock ``{Rise of conditionally clean ancillae for efficient quantum circuit
  constructions}''.
\newblock \href{https://dx.doi.org/10.22331/q-2025-05-21-1752}{Quantum {\bf 9},
  1752}~(2025).

\bibitem{Remaud_2025}
Maxime Remaud and Vivien Vandaele.
\newblock ``{Ancilla-Free Quantum Adder with Sublinear Depth}''.
\newblock \href{https://dx.doi.org/10.1007/978-3-031-97063-4_11}{Page
  137–154}.
\newblock Springer Nature Switzerland. ~(2025).

\bibitem{Gidney_incrementer}
Craig Gidney.
\newblock ``{Constructing Large Increment Gates}''.
\newblock
  url:~\url{https://algassert.com/circuits/2015/06/12/Constructing-Large-Increment-Gates.html}.

\bibitem{Cuccaro_2004}
Steven~A. Cuccaro, Thomas~G. Draper, Samuel~A. Kutin, and David~Petrie Moulton.
\newblock ``{A new quantum ripple-carry addition circuit}''~(2004).
\newblock
  \href{http://arxiv.org/abs/quant-ph/0410184}{arXiv:quant-ph/0410184}.

\bibitem{Takahashi_2010}
Yasuhiro Takahashi, Seiichiro Tani, and Noboru Kunihiro.
\newblock ``{Quantum addition circuits and unbounded fan-out}''.
\newblock \href{https://dx.doi.org/10.26421/qic10.9-10-12}{Quantum Information
  and Computation {\bf 10}, 872–890}~(2010).

\bibitem{Gidney_2018}
Craig Gidney.
\newblock ``{Factoring with n+2 clean qubits and n-1 dirty qubits}''~(2018).
\newblock  \href{http://arxiv.org/abs/1706.07884}{arXiv:1706.07884}.

\bibitem{Gidney_2025}
Craig Gidney.
\newblock ``{A Classical-Quantum Adder with Constant Workspace and Linear
  Gates}''~(2025).
\newblock  \href{http://arxiv.org/abs/2507.23079}{arXiv:2507.23079}.

\bibitem{Toffoli_1980}
Tommaso Toffoli.
\newblock ``Reversible computing''.
\newblock \href{https://dx.doi.org/10.1007/3-540-10003-2_104}{Page 632–644}.
\newblock Springer Berlin Heidelberg. ~(1980).

\bibitem{Draper_2000}
Thomas~G. Draper.
\newblock ``{Addition on a Quantum Computer}''~(2000).
\newblock
  \href{http://arxiv.org/abs/quant-ph/0008033}{arXiv:quant-ph/0008033}.

\bibitem{Fang_2006}
M.~Fang, S.~Fenner, F.~Green, S.~Homer, and Y.~Zhang.
\newblock ``{Quantum lower bounds for fanout}''.
\newblock \href{https://dx.doi.org/10.26421/qic6.1-3}{Quantum Information and
  Computation {\bf 6}, 46–57}~(2006).

\bibitem{Shende_2003}
V.V. Shende, A.K. Prasad, I.L. Markov, and J.P. Hayes.
\newblock ``{Synthesis of reversible logic circuits}''.
\newblock \href{https://dx.doi.org/10.1109/tcad.2003.811448}{IEEE Transactions
  on Computer-Aided Design of Integrated Circuits and Systems {\bf 22},
  710–722}~(2003).

\bibitem{Beverland_2020}
Michael Beverland, Earl Campbell, Mark Howard, and Vadym Kliuchnikov.
\newblock ``{Lower bounds on the non-Clifford resources for quantum
  computations}''.
\newblock \href{https://dx.doi.org/10.1088/2058-9565/ab8963}{Quantum Science
  and Technology {\bf 5}, 035009}~(2020).

\bibitem{Gosset_2025}
David Gosset, Robin Kothari, and Chenyi Zhang.
\newblock ``{Multi-qubit Toffoli with exponentially fewer T gates}''~(2025).
\newblock  \href{http://arxiv.org/abs/2510.07223}{arXiv:2510.07223}.

\bibitem{Broadbent_2009}
Anne Broadbent and Elham Kashefi.
\newblock ``{Parallelizing quantum circuits}''.
\newblock \href{https://dx.doi.org/10.1016/j.tcs.2008.12.046}{Theoretical
  Computer Science {\bf 410}, 2489–2510}~(2009).

\bibitem{Draper_2006}
T.G. Draper, S.A. Kutin, E.M. Rains, and K.M. Svore.
\newblock ``{A logarithmic-depth quantum carry-lookahead adder}''.
\newblock \href{https://dx.doi.org/10.26421/qic6.4-5-4}{Quantum Information and
  Computation {\bf 6}, 351–369}~(2006).

\bibitem{remaud2025quantumaddersstructurallink}
Maxime Remaud.
\newblock ``{Quantum adders: on the structural link between the ripple-carry
  and carry-lookahead techniques}''~(2025).
\newblock  \href{http://arxiv.org/abs/2510.00840}{arXiv:2510.00840}.

\bibitem{Goldreich_2006}
Oded Goldreich.
\newblock ``{On Promise Problems: A Survey}''.
\newblock \href{https://dx.doi.org/10.1007/11685654_12}{Page 254–290}.
\newblock Springer Berlin Heidelberg. ~(2006).

\bibitem{Shende_2006}
V.V. Shende, S.S. Bullock, and I.L. Markov.
\newblock ``{Synthesis of quantum-logic circuits}''.
\newblock \href{https://dx.doi.org/10.1109/tcad.2005.855930}{IEEE Transactions
  on Computer-Aided Design of Integrated Circuits and Systems {\bf 25},
  1000–1010}~(2006).

\bibitem{Takahashi_2006}
Y.~Takahashi and N.~Kunihiro.
\newblock ``{A quantum circuit for Shor’s factoring algorithm using 2n+2
  qubits}''.
\newblock \href{https://dx.doi.org/10.26421/qic6.2-4}{Quantum Information and
  Computation {\bf 6}, 184–192}~(2006).

\bibitem{Beckman_1996}
David Beckman, Amalavoyal~N. Chari, Srikrishna Devabhaktuni, and John Preskill.
\newblock ``{Efficient networks for quantum factoring}''.
\newblock \href{https://dx.doi.org/10.1103/physreva.54.1034}{Physical Review A
  {\bf 54}, 1034–1063}~(1996).

\bibitem{Vedral_1996}
Vlatko Vedral, Adriano Barenco, and Artur Ekert.
\newblock ``{Quantum networks for elementary arithmetic operations}''.
\newblock \href{https://dx.doi.org/10.1103/physreva.54.147}{Physical Review A
  {\bf 54}, 147–153}~(1996).

\bibitem{Zalka_1998}
Christof Zalka.
\newblock ``{Fast versions of Shor's quantum factoring algorithm}''~(1998).
\newblock
  \href{http://arxiv.org/abs/quant-ph/9806084}{arXiv:quant-ph/9806084}.

\bibitem{Beauregard_2003}
S.~Beauregard.
\newblock ``{Circuit for Shor’s algorithm using 2n+3 qubits}''.
\newblock \href{https://dx.doi.org/10.26421/qic3.2-8}{Quantum Information and
  Computation {\bf 3}, 175–185}~(2003).

\bibitem{Zalka_2006}
Christof Zalka.
\newblock ``{Shor's algorithm with fewer (pure) qubits}''~(2006).
\newblock
  \href{http://arxiv.org/abs/quant-ph/0601097}{arXiv:quant-ph/0601097}.

\bibitem{Gidney_2021}
Craig Gidney and Martin Ekerå.
\newblock ``{How to factor 2048 bit RSA integers in 8 hours using 20 million
  noisy qubits}''.
\newblock \href{https://dx.doi.org/10.22331/q-2021-04-15-433}{Quantum {\bf 5},
  433}~(2021).

\bibitem{Chevignard_2025}
Clémence Chevignard, Pierre-Alain Fouque, and André Schrottenloher.
\newblock ``{Reducing the Number of Qubits in Quantum Factoring}''.
\newblock \href{https://dx.doi.org/10.1007/978-3-032-01878-6_13}{Page
  384–415}.
\newblock Springer Nature Switzerland. ~(2025).

\end{thebibliography}

\newpage

\appendix
\section{Proofs}
\subsection{Proof of Lemma~\ref{lem:ladder_2_n_ancilla}}\label{app:proof_ladder}
\begin{proof}
    It was shown in~\cite{Remaud_2025} that the $\mathcal{L}_2^{(n-1)}$ operator can be implemented with
    \begin{equation}
        C(n) = 2n - 2 - \left\lfloor \log_2 n \right\rfloor - \left\lfloor \log_2 \frac{2n}{3} \right\rfloor
    \end{equation}
    $C^kX$ gates and a circuit depth of
    \begin{equation}
        D(n) = \left\lfloor \log_2 n \right\rfloor + \left\lfloor \log_2 \frac{2n}{3} \right\rfloor,
    \end{equation}
    without ancilla qubits.
    The recursive algorithm presented in~\cite{Remaud_2025} achieves this by inserting a layer of $C^kX$ gates at the beginning and end of the circuit, then recursing on a smaller $C^kX$ ladder in the middle.
    For example, one iteration of the algorithm for $\mathcal{L}_2^{(8)}$ yields the circuit on the right-hand side of the following identity:
    \begin{equation}
        \begin{quantikz}[row sep={.5cm,between origins}, column sep=\the\circuitcolsep, align equals at=9]
            & \ctrl{1} & & & & & & & & \\
            & \ctrl{1} & & & & & & & & \\
            & \targ{} & \ctrl{1} & & & & & & & \\
            & & \ctrl{1} & & & & & & & \\
            & & \targ{} & \ctrl{1} & & & & & & \\
            & & & \ctrl{1} & & & & & & \\
            & & & \targ{} & \ctrl{1} & & & & & \\
            & & & & \ctrl{1} & & & & & \\
            & & & & \targ{} & \ctrl{1} & & & & \\
            & & & & & \ctrl{1} & & & & \\
            & & & & & \targ{} & \ctrl{1} & & & \\
            & & & & & & \ctrl{1} & & & \\
            & & & & & & \targ{} & \ctrl{1} & & \\
            & & & & & & & \ctrl{1} & & \\
            & & & & & & & \targ{} & \ctrl{1} & \\
            & & & & & & & & \ctrl{1} & \\
            & & & & & & & & \targ{} & 
        \end{quantikz}
        \;=\;
        \begin{quantikz}[row sep={.5cm,between origins}, column sep=\the\circuitcolsep, align equals at=9]
             & \ctrl{1} \slice{} & & & \slice{} & & \\
             & \ctrl{1} & & & & & \\
             & \targ{} & \ctrl{1} & & & \ctrl{1} & \\
             & & \ctrl{3} & & & \ctrl{1} & \\
             & \ctrl{2} & & & & \targ{} & \\
             & \ctrl{1} & \ctrl{1} & & & & \\
             & \targ{} & \targ{} & \ctrl{1} & & \ctrl{1} & \\
             & & & \ctrl{3} & & \ctrl{1} & \\
             & \ctrl{2} & & & & \targ{} & \\
             & \ctrl{1} & & \ctrl{1} & & & \\
             & \targ{} & & \targ{} & \ctrl{1} & \ctrl{1} & \\
             & & & & \ctrl{3} & \ctrl{1} & \\
             & \ctrl{2} & & & & \targ{} & \\
             & \ctrl{1} & & & \ctrl{1} & & \\
             & \targ{} & & & \targ{} & \ctrl{1} & \\
             & & & & & \ctrl{1} & \\
             & & & & & \targ{} & 
        \end{quantikz}
    \end{equation}
    Before recursing on the $C^kX$ ladder in the middle slice, we can convert it into a ladder of $CCX$ gates by introducing ancilla qubits, using the following circuit identity:
    \begin{equation}\label{eq:ckx_to_toffoli_ancilla}
        \begin{quantikz}[row sep={.5cm,between origins}, column sep=\the\circuitcolsep, align equals at=3]
            & \ctrl{1} & \\
            & \ctrl{1} & \\
            & \ctrl{2} & \\
            \lstick{\ket{0}} & & \rstick{\ket{0}} \\
            & \targ{} & 
        \end{quantikz}
        \;=\;
        \begin{quantikz}[row sep={.5cm,between origins}, column sep=\the\circuitcolsep, align equals at=3]
            & & \ctrl{3} & & \\
            & \ctrl{1} & & \ctrl{1} & \\
            & \ctrl{1} & & \ctrl{1} & \\
            \lstick{\ket{0}} & \targ{} & \ctrl{1} & \targ{} & \rstick{\ket{0}} \\
            & & \targ{} & &
        \end{quantikz}
    \end{equation}
    For example, the first iteration of the algorithm for $\mathcal{L}_2^{(8)}$ yields:
    \begin{equation}\label{eq:l2_first_iteration}
        \begin{quantikz}[row sep={.5cm,between origins}, column sep=\the\circuitcolsep, align equals at=10.5]
         & \ctrl{1}\slice{} & & & \slice{} & & \\
         & \ctrl{1} & & & & & \\
         & \targ{} & \ctrl{1} & & & \ctrl{1} & \\
         & & \ctrl{3} & & & \ctrl{1} & \\
         & \ctrl{2} & & & & \targ{} & \\
         & \ctrl{2} & \ctrl{2} & & & & \\
        \lstick{\ket{0}} & & & & & & \rstick{\ket{0}} \\
         & \targ{} & \targ{} & \ctrl{1} & & \ctrl{1} & \\
         & & & \ctrl{3} & & \ctrl{1} & \\
         & \ctrl{2} & & & & \targ{} & \\
         & \ctrl{2} & & \ctrl{2} & & & \\
        \lstick{\ket{0}} & & & & & & \rstick{\ket{0}} \\
         & \targ{} & & \targ{} & \ctrl{1} & \ctrl{1} & \\
         & & & & \ctrl{3} & \ctrl{1} & \\
         & \ctrl{2} & & & & \targ{} & \\
         & \ctrl{2} & & & \ctrl{2} & & \\
        \lstick{\ket{0}} & & & & & & \rstick{\ket{0}} \\
         & \targ{} & & & \targ{} & \ctrl{1} & \\
         & & & & & \ctrl{1} & \\
         & & & & & \targ{} & 
    \end{quantikz}
    \;=\;
    \begin{quantikz}[row sep={.5cm,between origins}, column sep=\the\circuitcolsep, align equals at=10.5]
         & \ctrl{1} & \slice{} & & & \slice{} & & & \\
         & \ctrl{1} & & & & & & & \\
         & \targ{} & & \ctrl{4} & & & & \ctrl{1} & \\
         & & \ctrl{3} & & & & \ctrl{2} & \ctrl{1} & \\
         & \ctrl{1} & & & & & & \targ{} & \\
         & \ctrl{2} & \ctrl{1} & & & & \ctrl{1} & & \\
        \lstick{\ket{0}} & & \targ{} & \ctrl{1} & & & \targ{} & & \rstick{\ket{0}} \\
         & \targ{} & & \targ{} & \ctrl{4} & & & \ctrl{1} & \\
         & & \ctrl{3} & & & & \ctrl{2} & \ctrl{1} & \\
         & \ctrl{1} & & & & & & \targ{} & \\
         & \ctrl{2} & \ctrl{1} & & & & \ctrl{1} & & \\
        \lstick{\ket{0}} & & \targ{} & & \ctrl{1} & & \targ{} & & \rstick{\ket{0}} \\
         & \targ{} & & & \targ{} & \ctrl{4} & & \ctrl{1} & \\
         & & \ctrl{3} & & & & \ctrl{2} & \ctrl{1} & \\
         & \ctrl{1} & & & & & & \targ{} & \\
         & \ctrl{2} & \ctrl{1} & & & & \ctrl{1} & & \\
        \lstick{\ket{0}} & & \targ{} & & & \ctrl{1} & \targ{} & & \rstick{\ket{0}} \\
         & \targ{} & & & & \targ{} & & \ctrl{1} & \\
         & & & & & & & \ctrl{1} & \\
         & & & & & & & \targ{} & 
    \end{quantikz}
    \end{equation}
    The identity in Equation~\eqref{eq:ckx_to_toffoli_ancilla} introduces two additional gates for each gate in the middle $C^kX$ ladder.
    As a result, the total gate count is at most triple that of the circuit constructed by the algorithm of~\cite{Remaud_2025}, minus the $n-1$ gates in the first and last layers.
    This yields a $CCX$ count of at most
    \begin{equation}
        3(C(n) - n + 1) = 3n - 3 - 3\left\lfloor \log_2 n \right\rfloor - 3\left\lfloor \log_2 \frac{2n}{3} \right\rfloor.
    \end{equation}
    At each recursive call, the two slices of gates inserted around the middle ladder have a depth of~$2$ instead of~$1$, which at most doubles the circuit depth:
    \begin{equation}
        2D(n) = 2\left\lfloor \log_2 n \right\rfloor + 2\left\lfloor \log_2 \frac{2n}{3} \right\rfloor.
    \end{equation}
    Finally, at each recursive call, the number of ancilla qubits introduced equals the number of gates in the middle $C^kX$ ladder. Therefore, the total number of ancilla qubits in the final circuit is at most $C(n)$ minus the $n - 1$ gates in the first and last layers:
    \begin{equation}
        C(n) - n + 1 = n - 1 - \left\lfloor \log_2 n \right\rfloor - \left\lfloor \log_2 \frac{2n}{3} \right\rfloor.
    \end{equation}
\end{proof}

\subsection{Proof of Corollary~\ref{cor:ckx_ladder}}\label{app:proof_ckx_ladder}
\begin{proof}
    We apply the same algorithm as the one described in Appendix~\ref{app:proof_ckx_ladder} for the proof of Lemma~\ref{lem:ladder_2_n_ancilla}.
    For example, one iteration of the algorithm for $\mathcal{L}_k^{(8)}$ yields the following circuit:
    \begin{equation}
        \begin{quantikz}[row sep={.5cm,between origins}, column sep=\the\circuitcolsep, align equals at=10.5]
            & & \ctrl{1} & \slice{} & & & \slice{} & & & \\
            & \qwbundle{\!\!\!\!k-1} & \ctrl{1} & & & & & & & \\
            & & \targ{} & & \ctrl{4} & & & & \ctrl{1} & \\
            & \qwbundle{\!\!\!\!k-1} & & \ctrl{3} & & & & \ctrl{2} & \ctrl{1} & \\
            & & \ctrl{1} & & & & & & \targ{} & \\
            & \qwbundle{\!\!\!\!k-1} & \ctrl{2} & \ctrl{1} & & & & \ctrl{1} & & \\
            \lstick{\ket{0}} & & & \targ{} & \ctrl{1} & & & \targ{} & & \rstick{\ket{0}} \\
            & & \targ{} & & \targ{} & \ctrl{4} & & & \ctrl{1} & \\
            & \qwbundle{\!\!\!\!k-1} & & \ctrl{3} & & & & \ctrl{2} & \ctrl{1} & \\
            & & \ctrl{1} & & & & & & \targ{} & \\
            & \qwbundle{\!\!\!\!k-1} & \ctrl{2} & \ctrl{1} & & & & \ctrl{1} & & \\
            \lstick{\ket{0}} & & & \targ{} & & \ctrl{1} & & \targ{} & & \rstick{\ket{0}} \\
            & & \targ{} & & & \targ{} & \ctrl{4} & & \ctrl{1} & \\
            & \qwbundle{\!\!\!\!k-1} & & \ctrl{3} & & & & \ctrl{2} & \ctrl{1} & \\
            & & \ctrl{1} & & & & & & \targ{} & \\
            & \qwbundle{\!\!\!\!k-1} & \ctrl{2} & \ctrl{1} & & & & \ctrl{1} & & \\
            \lstick{\ket{0}} & & & \targ{} & & & \ctrl{1} & \targ{} & & \rstick{\ket{0}} \\
            & & \targ{} & & & & \targ{} & & \ctrl{1} & \\
            & \qwbundle{\!\!\!\!k-1} & & & & & & & \ctrl{1} & \\
            & & & & & & & & \targ{} & 
    \end{quantikz}
    \end{equation}
    The recursive call on the middle ladder of $CCX$ gates yields a total of $\mathcal{O}(n)$ $CCX$ gates and a circuit depth of $\mathcal{O}(\log n)$, by Lemma~\ref{lem:ladder_2_n_ancilla}.
    The resulting circuit uses the same number of ancilla qubits as in Lemma~\ref{lem:ladder_2_n_ancilla}.
    Finally, all the $C^kX$ gates in the first two and last two layers can be implemented with a total gate count of $\mathcal{O}(kn)$ and a circuit depth of $\mathcal{O}(\log(kn))$, by Lemma~\ref{lem:optimal_ckx}.
\end{proof}

\subsection{Proof of Corollary~\ref{cor:promise_ladder}}\label{app:proof_ladder_promise}
\begin{proof}
    We use the same construction as in the proof of Lemma~\ref{lem:ladder_2_n_ancilla}.
    As shown in that proof, the $\mathcal{L}_2^{(n)}$ operator can be implemented with $\mathcal{O}(n)$ $CCX$ gates and $\mathcal{O}(\log n)$ circuit depth, using $n$ clean ancilla qubits.
    Each ancilla is introduced via the identity in Equation~\eqref{eq:ckx_to_toffoli_ancilla}, in which the ancilla participates in a compute-uncompute pattern: it is computed by a $CCX$ gate, used as a control for another $CCX$ gate, and then uncomputed by a second $CCX$ gate with the same controls.
    This uncomputation does not depend on the ancilla being initialized to $\ket{0}$, so the ancilla register is preserved regardless of its initial state.
    Replacing the clean ancilla qubits with a promise register of size $n$ therefore yields a strong promise gate whose target unitary is the $\mathcal{L}_2^{(n)}$ operator, implemented with $\mathcal{O}(n)$ $\{CCX, CX, X\}$ gates and $\mathcal{O}(\log n)$ circuit depth.

    The generalization to $\mathcal{L}_k^{(n)}$ follows by applying the same argument to the construction in the proof of Corollary~\ref{cor:ckx_ladder}, where the $C^kX$ gates in the outer layers are likewise used in compute-uncompute pairs.
\end{proof}

\end{document}